\documentclass[10pt,conference,letterpaper]{IEEEtran}
\IEEEoverridecommandlockouts

\usepackage{times}
\usepackage{latexsym}

\usepackage[T1]{fontenc}
\usepackage[utf8]{inputenc}

\usepackage{microtype}
\usepackage{inconsolata}

\usepackage{graphicx}

\usepackage{mathtools}
\usepackage{amssymb}
\usepackage{amsthm}

\usepackage{siunitx}
\usepackage{threeparttable}
\usepackage{booktabs}
\usepackage{multirow}
\usepackage{tabularx}
\usepackage{array}
\usepackage{xcolor}
\usepackage{colortbl}
\usepackage{placeins}

\usepackage{float}
\usepackage{afterpage}
\usepackage{textcomp}
\usepackage{gensymb}
\usepackage{enumitem}
\usepackage{subcaption}

\usepackage{algorithm}
\usepackage[noend]{algpseudocode}

\newtheoremstyle{boldstyle}
  {\topsep}
  {\topsep}
  {\itshape}
  {}
  {\bfseries}
  {.}
  {.5em}
  {}

\theoremstyle{boldstyle}
\newtheorem{theorem}{Theorem}
\newtheorem{lemma}[theorem]{Lemma}
\newtheorem{proposition}[theorem]{Proposition}

\newtheorem{assumption}{Assumption}
\newtheorem{corollary}{Corollary}

\newcommand{\Coll}{\mathrm{Coll}}
\newcommand{\Bal}{\mathrm{Bal}}

\usepackage{url}    
\newcolumntype{L}[1]{>{\raggedright\arraybackslash}m{#1}}
\newcolumntype{C}[1]{>{\centering\arraybackslash}m{#1}}
\newcolumntype{R}[1]{>{\raggedleft\arraybackslash}m{#1}}

\algrenewcommand\algorithmicrequire{\textbf{Require:}}
\algrenewcommand\algorithmicensure{\textbf{Ensure:}}
\usepackage{hyperref}
\hypersetup{
  colorlinks=true,
  linkcolor=black,
  citecolor=blue,
  urlcolor=blue,
  pdfborder={0 0 0}
}

\def\BibTeX{{\rm B\kern-.05em{\sc i\kern-.025em b}\kern-.08em
  T\kern-.1667em\lower.7ex\hbox{E}\kern-.125emX}}

\title{
Information-Theoretic Limits of Node Localization under Hybrid Graph Positional Encodings
\thanks{
This work was supported by the Major Scientific and Technological Innovation Platform Project of Hunan Province [2024JC1003] and the Graduate Innovation Project of National University of Defense Technology [XJQY2024065]. The authors would like to express their sincere gratitude to all the referees for their careful reading and insightful suggestions.
}
}

\author{%
{ Zimo Yan\textsuperscript{\rm 1}, Zheng Xie\textsuperscript{\rm 1}\thanks{*Corresponding author: Zheng Xie (xiezheng81@nudt.edu.cn)}, Chang Liu\textsuperscript{\rm 1}, Yiqin Lv\textsuperscript{\rm 1}, Runfan Duan\textsuperscript{\rm 1}. }%
\vspace{1.6mm}\\
\fontsize{10}{10}\selectfont\itshape
\textsuperscript{\rm 1}National University of Defense Technology, Changsha, China.
\\\{yanzimo20, xiezheng81, liuchang\_, lvyiqin98, duanrunfan24 \}@nudt.edu.cn}
\fontsize{9}{9}\selectfont\ttfamily\upshape
\fontsize{10}{10}\selectfont\rmfamily\itshape

\fontsize{9}{9}\selectfont\ttfamily\upshape

\begin{document}
\maketitle

\begin{abstract}
Positional encoding has become a standard component in graph learning, especially for graph Transformers and other models that must distinguish structurally similar nodes, yet its fundamental identifiability remains poorly understood. In this work, we study node localization under a hybrid positional encoding that combines anchor-distance profiles with quantized low-frequency spectral features. We cast localization as an observation-map problem whose difficulty is controlled by the number of distinct codes induced by the encoding and establish an information-theoretic converse identifying an impossibility regime jointly governed by the anchor number, spectral dimension, and quantization level. Experiments further support this picture: on random $3$-regular graphs, the empirical crossover is well organized by the predicted scaling, while on two real-world DDI graphs identifiability is strongly graph-dependent, with DrugBank remaining highly redundant under the tested encodings and the Decagon-derived graph becoming nearly injective under sufficiently rich spectral information. Overall, these results suggest that positional encoding should be understood not merely as a heuristic architectural component, but as a graph-dependent structural resolution mechanism.

\vspace{1em}
\noindent\textbf{Keywords:} graph positional encoding; node identifiability; spectral embedding; distance encoding; graph neural networks; drug-drug interaction graphs
\end{abstract}

\section{Introduction}

Graph neural networks and graph Transformers have achieved strong performance on relational data, but their effectiveness often depends on whether node positions can be meaningfully distinguished in the underlying graph. 
This has made positional and structural encodings an important topic in modern graph learning \cite{dwivedi2023benchmark,ying2021graphormer,rampasek2022gps,canturk2024gpse}.

Existing approaches to graph positional encoding broadly fall into three families. 
One family uses \emph{spectral encodings}, typically derived from Laplacian eigenvectors or their invariant variants, to capture global graph geometry \cite{lim2023signnet,huang2024stability}. 
A complementary family uses \emph{distance- or structure-based encodings}, such as shortest-path distances to anchors, random-walk information, or related structural descriptors \cite{li2020distance,gabrielsson2023rewiring}. 
A further line of work focuses on \emph{learned or hybrid encoders}, which combine multiple positional signals within graph Transformers or other graph models \cite{ying2021graphormer,dwivedi2022learnable,canturk2024gpse}.

Despite these advances, three limitations remain. 
Much of the existing literature evaluates positional encoding indirectly through downstream accuracy, leaving open whether the encoding itself can uniquely distinguish nodes. 
Spectral methods also bring sign, basis, and stability issues that require careful treatment \cite{lim2023signnet,huang2024stability}. 
More broadly, the effect of positional encoding is clearly graph-dependent, yet this dependence remains only partially understood from a theoretical perspective \cite{keriven2023role,li2024graphtransformers}.

\textbf{Motivation}:  
This leads to the following question: \textit{Can we characterize when a graph positional encoding contains enough information to localize nodes, and how such identifiability depends jointly on distance features, spectral features, and quantization?}

To address this question, we study node localization through a hybrid observation map that combines anchor-distance profiles with quantized low-frequency spectral features. We then develop an information-theoretic converse analysis for this map and identify an impossibility regime jointly governed by the anchor number, spectral dimension, and quantization level. Experiments on random regular graphs and two real-world DDI graphs further show how the same encoding can behave very differently across graph structures.

\textbf{Outline \& Primary Contributions}:  
The primary contributions of this work are as follows:
\begin{enumerate}
    \item We study node localization through a hybrid observation map that combines anchor-distance features with quantized low-frequency spectral features, and derive a generic information-theoretic converse identifying an impossibility regime governed by the anchor number, spectral dimension, and quantization level.
    \item For the canonical energy embedding, we further sharpen the generic image-size bound through a bucketwise collision-and-balance refinement, yielding a more structured impossibility criterion.
    \item We support the theory with experiments on random $3$-regular graphs and two real-world DDI graph constructions, showing a phase-transition-like crossover in the synthetic setting and a strong graph-dependent contrast between DrugBank and a Decagon-derived graph.
\end{enumerate}
The remainder of the paper is organized as follows. Section~\ref{SE2} reviews related work, Section~\ref{SE3} introduces the problem formulation, Section~\ref{SE5} develops the theory, Sections~\ref{SE7} present the experiments and empirical analysis, and Section~\ref{SE9} concludes the paper.

\section{Literature Review}
\label{SE2}

Graph positional encoding has become a central topic in modern graph learning, especially in graph Transformers and related architectures that require explicit positional or structural information to distinguish structurally similar nodes \cite{zhang2020graphbert,mialon2021graphit,yuan2025gtsurvey}. 
Existing methods can be broadly grouped into three categories: spectral positional encodings, distance- and structure-aware encodings, and learned or hybrid positional/structural encoders.

\subsection{Spectral Positional Encodings}

Spectral positional encodings describe node position through Laplacian eigenvectors, eigenvalues, or related spectral coordinates, and therefore inject global geometric information that is inaccessible to purely local message passing \cite{dwivedi2022learnable,wang2022equivariant,kreuzer2021rethinking}. 
Representative examples include learnable structural and positional representations \cite{dwivedi2022learnable}, equivariant and stable spectral encodings \cite{wang2022equivariant}, and sign/basis-invariant models such as SignNet/BasisNet \cite{lim2023signnet}. 
Recent studies further examine the stability and expressive power of spectral encodings \cite{huang2024stability,zhang2024spectral}, develop alternative spectral or spectral-inspired constructions such as random feature propagation and sheaf-based positional encodings \cite{eliasof2023rfp,he2024sheafpe}, and extend the discussion to directed graphs, where positional information is even harder to define consistently \cite{huang2025directed}. 
At the same time, these methods inherit the ambiguity of eigenvector signs and bases, may react sensitively to graph perturbations, and often rely on eigendecomposition as a nontrivial preprocessing step on large graphs.

\subsection{Distance- and Structure-aware Encodings}

A second line characterizes node position through relative structural information, such as anchor-based distances, shortest-path distances, resistance distances, random-walk statistics, or graph rewiring patterns \cite{you2019pgnn,park2022grpe,black2024comparing}. 
Distance Encoding provides a particularly clear expressivity-oriented formulation of this idea \cite{li2020distance}, while P-GNN and related anchor-based approaches show how positional information can be built directly from distances to sampled reference nodes \cite{you2019pgnn}. 
Graphormer and GRPE incorporate relative structural biases directly into Transformer attention \cite{ying2021graphormer,park2022grpe}, and related work shows that positional information can also be injected through rewiring strategies or analyzed through explicit comparisons between absolute and relative positional schemes in graph Transformers \cite{gabrielsson2023rewiring,black2024comparing}. 
More broadly, structure-enhanced graph encoders also exploit topology-derived similarity information even when the mechanism is not phrased explicitly as positional encoding \cite{zhang2025strucgcn}. 
Recent work has also begun to make the connection between anchor-distance and spectral viewpoints explicit \cite{yan2026bridging}. 
These approaches are often more interpretable than spectral coordinates, but their effectiveness can depend strongly on design choices such as anchor selection, distance definition, and preprocessing, and highly symmetric nodes may still share identical structural signatures.

\subsection{Learned and Hybrid Positional/Structural Encoders}

A third line learns or combines multiple positional signals instead of relying on a single hand-crafted encoding \cite{zhang2020graphbert,yeom2024spegt}. 
Graph-Bert and GraphiT are early examples of Transformer-style graph models that explicitly encode structural and positional information in attention-based architectures \cite{zhang2020graphbert,mialon2021graphit}. 
GPS integrates message passing, Transformer layers, and positional/structural inputs within a unified architecture \cite{rampasek2022gps}, while GRIT emphasizes graph inductive biases without relying on explicit message passing \cite{ma2023grit}. 
GPSE learns transferable positional and structural representations directly from graph structure \cite{canturk2024gpse}, and recent work has further studied foundation-style generalization of positional/structural encodings \cite{franks2025foundation}, efficient positional encoding learning \cite{kanatsoulis2025efficientpe}, large-scale benchmarking across GNNs and graph Transformers \cite{groetschla2024benchmarkpe}, and alternative encoding families such as quantum positional encodings \cite{thabet2024quantum}. 
This line is attractive because of its flexibility, but the added model and training complexity also makes it harder to separate the contribution of the encoding itself from that of the backbone architecture and optimization procedure.

\subsection{Relation to Our Work}

Our work is closest to hybrid approaches that combine global and relative structural information, but differs in objective. 
Most existing studies evaluate positional encoding indirectly through downstream accuracy, transferability, or generalization \cite{franks2025foundation,kanatsoulis2025efficientpe,groetschla2024benchmarkpe}. 
By contrast, we ask whether a fixed encoding can localize nodes at all. 
In this sense, our perspective is more closely related to recent theoretical work on the role of positional encoding in random graphs \cite{keriven2023role}, and to recent analyses that connect anchor-distance encodings with diffusion geometry or study node identifiability under sign- and basis-invariant Laplacian encodings \cite{yan2026bridging,yan2025resolving}. 
However, we focus specifically on node identifiability under a deterministic hybrid observation map that combines anchor distances with quantized low-frequency spectral features. 
This viewpoint is also relevant to graph learning applications such as DDI and polypharmacy modeling \cite{zitnik2018decagon,liu2023m2gcn}, where strong predictive performance does not necessarily imply that the underlying positional encoding provides sufficient structural resolution at the node level.

\section{Problem Formulation}
\label{SE3}

Let $G=(V,E)$ be a simple undirected graph with $|V|=n$. For $u,v\in V$, let
\[
\mathrm{SPD}(u,v)\coloneqq \mathrm{dist}_G(u,v)
\]
denote the shortest-path distance. Given an ordered anchor set
\[
\mathcal A=\{a_1,\dots,a_k\}\subset V,
\]
we define the anchor-distance profile of a vertex $v\in V$ by
\[
\mathbf d_{\mathcal A}(v)
\coloneqq
\bigl(\mathrm{SPD}(v,a_1),\dots,\mathrm{SPD}(v,a_k)\bigr)
\in \mathbb Z_{\ge 0}^k.
\]

Throughout the theory, we consider random $r$-regular graphs
\[
G\sim\mathcal G_{n,r},
\]
where $r\ge 3$ is fixed and $nr$ is even. Let $\mathbf L$ denote a Laplacian-type operator on $G$, such as the combinatorial Laplacian $L=D-A$ or the normalized Laplacian $\mathcal L=I-D^{-1/2}AD^{-1/2}$. Since $G$ is $r$-regular, these differ only by an affine rescaling, so the distinction is immaterial for the formalism below.

Let
\[
0=\lambda_1\le \lambda_2\le \cdots \le \lambda_n
\]
be the eigenvalues of $\mathbf L$, and let $\Phi_m\in\mathbb R^{n\times m}$ be an orthonormal basis for the low-frequency $m$-dimensional spectral subspace under consideration. Because this basis is not unique, we work with a prescribed basis-invariant truncated spectral signature
\[
\mathcal S_m:V\to\mathbb R^{p_m},
\]
which depends only on the underlying spectral subspace and not on the particular choice of eigenbasis. Typical examples include coordinatewise energy signatures or other orthogonally invariant descriptors derived from the row $\Phi_m(v,\cdot)$.

For a quantization resolution $\eta>0$, let
\[
Q_\eta:\mathbb R^{p_m}\to (\eta\mathbb Z)^{p_m}
\]
be a deterministic coordinatewise quantizer. For instance, one may take
\[
(Q_\eta(x))_j
=
\eta\left\lfloor \frac{x_j}{\eta}\right\rfloor,
\quad j=1,\dots,p_m.
\]
The exact form of $Q_\eta$ is not essential here; its role is to model finite resolution and to induce a robust notion of observational indistinguishability.

We combine the distance and spectral components into the observation map
\[
F_{G,\mathcal A}^{(\eta)}:V\to \mathcal Y_{k,m}^{(\eta)}(G,\mathcal A),
\quad
F_{G,\mathcal A}^{(\eta)}(v)
=
\bigl(\mathbf d_{\mathcal A}(v),\,Q_\eta(\mathcal S_m(v))\bigr),
\]
where
\[
\mathcal Y_{k,m}^{(\eta)}(G,\mathcal A)
\coloneqq
\mathrm{Im}\!\bigl(F_{G,\mathcal A}^{(\eta)}\bigr)
\]
is the set of observable codes induced by the encoding.

Now let $v_\star$ be a uniformly random source vertex in $V$, independent of $G$. Conditional on $G$, the anchor set $\mathcal A$ is sampled uniformly without replacement from $V$, independently of $v_\star$. The available data are therefore
\[
F_{G,\mathcal A}^{(\eta)}(v_\star)
=
\bigl(\mathbf d_{\mathcal A}(v_\star),\,Q_\eta(\mathcal S_m(v_\star))\bigr).
\]
The localization problem is to recover $v_\star$ from this observation, given $(G,\mathcal A)$.

For fixed $(G,\mathcal A)$, a reconstruction procedure is any map
\[
s_{G,\mathcal A}:\mathcal Y_{k,m}^{(\eta)}(G,\mathcal A)\to V.
\]
Its conditional error probability is
\[
\mathrm{Err}(s_{G,\mathcal A})
\coloneqq
\mathbb P\!\left(
s_{G,\mathcal A}\bigl(F_{G,\mathcal A}^{(\eta)}(v_\star)\bigr)\neq v_\star
\,\middle|\, G,\mathcal A
\right).
\]
Our goal is to characterize regimes of $(k,m,\eta)$, as functions of $n$, in which localization is information-theoretically impossible, in the sense that
\[
\inf_{s_{G,\mathcal A}}\mathrm{Err}(s_{G,\mathcal A})
\]
remains bounded away from zero with high probability over $(G,\mathcal A)$.

The key combinatorial objects are the preimages of the observation map. For any
\[
y\in \mathcal Y_{k,m}^{(\eta)}(G,\mathcal A),
\]
define
\[
\mathcal P_{G,\mathcal A}^{(\eta)}(y)
\coloneqq
\{v\in V:\,F_{G,\mathcal A}^{(\eta)}(v)=y\}.
\]
When $y=F_{G,\mathcal A}^{(\eta)}(v)$ for some $v\in V$, we also write $\mathcal P_{G,\mathcal A}^{(\eta)}(v)$ for brevity. Large preimages correspond to observational ambiguity: if many vertices induce the same code, then any reconstruction map must assign that code to a single output and hence cannot succeed with probability exceeding the reciprocal of the preimage size to the combinatorial structure of the observation map, especially the size of its image and the geometry of its preimages. This viewpoint underlies the converse analysis in the next section.

\section{Theoretical Analysis}
\label{SE5}

This section develops the theory in two layers. We first establish a generic converse by reducing node localization to an image-size counting problem for the observation map. We then give a model-specific refinement for the canonical energy embedding, where the generic quantized spectral-code complexity bound is sharpened through a bucketwise collision-and-balance principle. All proofs are collected in Appendix~\ref{app:theory-proofs}.

\subsection{Observation-map identity and generic converse}

We begin from a general counting reduction. The key idea is that the optimal localization error is determined exactly by the image size of the observation map, so the problem reduces to bounding how many distinct codes can be induced by the hybrid encoding. The anchor-distance component controls the number of distance buckets, while the spectral component controls how many distinct quantized signatures may appear inside each bucket.

Fix a graph $G=(V,E)$, an anchor set $\mathcal A=\{a_1,\dots,a_k\}\subset V$, a spectral truncation level $m\ge 1$, and a quantization resolution $\eta>0$. Recall the quantized map
\begin{equation}
F_{G,\mathcal A}^{(\eta)}(v)\coloneqq \Big(\mathbf d_{\mathcal A}(v),\ Q_\eta(\mathcal S_m(v))\Big),
\quad v\in V,
\end{equation}
with image
\begin{equation}
\mathcal Y_{k,m}^{(\eta)}(G,\mathcal A)\coloneqq \mathrm{Im}(F_{G,\mathcal A}^{(\eta)}).
\end{equation}
Here $\mathbf d_{\mathcal A}(v)\in\mathbb Z_{\ge 0}^k$ is the anchor distance vector
\begin{equation}
\mathbf d_{\mathcal A}(v)\coloneqq \big(\mathrm{SPD}(v,a_1),\dots,\mathrm{SPD}(v,a_k)\big),
\end{equation}
and $\mathcal S_m:V\to\mathbb R^{p_m}$ is an $O(m)$-invariant spectral embedding, with $Q_\eta$ denoting coordinatewise quantization with bin width $\eta$.

For $y\in\mathcal Y_{k,m}^{(\eta)}(G,\mathcal A)$, define the preimage of $y$ under $F_{G,\mathcal A}^{(\eta)}$ by
\begin{equation}
\mathcal P_{G,\mathcal A}^{(\eta)}(y)\coloneqq \big(F_{G,\mathcal A}^{(\eta)}\big)^{-1}(\{y\})
=
\{v\in V:\ F_{G,\mathcal A}^{(\eta)}(v)=y\}.
\end{equation}
When $y=F_{G,\mathcal A}^{(\eta)}(v)$ we write $\mathcal P_{G,\mathcal A}^{(\eta)}(v)$ for $\mathcal P_{G,\mathcal A}^{(\eta)}(y)$.

In the sequel, the source vertex $v_\star$ is uniformly distributed on $V$ and independent of $(G,\mathcal A)$. A reconstruction map is any measurable function. A measurable section is a special reconstruction map satisfying
\begin{equation}
F_{G,\mathcal A}^{(\eta)}(s_{G,\mathcal A}(y))=y
\end{equation}
for every $y\in \mathcal Y_{k,m}^{(\eta)}(G,\mathcal A)$,
\begin{equation}
s_{G,\mathcal A}:\ \mathcal Y_{k,m}^{(\eta)}(G,\mathcal A)\to V.
\end{equation}
Given $s_{G,\mathcal A}$, its conditional error probability is
\begin{equation}
\mathrm{Err}(s_{G,\mathcal A})\coloneqq \mathbb P\Big(s_{G,\mathcal A}(F_{G,\mathcal A}^{(\eta)}(v_\star))\neq v_\star\ \Big|\ G,\mathcal A\Big).
\end{equation}

\begin{lemma}
\label{lem:preimage-identity}
For any fixed $(G,\mathcal A)$, let
\begin{equation}
Y \coloneqq F_{G,\mathcal A}^{(\eta)}(v_\star).
\end{equation}
Then for any measurable map
\begin{equation}
s_{G,\mathcal A}:\mathcal Y_{k,m}^{(\eta)}(G,\mathcal A)\to V,
\end{equation}
one has
\begin{equation}
\resizebox{\columnwidth}{!}{$
\begin{aligned}
\mathbb P\Big(s_{G,\mathcal A}(Y)=v_\star\ \Big|\ G,\mathcal A\Big)
&=
\mathbb E\Bigg[
\frac{
\mathbf 1\!\left\{s_{G,\mathcal A}(Y)\in \mathcal P_{G,\mathcal A}^{(\eta)}(Y)\right\}
}{
\big|\mathcal P_{G,\mathcal A}^{(\eta)}(Y)\big|
}
\ \Bigg|\ G,\mathcal A
\Bigg]\\
&\le
\mathbb E\Bigg[
\frac{1}{\big|\mathcal P_{G,\mathcal A}^{(\eta)}(v_\star)\big|}
\ \Bigg|\ G,\mathcal A
\Bigg].
\end{aligned}
$}
\end{equation}
Moreover,
\begin{equation}
\mathbb E\Bigg[
\frac{1}{\big|\mathcal P_{G,\mathcal A}^{(\eta)}(v_\star)\big|}
\ \Bigg|\ G,\mathcal A
\Bigg]
=
\frac{\big|\mathrm{Im}(F_{G,\mathcal A}^{(\eta)})\big|}{n}.
\end{equation}
The upper bound is attained by any measurable section
\begin{equation}
\tilde s_{G,\mathcal A}:\mathcal Y_{k,m}^{(\eta)}(G,\mathcal A)\to V
\end{equation}
satisfying
\begin{equation}
F_{G,\mathcal A}^{(\eta)}\big(\tilde s_{G,\mathcal A}(y)\big)=y,
\quad
\forall\, y\in \mathcal Y_{k,m}^{(\eta)}(G,\mathcal A).
\end{equation}
Consequently,
\begin{equation}
\inf_{s_{G,\mathcal A}}\ \mathrm{Err}(s_{G,\mathcal A})
=
1-\frac{\big|\mathrm{Im}(F_{G,\mathcal A}^{(\eta)})\big|}{n}.
\end{equation}
\end{lemma}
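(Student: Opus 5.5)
Since $(G,\mathcal A)$ is fixed throughout, every conditional probability below is just an average over $v_\star$, uniform on the finite set $V$. The plan is to partition $V$ into the fibres $\mathcal P_{G,\mathcal A}^{(\eta)}(y)$, $y\in\mathcal Y_{k,m}^{(\eta)}(G,\mathcal A)$, and to compute everything fibre by fibre. These fibres are nonempty and pairwise disjoint, they cover $V$, and there are exactly $|\mathrm{Im}(F_{G,\mathcal A}^{(\eta)})|$ of them.

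\emph{Step 1 (the identity).} Write
\[
\mathbb P(s(Y)=v_\star\mid G,\mathcal A)=\tfrac1n\sum_{v\in V}\mathbf 1\{s(F(v))=v\}.
\]
Group the terms by the value $y=F(v)$. Inside the fibre $\mathcal P(y)$ the guess $s(y)$ is a single fixed vertex, so the inner sum equals $\mathbf 1\{s(y)\in\mathcal P(y)\}$. This gives
\[
\mathbb P(s(Y)=v_\star\mid G,\mathcal A)=\tfrac1n\sum_{y}\mathbf 1\{s(y)\in\mathcal P(y)\}.
\]
Next, rewrite the right-hand side as an average over $v_\star$. Each fibre $\mathcal P(y)$ receives mass $|\mathcal P(y)|/n$, so I weight each of its vertices by $1/|\mathcal P(y)|$:
\[
\tfrac1n\sum_y \mathbf 1\{s(y)\in\mathcal P(y)\}
=\tfrac1n\sum_{v\in V}\frac{\mathbf 1\{s(F(v))\in\mathcal P(F(v))\}}{|\mathcal P(F(v))|}.
\]
This last expression is exactly the conditional expectation stated in the lemma, which proves the equality.

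\emph{Step 2 (the bound and the counting identity).} The upper bound follows by dropping the indicator, since it is at most $1$. For the second identity, apply the same fibre decomposition to the function $1/|\mathcal P(v)|$:
\[
\mathbb E\big[1/|\mathcal P(v_\star)|\mid G,\mathcal A\big]=\tfrac1n\sum_y\sum_{v\in\mathcal P(y)}\frac{1}{|\mathcal P(y)|}=\tfrac1n\sum_y 1=\frac{|\mathrm{Im}(F)|}{n}.
\]

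\emph{Step 3 (attainment and the consequence).} A section exists because every $y$ in the image has a nonempty fibre, so one can choose $\tilde s(y)\in\mathcal P(y)$. Measurability is automatic, since all the sets involved are finite and discrete. For such a section every indicator equals $1$, so the bound is attained. Taking complements gives the formula for $\inf_s \mathrm{Err}(s)$, and the infimum is in fact a minimum.

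I do not expect any step to be a real obstacle. The only point needing care is the reweighting in Step 1, namely checking that the expectation is taken over $v_\star$ with $Y=F(v_\star)$, so that summing over $v$ and grouping by fibre correctly accounts for the weights $1/|\mathcal P(y)|$. I would make this fully explicit by writing out both double sums.
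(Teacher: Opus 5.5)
Your proof is correct and takes essentially the same approach as the paper. Both decompose $V$ into the fibres of $F_{G,\mathcal A}^{(\eta)}$ and use the uniformity of $v_\star$. The paper phrases this as the posterior of $v_\star$ given $Y=y$ being uniform on $\mathcal P_{G,\mathcal A}^{(\eta)}(y)$, while you write it as an explicit double sum. Both then drop the indicator for the bound, evaluate $\sum_y (N_y/n)(1/N_y)=|\mathrm{Im}(F_{G,\mathcal A}^{(\eta)})|/n$, and use a section to show the bound is attained.
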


For $\mathbf t\in\mathbb Z_{\ge 0}^k$, define the distance bucket
\begin{equation}
B_{\mathbf t}(G,\mathcal A)\coloneqq \{v\in V:\ \mathbf d_{\mathcal A}(v)=\mathbf t\}.
\end{equation}

\begin{lemma}
\label{lem:profile-count}
Assume $\mathrm{diam}(G)\le C_{\mathrm{diam}}\log n$. Then
\begin{equation}
\big|\{\mathbf d_{\mathcal A}(v):v\in V\}\big|\le (C\log n)^k,
\end{equation}
and in particular, the number of nonempty buckets $B_{\mathbf t}(G,\mathcal A)$ is at most $(C\log n)^k$.
\end{lemma}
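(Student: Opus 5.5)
The plan is to bound the range of each coordinate of the anchor-distance profile and then take a product over the $k$ anchors. Fix $(G,\mathcal A)$ with $\mathrm{diam}(G)\le C_{\mathrm{diam}}\log n$.

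First, I will handle the diameter being finite. Since the diameter is assumed finite and at most $C_{\mathrm{diam}}\log n$, every shortest-path distance satisfies
\[
\mathrm{SPD}(v,a_i)\in\{0,1,\dots,\lfloor C_{\mathrm{diam}}\log n\rfloor\}
\quad\text{for all } v\in V,\ i\in\{1,\dots,k\}.
\]
In particular $G$ is connected, so no coordinate equals $\infty$. Hence each coordinate takes at most $\lfloor C_{\mathrm{diam}}\log n\rfloor+1$ values.

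Next, I bound the number of profiles by a product. We have
\[
\{\mathbf d_{\mathcal A}(v):v\in V\}\subseteq \{0,\dots,\lfloor C_{\mathrm{diam}}\log n\rfloor\}^k,
\]
so the number of distinct profiles is at most $(C_{\mathrm{diam}}\log n+1)^k$.

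The only mildly delicate step is absorbing the $+1$ into the constant. For $n\ge 3$ we have $\log n\ge 1$, so
\[
C_{\mathrm{diam}}\log n+1\le (C_{\mathrm{diam}}+1)\log n .
\]
Taking $C\coloneqq C_{\mathrm{diam}}+1$ then gives the bound $(C\log n)^k$. Small $n$ can be absorbed by enlarging $C$, or the lemma can be read for $n\ge 3$, which suffices for the asymptotic statement.

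Finally, nonempty buckets $B_{\mathbf t}(G,\mathcal A)$ correspond bijectively to realized profiles $\mathbf t$. Their number is therefore also at most $(C\log n)^k$.

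No probabilistic input is needed for this lemma. The diameter hypothesis is exactly what holds with high probability for $G\sim\mathcal G_{n,r}$, and it would be invoked only in later results that apply this lemma.
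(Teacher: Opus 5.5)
Your proof is correct and is essentially the paper's argument: each coordinate of $\mathbf d_{\mathcal A}(v)$ lies in $\{0,1,\dots,\lfloor C_{\mathrm{diam}}\log n\rfloor\}$, so there are at most $(\lfloor C_{\mathrm{diam}}\log n\rfloor+1)^k$ profiles, and the $+1$ is absorbed into $C$. Your explicit choice $C=C_{\mathrm{diam}}+1$ for $n\ge 3$ makes that absorption precise; $n=1$ cannot be handled by enlarging $C$ since $\log 1=0$, but this is irrelevant because $r$-regular graphs with $r\ge 3$ have $n\ge 4$.
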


\begin{assumption}
\label{ass:entropy}
Let $\mathcal S_m(V)\coloneqq \{\mathcal S_m(v):v\in V\}\subset\mathbb R^{p_m}$. Fix $m\ge 1$ and $\eta\in(0,1)$. There exist constants $C_{\mathrm{ent}}\ge 1$, $c_{\mathrm{ent}}>0$ and an event $\mathcal E_{\mathrm{ent}}$ (depending on $G$) with $\mathbb P(\mathcal E_{\mathrm{ent}})=1-o(1)$ such that on $\mathcal E_{\mathrm{ent}}$,
\begin{equation}
\big|Q_\eta(\mathcal S_m(V))\big|
\le
\Big(\frac{C_{\mathrm{ent}}}{\eta}\Big)^{c_{\mathrm{ent}} m}.
\end{equation}
\end{assumption}

Assumption~\ref{ass:entropy} is not intended as the final sharp description of the spectral component. Rather, it serves as a generic complexity hypothesis under which the converse takes a clean information-budget form. The canonical energy embedding is then treated explicitly in Proposition~\ref{prop:entropy-energy} and Corollary~\ref{cor:threshold-energy}.

\begin{lemma}
\label{lem:image-control}
Assume $\mathrm{diam}(G)\le C_{\mathrm{diam}}\log n$ and that Assumption~\ref{ass:entropy} holds on $\mathcal E_{\mathrm{ent}}$. Then on the event $\{\mathrm{diam}(G)\le C_{\mathrm{diam}}\log n\}\cap \mathcal E_{\mathrm{ent}}$,
\begin{equation}
\begin{aligned}
\big|\mathrm{Im}(F_{G,\mathcal A}^{(\eta)})\big|
&\le
(C\log n)^k\cdot \big|Q_\eta(\mathcal S_m(V))\big| \\
&\le
(C\log n)^k\cdot \Big(\frac{C_{\mathrm{ent}}}{\eta}\Big)^{c_{\mathrm{ent}} m},
\end{aligned}
\end{equation}
where $C=C(r)>0$ depends only on $C_{\mathrm{diam}}$.
\end{lemma}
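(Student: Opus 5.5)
The plan is to bound the image of $F_{G,\mathcal A}^{(\eta)}$ by the cardinality of a product set. By definition, every code has the form
\[
F_{G,\mathcal A}^{(\eta)}(v)=\bigl(\mathbf d_{\mathcal A}(v),\,Q_\eta(\mathcal S_m(v))\bigr),
\]
so each code is an element of the Cartesian product $\{\mathbf d_{\mathcal A}(u):u\in V\}\times Q_\eta(\mathcal S_m(V))$. This gives the first inequality purely combinatorially:
\[
\bigl|\mathrm{Im}(F_{G,\mathcal A}^{(\eta)})\bigr|\le \bigl|\{\mathbf d_{\mathcal A}(v):v\in V\}\bigr|\cdot\bigl|Q_\eta(\mathcal S_m(V))\bigr|.
\]
Equivalently, I will organize the count bucketwise. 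The image is the disjoint union over nonempty buckets $B_{\mathbf t}(G,\mathcal A)$ of the sets $\{\mathbf t\}\times Q_\eta(\mathcal S_m(B_{\mathbf t}))$. Each of these has size at most $|Q_\eta(\mathcal S_m(V))|$, since $B_{\mathbf t}\subseteq V$. Writing it this way matches the bucketwise refinement used later.

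Next, on the event $\{\mathrm{diam}(G)\le C_{\mathrm{diam}}\log n\}$, I invoke Lemma~\ref{lem:profile-count} to bound the number of distinct distance profiles, equivalently the number of nonempty buckets, by $(C\log n)^k$. It is worth recalling why the constant depends only on $C_{\mathrm{diam}}$. Each coordinate $\mathrm{SPD}(v,a_i)$ lies in $\{0,1,\dots,\lfloor C_{\mathrm{diam}}\log n\rfloor\}$, a set of size at most $C_{\mathrm{diam}}\log n+1\le C\log n$ for $n\ge 3$ with, say, $C=C_{\mathrm{diam}}+1$. Here I use that $G$ is connected, which is implied by finite diameter. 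This yields the first displayed inequality.

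Finally, on $\mathcal E_{\mathrm{ent}}$, Assumption~\ref{ass:entropy} gives $|Q_\eta(\mathcal S_m(V))|\le (C_{\mathrm{ent}}/\eta)^{c_{\mathrm{ent}}m}$. Substituting this gives the second inequality on the intersection event.

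There is no real obstacle here. The only point needing care is bookkeeping: both bounds must hold simultaneously, which is why the statement is restricted to the intersection event. The constant $C$ should also be chosen uniformly in $n$, absorbing the $+1$ for small $n$. Note that the bound is deterministic given $G$, uniformly over all anchor sets $\mathcal A$, so the randomness of $\mathcal A$ plays no role.
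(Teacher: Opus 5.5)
Your proposal is correct and follows essentially the same route as the paper: count the image bucket by bucket (equivalently, embed it in the product of the attained distance profiles with $Q_\eta(\mathcal S_m(V))$), bound the number of buckets by $(C\log n)^k$ via Lemma~\ref{lem:profile-count}, and then apply Assumption~\ref{ass:entropy} on $\mathcal E_{\mathrm{ent}}$. Your explicit choice $C=C_{\mathrm{diam}}+1$ (valid for $n\ge 3$) makes concrete the absorption of the additive constant that the paper leaves implicit.
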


Let $\Phi_m$ be an orthonormal eigenbasis matrix for the first $m$ eigenvalues of $\mathbf L$, and denote by $\Phi_m(v,\cdot)\in\mathbb R^m$ the $v$-th row. Consider the $O(m)$-invariant energy-type spectral embedding
\begin{equation}
\mathcal S_m(v)\coloneqq \big(\Phi_m(v,1)^2,\dots,\Phi_m(v,m)^2\big)\in\mathbb R^m.
\end{equation}

\begin{proposition}
\label{prop:entropy-energy}
For the embedding $\mathcal S_m(v)=(\Phi_m(v,j)^2)_{j\in[m]}$ and $\eta\in(0,1)$, one has
\begin{equation}
\big|Q_\eta(\mathcal S_m(V))\big|
\le
\big|Q_\eta([0,1]^m)\big|
\le
\Big(\frac{2}{\eta}\Big)^m.
\end{equation}
In particular, Assumption~\ref{ass:entropy} holds with $C_{\mathrm{ent}}=2$, $c_{\mathrm{ent}}=1$, and $\mathcal E_{\mathrm{ent}}$ equal to the whole probability space.
\end{proposition}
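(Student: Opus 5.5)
The argument has three short steps: show that the energy embedding takes values in $[0,1]^m$, count how many quantization cells meet $[0,1]^m$ in each coordinate, and read off Assumption~\ref{ass:entropy}.

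\emph{Step 1: the range lies in $[0,1]^m$.} Since $\Phi_m$ has orthonormal columns, each column $\Phi_m(\cdot,j)$ is a unit vector in $\mathbb R^n$. Hence every entry satisfies $|\Phi_m(v,j)|\le 1$, and so $\Phi_m(v,j)^2\in[0,1]$ for all $v\in V$ and $j\in[m]$. This gives $\mathcal S_m(V)\subset[0,1]^m$. Applying $Q_\eta$ to both sides yields the first inequality,
\[
Q_\eta(\mathcal S_m(V))\subset Q_\eta([0,1]^m).
\]
Equivalently, one can use the row bound $\sum_j \Phi_m(v,j)^2=\|\Phi_m(v,\cdot)\|^2\le 1$, which holds because $\Phi_m\Phi_m^\top$ is an orthogonal projection.

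\emph{Step 2: counting cells per coordinate.} Because $Q_\eta$ acts coordinatewise, $Q_\eta([0,1]^m)$ is contained in the product of the one-dimensional images. It therefore suffices to bound $|Q_\eta([0,1])|$. For the floor quantizer the image is
\[
\{\eta\ell:\ \ell=0,1,\dots,\lfloor 1/\eta\rfloor\},
\]
which has $\lfloor 1/\eta\rfloor+1\le 1/\eta+1$ elements. Since $\eta<1$, we have $1<1/\eta$, so $1/\eta+1<2/\eta$. Taking the product over the $m$ coordinates gives $|Q_\eta([0,1]^m)|\le (2/\eta)^m$.

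\emph{Step 3: the main point of care.} The only delicate step is the quantizer convention, since the paper says the exact form of $Q_\eta$ is inessential. For any bin-width-$\eta$ quantizer whose bins are intervals of length $\eta$, an interval of length $1$ meets at most $\lceil 1/\eta\rceil+1$ bins. With $\eta<1$ this could exceed $2/\eta$ only if $\lceil 1/\eta\rceil+1>2/\eta$. I would handle this in one of two ways. The first is to fix the floor quantizer, as in the paper's example, for which the count is exactly $\lfloor 1/\eta\rfloor+1\le 2/\eta$. The second is to note that when $\lceil 1/\eta\rceil+1>2/\eta$, the bound $\lceil 1/\eta\rceil+1\le 2/\eta+1$ only degrades the constant, so $C_{\mathrm{ent}}=3$ would work in any case. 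I would state the result for the floor quantizer.

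\emph{Conclusion.} The bound $|Q_\eta(\mathcal S_m(V))|\le (2/\eta)^m$ holds deterministically for every graph and every choice of eigenbasis. Assumption~\ref{ass:entropy} therefore holds with $C_{\mathrm{ent}}=2$, $c_{\mathrm{ent}}=1$, and $\mathcal E_{\mathrm{ent}}$ equal to the whole probability space.
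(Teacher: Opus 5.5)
Your proposal is correct and follows the paper's proof essentially step for step: unit-norm columns give $\Phi_m(v,j)^2\in[0,1]$, each coordinate then meets at most $\lfloor 1/\eta\rfloor+1\le 2/\eta$ quantization cells, and the $m$-fold product gives the deterministic bound $(2/\eta)^m$. Your Step~3 caveat is more careful than the paper, which claims the $\lfloor 1/\eta\rfloor+1$ count for any bin-width-$\eta$ quantizer, although offset bins can meet $\lceil 1/\eta\rceil+1$ cells (more than $2/\eta$ when $\eta\in(2/3,1)$), so stating the result for the floor quantizer, as you do, is the right choice.
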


We can now state the generic converse theorem.

\begin{theorem}
\label{thm:threshold}
Fix $r\ge 3$ and let $G\sim\mathcal G_{n,r}$. Let $v_\star\sim\mathrm{Unif}(V)$ and let $\mathcal A=\{a_1,\dots,a_k\}$ be sampled uniformly without replacement from $V$, independently of $v_\star$ and $G$. Fix $m\ge 1$ and $\eta\in(0,1)$. Assume that the diameter bound $\mathrm{diam}(G)\le C_{\mathrm{diam}}\log n$ holds with probability $1-o(1)$ and that Assumption~\ref{ass:entropy} holds for the spectral embedding $\mathcal S_m$ at resolution $\eta$.
If there exists $\varepsilon_0\in(0,1)$ such that
\begin{equation}
k\log\log n + c_{\mathrm{ent}} m \log\!\Big(\frac{C_{\mathrm{ent}}}{\eta}\Big)\le (1-\varepsilon_0)\log n,
\end{equation}
then there exists $\varepsilon_1\in(0,\varepsilon_0)$ such that with probability $1-o(1)$ over $(G,\mathcal A)$,
\begin{equation}
\inf_{s_{G,\mathcal A}}\ \mathbb P\Big(s_{G,\mathcal A}(F_{G,\mathcal A}^{(\eta)}(v_\star))\neq v_\star\ \Big|\ G,\mathcal A\Big)
\ge
1-n^{-\varepsilon_1}.
\end{equation}
In particular, the left-hand side is bounded below by any fixed constant $\delta\in(0,1)$ for all sufficiently large $n$.
\end{theorem}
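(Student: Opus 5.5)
The plan is to combine the exact error identity of Lemma~\ref{lem:preimage-identity} with the image-size bound of Lemma~\ref{lem:image-control}, and then check that the budget condition forces $|\mathrm{Im}(F_{G,\mathcal A}^{(\eta)})|\le n^{1-\varepsilon_1}$. By Lemma~\ref{lem:preimage-identity}, for every fixed $(G,\mathcal A)$ we have
\[
\inf_{s_{G,\mathcal A}}\mathrm{Err}(s_{G,\mathcal A})=1-\frac{|\mathrm{Im}(F_{G,\mathcal A}^{(\eta)})|}{n}.
\]
It therefore suffices to show that, with probability $1-o(1)$, $|\mathrm{Im}(F_{G,\mathcal A}^{(\eta)})|\le n^{1-\varepsilon_1}$ for some $\varepsilon_1\in(0,\varepsilon_0)$.

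Let $\mathcal E\coloneqq\{\mathrm{diam}(G)\le C_{\mathrm{diam}}\log n\}\cap\mathcal E_{\mathrm{ent}}$. Both events depend only on $G$ and each has probability $1-o(1)$, so a union bound gives $\mathbb P(\mathcal E)=1-o(1)$. On $\mathcal E$, Lemma~\ref{lem:image-control} gives, for every anchor set $\mathcal A$ simultaneously,
\[
\log|\mathrm{Im}(F_{G,\mathcal A}^{(\eta)})|\le k\log\log n + k\log C + c_{\mathrm{ent}} m\log\!\Big(\frac{C_{\mathrm{ent}}}{\eta}\Big).
\]
Because this bound is uniform in $\mathcal A$, the randomness of the anchors and their independence from $v_\star$ play no further role beyond the conditioning already built into Lemma~\ref{lem:preimage-identity}. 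By hypothesis, the first and third terms together are at most $(1-\varepsilon_0)\log n$.

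The only real obstacle is the extra term $k\log C$, which the hypothesis does not control directly. I would handle it as follows. The hypothesis, together with $c_{\mathrm{ent}}m\log(C_{\mathrm{ent}}/\eta)\ge 0$ (since $C_{\mathrm{ent}}\ge1>\eta$), implies $k\log\log n\le\log n$, so $k\le \log n/\log\log n$. Hence $k\log C\le (\log C)\log n/\log\log n=o(\log n)$; if $C<1$ the term is nonpositive and can simply be dropped. Consequently
\[
|\mathrm{Im}(F_{G,\mathcal A}^{(\eta)})|\le n^{1-\varepsilon_0+o(1)}\le n^{1-\varepsilon_0/2}
\]
for all $n$ large enough.

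Taking $\varepsilon_1\coloneqq\varepsilon_0/2\in(0,\varepsilon_0)$, on $\mathcal E$ we obtain
\[
\inf_{s_{G,\mathcal A}}\mathrm{Err}(s_{G,\mathcal A})\ge 1-n^{-\varepsilon_1},
\]
and this holds with probability $1-o(1)$ over $(G,\mathcal A)$. The final claim follows because $n^{-\varepsilon_1}\to0$, so for any fixed $\delta\in(0,1)$ we have $1-n^{-\varepsilon_1}\ge\delta$ once $n$ is large.
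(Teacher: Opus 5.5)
Your proposal is correct and follows essentially the same route as the paper: intersect the diameter and entropy events, apply Lemma~\ref{lem:image-control} together with the exact identity of Lemma~\ref{lem:preimage-identity}, and absorb the extra $k\log C$ term as $o(\log n)$ via $k\le \log n/\log\log n$. Your explicit choice $\varepsilon_1=\varepsilon_0/2$, and your remark that $\log(C_{\mathrm{ent}}/\eta)>0$ is what justifies dropping the spectral term, simply spell out steps the paper leaves implicit.
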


Theorem~\ref{thm:threshold} identifies a generic subcritical impossibility regime: when
\begin{equation}
k\log\log n + c_{\mathrm{ent}} m \log\!\Big(\frac{C_{\mathrm{ent}}}{\eta}\Big)
\end{equation}
stays strictly below $\log n$, the encoding cannot produce enough distinct observations to localize a uniformly random node with vanishing error. This is a one-sided converse rather than a sharp threshold theorem.

Assumption~\ref{ass:entropy} should be understood as a generic complexity hypothesis on the quantized spectral codes. Its role is to isolate the information budget governing the converse in an abstract form. For the canonical energy embedding, however, this assumption can be verified explicitly: Proposition~\ref{prop:entropy-energy} yields the concrete constants $C_{\mathrm{ent}}=2$ and $c_{\mathrm{ent}}=1$. Consequently, Theorem~\ref{thm:threshold} specializes to the following assumption-free corollary.

\begin{corollary}
\label{cor:threshold-energy}
Fix $r\ge 3$ and let $G\sim\mathcal G_{n,r}$. Let $v_\star\sim\mathrm{Unif}(V)$ and let $\mathcal A=\{a_1,\dots,a_k\}$ be sampled uniformly without replacement from $V$, independently of $v_\star$ and $G$. Fix $m\ge 1$ and $\eta\in(0,1)$, and take
\begin{equation}
\mathcal S_m(v)=\big(\Phi_m(v,1)^2,\dots,\Phi_m(v,m)^2\big).
\end{equation}
Assume that $\mathrm{diam}(G)\le C_{\mathrm{diam}}\log n$ with probability $1-o(1)$. If there exists $\varepsilon_0\in(0,1)$ such that
\begin{equation}
k\log\log n + m\log\!\Big(\frac{2}{\eta}\Big)\le (1-\varepsilon_0)\log n,
\end{equation}
then there exists $\varepsilon_1\in(0,\varepsilon_0)$ such that with probability $1-o(1)$ over $(G,\mathcal A)$,
\begin{equation}
\inf_{s_{G,\mathcal A}}\ \mathbb P\Big(s_{G,\mathcal A}(F_{G,\mathcal A}^{(\eta)}(v_\star))\neq v_\star\ \Big|\ G,\mathcal A\Big)
\ge
1-n^{-\varepsilon_1}.
\end{equation}
\end{corollary}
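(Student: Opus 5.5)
The corollary follows by specializing Theorem~\ref{thm:threshold} to the energy embedding. Proposition~\ref{prop:entropy-energy} verifies Assumption~\ref{ass:entropy} with $C_{\mathrm{ent}}=2$, $c_{\mathrm{ent}}=1$, and $\mathcal E_{\mathrm{ent}}$ equal to the whole probability space. With these constants, the hypothesis of Theorem~\ref{thm:threshold} becomes exactly the displayed condition
\[
k\log\log n + m\log(2/\eta)\le (1-\varepsilon_0)\log n.
\]
The diameter hypothesis is assumed verbatim, so the conclusion transfers directly. To keep the proposal self-contained, I also spell out the chain of inequalities behind the theorem.

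\emph{Image-size bound.} On the event $\{\mathrm{diam}(G)\le C_{\mathrm{diam}}\log n\}$, which has probability $1-o(1)$, Lemma~\ref{lem:image-control} combined with Proposition~\ref{prop:entropy-energy} gives
\[
|\mathrm{Im}(F_{G,\mathcal A}^{(\eta)})|\le (C\log n)^k(2/\eta)^m .
\]
Taking logarithms,
\[
\log|\mathrm{Im}|\le k\log C + k\log\log n + m\log(2/\eta)\le k\log C+(1-\varepsilon_0)\log n .
\]
\emph{Error identity.} Lemma~\ref{lem:preimage-identity} then yields
\[
\inf_s \mathrm{Err}=1-|\mathrm{Im}|/n\ge 1-C^k n^{-\varepsilon_0}.
\]

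\emph{Main obstacle: absorbing $C^k$.} The delicate step is showing $C^k\le n^{\varepsilon_0-\varepsilon_1}$ for some $\varepsilon_1\in(0,\varepsilon_0)$. The hypothesis forces $k\le (1-\varepsilon_0)\log n/\log\log n$, hence
\[
k\log C\le (1-\varepsilon_0)\log C\,\frac{\log n}{\log\log n}=o(\log n).
\]
So for any fixed $\varepsilon_1<\varepsilon_0$, say $\varepsilon_1=\varepsilon_0/2$, we get $C^k\le n^{\varepsilon_0/2}$ for all large $n$. If $C\le 1$ this is trivial. The bound is uniform over all admissible $(k,m,\eta)$ because it uses only the constraint on $k$.

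The resulting lower bound $1-n^{-\varepsilon_1}$ then holds on an event of probability $1-o(1)$ over $(G,\mathcal A)$. Because the image bound is deterministic given the diameter event, the randomness of $\mathcal A$ plays no role beyond measurability. This completes the specialization.
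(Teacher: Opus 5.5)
Your proposal is correct and follows the paper's route: the paper obtains the corollary by plugging Proposition~\ref{prop:entropy-energy} ($C_{\mathrm{ent}}=2$, $c_{\mathrm{ent}}=1$, $\mathcal E_{\mathrm{ent}}$ the whole space) into Theorem~\ref{thm:threshold}. Your unpacked chain (image-size bound, Lemma~\ref{lem:preimage-identity}, absorbing $k\log C=o(\log n)$) mirrors the paper's proof of that theorem, and you add an explicit choice $\varepsilon_1=\varepsilon_0/2$.
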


Corollary~\ref{cor:threshold-energy} should be viewed as the concrete instantiation of the generic converse theorem for the canonical energy-type spectral embedding.

\subsection{Model-specific refinement via bucketwise collisions and balance}

We now refine the generic counting argument for the canonical energy embedding. Instead of controlling the number of spectral codes in each distance bucket through a global complexity bound, we exploit a more structured principle: if quantized spectral signatures collide sufficiently often inside each non-singleton distance bucket, and if the occupied quantized codes are not excessively unbalanced, then the number of distinct observations within that bucket must be small. This yields a model-specific refinement that is directly connected to the collision and occupancy statistics measured in the experiments.

For any finite $B\subset V$, define the set of quantized spectral codes attained in $B$ by
\begin{equation}
Q_\eta(\mathcal S_m(B))
\coloneqq
\{Q_\eta(\mathcal S_m(v)):\ v\in B\}.
\end{equation}
Let
\begin{equation}
M_{\eta,m}(B)\coloneqq \big|Q_\eta(\mathcal S_m(B))\big|
\end{equation}
denote the number of distinct quantized spectral codes attained in $B$. For each
\begin{equation}
z\in Q_\eta(\mathcal S_m(B)),
\end{equation}
define its occupancy by
\begin{equation}
N_{\eta,m}(B;z)\coloneqq \big|\{v\in B:\ Q_\eta(\mathcal S_m(v))=z\}\big|.
\end{equation}

For any finite $B\subset V$ with $|B|\ge 2$, define the quantized collision density at scale $\eta$ by
\begin{equation}
\resizebox{\columnwidth}{!}{$
\Coll_{\eta,m}^{Q}(B)\coloneqq
\frac{1}{|B|(|B|-1)}\sum_{\substack{u,v\in B\\u\neq v}}
\mathbf 1\Big\{Q_\eta(\mathcal S_m(u))=Q_\eta(\mathcal S_m(v))\Big\}.
$}
\end{equation}
When $|B|=1$, this quantity is not needed, since the bucket already contributes exactly one observation.

We also define the bucket balance coefficient by
\begin{equation}
\Bal_{\eta,m}(B)\coloneqq
\frac{M_{\eta,m}(B)}{|B|}
\cdot
\max_{z\in Q_\eta(\mathcal S_m(B))} N_{\eta,m}(B;z).
\end{equation}
By construction, $\Bal_{\eta,m}(B)\ge 1$. The condition $\Bal_{\eta,m}(B)\le \beta$ means that no occupied quantized code carries more than $\beta$ times the average mass among the occupied codes in $B$.

For the distance partition induced by $\mathcal A$, recall that
\begin{equation}
B_{\mathbf t}(G,\mathcal A)\coloneqq \{v\in V:\ \mathbf d_{\mathcal A}(v)=\mathbf t\},
\end{equation}
and let
\begin{equation}
D(G,\mathcal A)\coloneqq \big|\{\mathbf d_{\mathcal A}(v):v\in V\}\big|
\end{equation}
denote the number of attained distance buckets.

\begin{proposition}
\label{prop:avg-collision-refinement}
Assume $\mathrm{diam}(G)\le C_{\mathrm{diam}}\log n$. Suppose that for every attained distance vector $\mathbf t$, one of the following holds:
\begin{itemize}
    \item $|B_{\mathbf t}(G,\mathcal A)|=1$; or
    \item $|B_{\mathbf t}(G,\mathcal A)|\ge 2$, and
    \begin{equation}
    \Coll_{\eta,m}^{Q}(B_{\mathbf t}(G,\mathcal A))
    \ge
    c_0\,\eta^{c_1 m},
    \end{equation}
    together with
    \begin{equation}
    \Bal_{\eta,m}(B_{\mathbf t}(G,\mathcal A))\le \beta,
    \end{equation}
\end{itemize}
for some constants $c_0,c_1>0$ and $\beta\ge 1$. Then
\begin{equation}
\big|\mathrm{Im}(F_{G,\mathcal A}^{(\eta)})\big|
\le
D(G,\mathcal A)\cdot \Big(1+\frac{\beta}{c_0\,\eta^{c_1 m}}\Big).
\end{equation}
In particular,
\begin{equation}
\big|\mathrm{Im}(F_{G,\mathcal A}^{(\eta)})\big|
\le
(C\log n)^k\cdot \Big(1+\frac{\beta}{c_0\,\eta^{c_1 m}}\Big).
\end{equation}
\end{proposition}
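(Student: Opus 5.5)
The argument decomposes the image bucket by bucket, bounds the number of codes in each bucket, and then sums over buckets. The image of $F_{G,\mathcal A}^{(\eta)}$ is the disjoint union over attained distance vectors $\mathbf t$ of the sets $\{\mathbf t\}\times Q_\eta(\mathcal S_m(B_{\mathbf t}))$. Hence
\[
|\mathrm{Im}(F_{G,\mathcal A}^{(\eta)})|=\sum_{\mathbf t}M_{\eta,m}(B_{\mathbf t}),
\]
where the sum runs over the $D(G,\mathcal A)$ attained vectors. It suffices to show $M_{\eta,m}(B_{\mathbf t})\le 1+\beta/(c_0\eta^{c_1m})$ for each bucket. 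Singleton buckets trivially give $M=1$, which satisfies this.

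For a bucket $B$ with $|B|\ge 2$, write $M=M_{\eta,m}(B)$, $N_z=N_{\eta,m}(B;z)$ and $N_{\max}=\max_z N_z$. The main step relates the collision density to the occupancies. Counting ordered pairs gives
\[
\Coll^{Q}_{\eta,m}(B)=\frac{\sum_z N_z(N_z-1)}{|B|(|B|-1)}.
\]
Bounding $\sum_z N_z(N_z-1)\le (N_{\max}-1)\sum_z N_z=(N_{\max}-1)|B|$ yields
\[
\Coll^{Q}_{\eta,m}(B)\le \frac{N_{\max}-1}{|B|-1}.
\]
The balance hypothesis gives $N_{\max}\le \beta|B|/M$. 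Combining this with the collision lower bound gives
\[
c_0\eta^{c_1m}\le \frac{\beta|B|/M-1}{|B|-1}.
\]
Rearranging, $M\bigl(1+c_0\eta^{c_1m}(|B|-1)\bigr)\le \beta|B|$.

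The main obstacle is extracting a bound independent of $|B|$ from this inequality, where the $-1$ terms must be handled with care. I would use $|B|-1\ge |B|/2$ for $|B|\ge2$, which gives $M\le 2\beta/(c_0\eta^{c_1m})$. This is within a factor $2$ of the target, so to match the stated constant $1+\beta/(c_0\eta^{c_1m})$ I would split into two cases.
\begin{itemize}
\item If $c_0\eta^{c_1m}(|B|-1)\ge$ some threshold, dropping the $1$ in the bracket gives $M\le \beta|B|/(c_0\eta^{c_1m}(|B|-1))$.
\item Otherwise $|B|$ is small. Then $M\le|B|$ is already controlled, and the additive $1$ in the target absorbs the discrepancy.
\end{itemize}
Alternatively, I would replace the crude bound by the exact occupancy identity $\sum_z N_z^2\le N_{\max}|B|$, which gives $\Coll\cdot|B|(|B|-1)+|B|\le \beta|B|^2/M$. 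From this, $M\le \beta|B|/(1+c_0\eta^{c_1m}(|B|-1))$, and I would check case-wise that this is at most $1+\beta/(c_0\eta^{c_1m})$. If the constant cannot be recovered exactly, a harmless absolute factor would be carried, which does not affect the use of the bound.

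Summing over buckets gives the first inequality. The second follows by invoking Lemma~\ref{lem:profile-count}, which gives $D(G,\mathcal A)\le (C\log n)^k$ under the diameter assumption.
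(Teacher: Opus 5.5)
Your approach is the paper's, up to the last step. The bucket decomposition $|\mathrm{Im}(F_{G,\mathcal A}^{(\eta)})|=\sum_{\mathbf t}M_{\eta,m}(B_{\mathbf t})$ and the pair-counting identity for $\Coll_{\eta,m}^{Q}$ are as in the paper. Your bound $\sum_z N_z(N_z-1)\le (N_{\max}-1)|B|$ is the paper's $\sum_z N_z^2\le N_{\max}|B|$ with $|B|$ subtracted from both sides. Together these give exactly the paper's inequality $M\bigl(1+c(|B|-1)\bigr)\le\beta|B|$, where $c:=c_0\eta^{c_1 m}$.

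The gap is the closing step, which you flag as the main obstacle and then leave open. The threshold in your case split is never specified, and the claim that the additive $1$ ``absorbs the discrepancy'' is not justified as written. The case-wise check is announced but not carried out. The fallback $M\le 2\beta/c$, or ``a harmless absolute factor'', does not prove the proposition, because its constant is explicit.

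The missing observation is one line, and it is what the paper uses. Since $\Coll_{\eta,m}^{Q}(B)\le 1$ always, the hypothesis forces $c\le 1$ whenever a non-singleton bucket exists. Then $1+c(|B|-1)=c|B|+(1-c)\ge c|B|$, so $M\le\beta/c$ for every non-singleton bucket. Summing over buckets gives $|\mathcal T_1|+|\mathcal T_{\ge 2}|\,\beta/c\le D(G,\mathcal A)\bigl(1+\beta/c\bigr)$. Equivalently, $b\mapsto \beta b/(1+c(b-1))$ is nondecreasing when $c\le 1$, with supremum $\beta/c$.

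Your case split also closes if you take the threshold to be exactly $\beta$:
\begin{itemize}
\item If $c(|B|-1)\ge\beta$, then $M\le \frac{\beta}{c}\bigl(1+\frac{1}{|B|-1}\bigr)\le\frac{\beta}{c}+1$.
\item Otherwise $|B|<1+\beta/c$, and so $M\le |B|<1+\beta/c$.
\end{itemize}
This version does not even need $c\le 1$. It gives only the per-bucket bound $1+\beta/c$ rather than the paper's slightly sharper $\beta/c$ on non-singleton buckets, but either is enough for the stated inequality. The final step via Lemma~\ref{lem:profile-count} is fine as you have it.
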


Proposition~\ref{prop:avg-collision-refinement} should be understood as a second-layer counting principle. Singleton buckets are already resolved by the distance component alone, while each non-singleton bucket is further controlled by two measurable quantities: a lower bound on within-bucket quantized collisions and an upper bound on the imbalance of occupied quantized codes.

\begin{corollary}
\label{cor:image-control-refined}
Assume $\mathrm{diam}(G)\le C_{\mathrm{diam}}\log n$ with probability $1-o(1)$, and suppose there exists an event $\mathcal E_{\mathrm{ref}}$ with $\mathbb P(\mathcal E_{\mathrm{ref}})=1-o(1)$ such that on $\mathcal E_{\mathrm{ref}}$, for every attained distance vector $\mathbf t$, one of the following holds:
\begin{itemize}
    \item $|B_{\mathbf t}(G,\mathcal A)|=1$; or
    \item $|B_{\mathbf t}(G,\mathcal A)|\ge 2$, and
    \begin{equation}
    \Coll_{\eta,m}^{Q}(B_{\mathbf t}(G,\mathcal A))
    \ge
    c_0\,\eta^{c_1 m},
    \end{equation}
    together with
    \begin{equation}
    \Bal_{\eta,m}(B_{\mathbf t}(G,\mathcal A))\le \beta,
    \end{equation}
\end{itemize}
for some constants $c_0,c_1>0$ and $\beta\ge 1$. Then with probability $1-o(1)$ over $(G,\mathcal A)$,
\begin{equation}
\big|\mathrm{Im}(F_{G,\mathcal A}^{(\eta)})\big|
\le
(C\log n)^k\cdot \Big(1+\frac{\beta}{c_0\,\eta^{c_1 m}}\Big).
\end{equation}
In particular, since $\eta\in(0,1)$,
\begin{equation}
\big|\mathrm{Im}(F_{G,\mathcal A}^{(\eta)})\big|
\le
(C\log n)^k\cdot \Big(1+\frac{\beta}{c_0}\Big)\eta^{-c_1 m}.
\end{equation}
\end{corollary}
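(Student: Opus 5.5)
The plan is to reduce the corollary to Proposition~\ref{prop:avg-collision-refinement}, applied pointwise on a high-probability event. Define
\[
\mathcal E\coloneqq\{\mathrm{diam}(G)\le C_{\mathrm{diam}}\log n\}\cap\mathcal E_{\mathrm{ref}}.
\]
By hypothesis both events have probability $1-o(1)$. A union bound on their complements then gives $\mathbb P(\mathcal E)\ge 1-o(1)-o(1)=1-o(1)$ over $(G,\mathcal A)$.

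\emph{Step 1 (deterministic bound on $\mathcal E$).} Fix any realization $(G,\mathcal A)\in\mathcal E$. On this realization the diameter bound holds, and every attained distance bucket $B_{\mathbf t}(G,\mathcal A)$ is either a singleton or satisfies the collision lower bound and the balance upper bound with the constants $c_0,c_1,\beta$. These are exactly the hypotheses of Proposition~\ref{prop:avg-collision-refinement}, so its second conclusion yields
\[
\big|\mathrm{Im}(F_{G,\mathcal A}^{(\eta)})\big|\le (C\log n)^k\Bigl(1+\frac{\beta}{c_0\eta^{c_1 m}}\Bigr).
\]
This conclusion itself rests on the bound $D(G,\mathcal A)\le (C\log n)^k$ from Lemma~\ref{lem:profile-count}.

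\emph{Step 2 (simplification).} Since $\eta\in(0,1)$ and $c_1m>0$, we have $\eta^{-c_1 m}\ge 1$. Hence
\[
1+\frac{\beta}{c_0}\eta^{-c_1 m}\le \eta^{-c_1m}+\frac{\beta}{c_0}\eta^{-c_1 m}=\Bigl(1+\frac{\beta}{c_0}\Bigr)\eta^{-c_1 m}.
\]
Substituting this into Step 1 gives the second displayed inequality.

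The only point needing care, and the nearest thing to an obstacle, is the uniformity of the constants. The constant $C$ from Lemma~\ref{lem:profile-count} must depend only on $C_{\mathrm{diam}}$ (and $r$), not on the realization. Likewise $c_0,c_1,\beta$ must be fixed in advance rather than chosen per realization, as the corollary's phrasing intends. Once this is in place, the deterministic bound holds simultaneously for all realizations in $\mathcal E$, and the probability statement follows directly.
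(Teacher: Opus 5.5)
Your proposal is correct and follows essentially the same route as the paper's proof. Both intersect the diameter event with $\mathcal E_{\mathrm{ref}}$, apply Proposition~\ref{prop:avg-collision-refinement} with the bound $D(G,\mathcal A)\le (C\log n)^k$ from Lemma~\ref{lem:profile-count} on that event, and then use $\eta^{-c_1 m}\ge 1$ to absorb the additive $1$; your remark that $C$, $c_0$, $c_1$, $\beta$ must be fixed independently of the realization is a point the paper leaves implicit.
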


Combining the refined image-size bound with Lemma~\ref{lem:preimage-identity} yields the following refined impossibility corollary.

\begin{corollary}
\label{cor:threshold-refined}
In the setting of Theorem~\ref{thm:threshold}, further assume that there exist constants $c_0,c_1>0$ and $\beta\ge 1$, and an event $\mathcal E_{\mathrm{ref}}$ with $\mathbb P(\mathcal E_{\mathrm{ref}})=1-o(1)$ such that on $\mathcal E_{\mathrm{ref}}$, for every attained distance vector $\mathbf t$, one of the following holds:
\begin{itemize}
    \item $|B_{\mathbf t}(G,\mathcal A)|=1$; or
    \item $|B_{\mathbf t}(G,\mathcal A)|\ge 2$, and
    \begin{equation}
    \Coll_{\eta,m}^{Q}(B_{\mathbf t}(G,\mathcal A))
    \ge
    c_0\,\eta^{c_1 m},
    \end{equation}
    together with
    \begin{equation}
    \Bal_{\eta,m}(B_{\mathbf t}(G,\mathcal A))\le \beta.
    \end{equation}
\end{itemize}
If there exists $\varepsilon_0\in(0,1)$ such that
\begin{equation}
k\log\log n + c_1 m \log(1/\eta)\le (1-\varepsilon_0)\log n,
\end{equation}
then there exists $\varepsilon_1\in(0,\varepsilon_0)$ such that with probability $1-o(1)$ over $(G,\mathcal A)$,
\begin{equation}
\inf_{s_{G,\mathcal A}}\ \mathbb P\Big(s_{G,\mathcal A}(F_{G,\mathcal A}^{(\eta)}(v_\star))\neq v_\star\ \Big|\ G,\mathcal A\Big)
\ge
1-n^{-\varepsilon_1}.
\end{equation}
\end{corollary}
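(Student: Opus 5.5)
The plan is to combine the exact error identity of Lemma~\ref{lem:preimage-identity} with the refined image-size bound of Corollary~\ref{cor:image-control-refined}, following the same template as the proof of Theorem~\ref{thm:threshold}.

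\textbf{Step 1: the good event.} Let $\mathcal E$ be the intersection of the diameter event $\{\mathrm{diam}(G)\le C_{\mathrm{diam}}\log n\}$ with $\mathcal E_{\mathrm{ref}}$. Each has probability $1-o(1)$, so by a union bound $\mathbb P(\mathcal E)=1-o(1)$ over $(G,\mathcal A)$. On $\mathcal E$, Corollary~\ref{cor:image-control-refined} gives
\[
\big|\mathrm{Im}(F_{G,\mathcal A}^{(\eta)})\big|\le (C\log n)^k\Big(1+\frac{\beta}{c_0}\Big)\eta^{-c_1 m}.
\]

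\textbf{Step 2: pass to the error.} By Lemma~\ref{lem:preimage-identity}, the optimal error on $\mathcal E$ equals $1-|\mathrm{Im}(F_{G,\mathcal A}^{(\eta)})|/n$. It therefore suffices to show that $|\mathrm{Im}(F_{G,\mathcal A}^{(\eta)})|\le n^{1-\varepsilon_1}$ for all large $n$.

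\textbf{Step 3: logarithmic bookkeeping (the main obstacle).} Take logarithms:
\[
\log|\mathrm{Im}|\le k\log C + k\log\log n + c_1 m\log(1/\eta) + \log(1+\beta/c_0).
\]
The hypothesis bounds $k\log\log n + c_1 m\log(1/\eta)$ by $(1-\varepsilon_0)\log n$. The last term is $O(1)$. The term $k\log C$ is the delicate one, since $k$ may grow with $n$.

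\begin{itemize}
\item If $C\le 1$, the term $k\log C$ is nonpositive and can be dropped.
\item If $C>1$, the hypothesis gives $k\le (1-\varepsilon_0)\log n/\log\log n$. Hence
\[
k\log C\le \frac{(1-\varepsilon_0)\log C}{\log\log n}\,\log n=o(\log n).
\]
\end{itemize}

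In either case, $\log|\mathrm{Im}|\le (1-\varepsilon_0)\log n+o(\log n)$. Fixing, say, $\varepsilon_1=\varepsilon_0/2$, we get $\log|\mathrm{Im}|\le (1-\varepsilon_1)\log n$ for all sufficiently large $n$.

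\textbf{Conclusion.} On $\mathcal E$, the optimal error is at least $1-n^{-\varepsilon_1}$ with $\varepsilon_1\in(0,\varepsilon_0)$, and $\mathcal E$ has probability $1-o(1)$. This is the claimed statement. The case $k=0$ and similar edge cases require no separate treatment, since the same bounds hold with those terms vanishing.
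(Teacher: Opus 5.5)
Your proposal is correct and takes essentially the same route as the paper. You intersect the diameter event with $\mathcal E_{\mathrm{ref}}$, apply Corollary~\ref{cor:image-control-refined} and Lemma~\ref{lem:preimage-identity}, and absorb $k\log C$ (using $k\le(1-\varepsilon_0)\log n/\log\log n$) and $\log(1+\beta/c_0)$ as $o(\log n)$. Your explicit choice $\varepsilon_1=\varepsilon_0/2$ and the separate $C\le 1$ case are harmless refinements of the paper's ``there exists $\varepsilon_1$'' step.
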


Corollary~\ref{cor:threshold-refined} is a model-specific refinement of Theorem~\ref{thm:threshold}. Once bucketwise quantized collisions are sufficiently strong and the occupied quantized codes remain moderately balanced inside each non-singleton distance bucket, the generic quantized spectral-code complexity bound can be replaced by a sharper bucketwise counting principle.

\section{Empirical Evaluation}
\label{SE7}

\subsection{Experimental setup}
We evaluate the proposed encoding on both synthetic and real-world graphs, focusing on the identifiability of the encoding itself rather than on end-to-end prediction. The object under study is the deterministic observation map
\begin{equation}\label{eq43}
\begin{aligned}
F(v)=\bigl(d_A(v),Q_{\eta}(S_m(v))\bigr),
\end{aligned}
\end{equation}
where \(d_A(v)\) is the anchor-distance profile and \(Q_{\eta}(S_m(v))\) is the relatively quantized low-frequency spectral code. Unless otherwise stated, anchors are sampled uniformly without replacement, the spectral component uses the first \(m\) non-trivial eigenvectors of the normalized Laplacian, and quantization is relative.

For the synthetic study, we use random \(3\)-regular graphs with
\begin{equation*}
\begin{aligned}
n &\in \{500,1000,2000,4000\},\\
k &\in \{1,2,3,4,6,8\},\\
m &\in \{0,1,2,5\},\\
\eta &\in \{0.9,0.7,0.5,0.3,0.1\},
\end{aligned}
\end{equation*}
averaging over \(20\) independently sampled graphs for each configuration; the case \(m=0\) is included as a distance-only baseline. We organize the synthetic results by
\[
\rho_{\mathrm{eng}}
=
\frac{k\log\log n + m\log(2/\eta)}{\log n},
\]
and also report the empirical anchor threshold
\[
k_{\mathrm{emp}}(n,m,\eta)
=
\min
\left\{
k:
\mathbb{E}[\mathrm{err}(F)]\le 0.1
\right\}.
\]

For the real-world study, we consider two DDI graph constructions: a DrugBank graph with
\[
(N_{\mathrm{DB}},E_{\mathrm{DB}})=(1684,189774),
\]
and a Decagon-derived graph with
\[
(N_{\mathrm{DEC}},E_{\mathrm{DEC}})=(40,466).
\]
On these fixed graphs, we evaluate
\[
k\in\{1,2,3,4,6,8\},\quad
m\in\{0,2,5,10\},\quad
\eta\in\{0.9,0.5,0.25,0.1\},
\]
and average results over \(30\) independent random anchor samplings.

Our primary metric is the normalized image size
\begin{equation}\label{eq50}
\begin{aligned}
\mathrm{succ}(F)
&=
\frac{|\mathrm{Im}(F)|}{n},
\\
\mathrm{err}(F)
&=
1-\mathrm{succ}(F).
\end{aligned}
\end{equation}
A smaller error therefore indicates stronger node identifiability. Full dataset construction, baseline definitions, quantization rules, and detailed diagnostic protocols are deferred to Appendix~\ref{app:exp-details}.

\subsection{Phase-transition-like behavior on random regular graphs}

Figure~\ref{fig:expA_main} shows the phase-transition-like behavior most clearly in the random-regular setting. 
In panel~(a), the mean localization error drops sharply once the anchor budget enters the transition region.

Panel~(b) shows the corresponding structural effect: when plotted against the theory-guided budget ratio \(\rho_{\mathrm{eng}}\), the mean average preimage size rapidly collapses toward one. This indicates that the empirical error drop is accompanied by a genuine shrinkage of the fibers of the observation map.

The right-hand block in panel~(c) further shows that, across spectral dimensions \(m\in\{0,1,2,5\}\), plotting the error against \(\rho_{\mathrm{eng}}\) organizes the transition more consistently across graph sizes than the anchor number alone. The main crossover remains concentrated near \(\rho_{\mathrm{eng}}\approx 1\), although the sharpness of the transition varies with \(m\). This should be interpreted as an empirical scaling law aligned with the converse boundary, rather than as evidence of a sharp threshold.

Taken together, these plots show that the transition is governed more naturally by the theory-guided information budget than by the raw anchor count alone, and that the error drop coincides with a combinatorial collapse of observational ambiguity.

\begin{figure*}[htbp]
    \centering
    \includegraphics[width=0.98\textwidth]{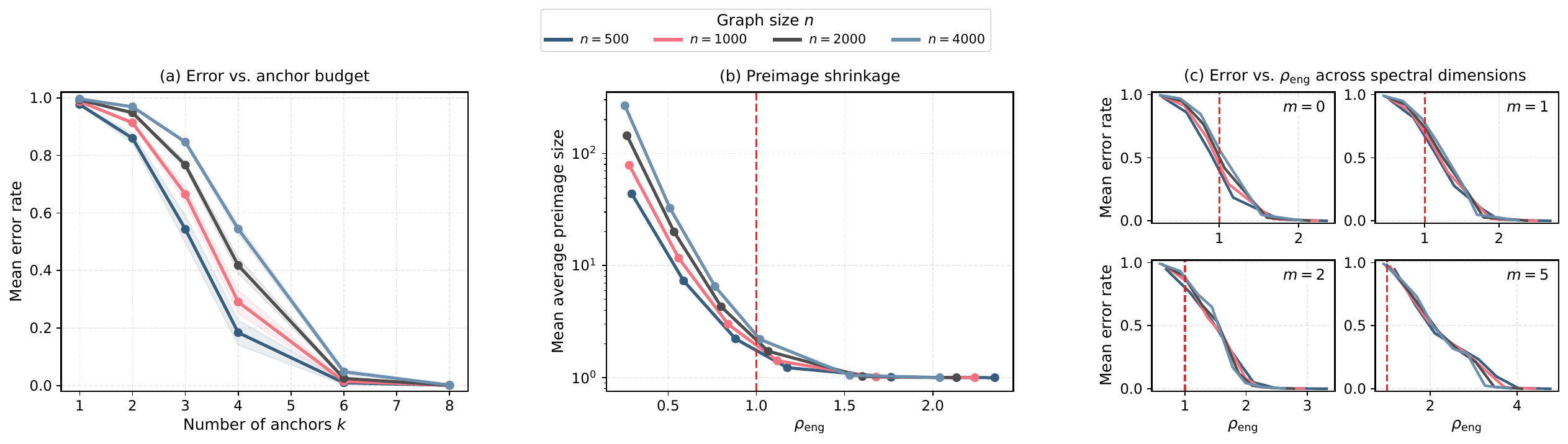}
    \caption{Phase-transition-like behavior on random \(3\)-regular graphs. 
    (a) Mean localization error versus the number of anchors \(k\). 
    (b) Mean average preimage size versus the budget ratio \(\rho_{\mathrm{eng}}\), showing the collapse of observation-map fibers in the low-error regime. 
    (c) Error curves plotted against \(\rho_{\mathrm{eng}}\) for spectral dimensions \(m\in\{0,1,2,5\}\). Across spectral dimensions, the transition is more consistently organized by the theory-guided budget ratio than by the anchor number alone.}
    \label{fig:expA_main}
\end{figure*}

\subsection{Theory-aligned diagnostics of the observation map}
\label{subsec:theory_aligned_diagnostics}
Figure~\ref{fig:theory_diag_kemp} reports \(k_{\mathrm{emp}}\) across \(n\), \(m\), and \(\eta\).

The dominant pattern is the stability of the distance-only baseline: \(k_{\mathrm{emp}}=6\) for all tested \(n\) and \(\eta\) when \(m=0\). Once the spectral channel is added, the threshold decreases, and the gain is strongest at finer quantization. At \(\eta=0.1\), the threshold drops from \(6\) to \((3,4,4,6)\) for \(m=1\), to \((2,2,3,3)\) for \(m=2\), and to \(1\) for all tested graph sizes when \(m=5\). The curves are not strictly monotone in \(\eta\), which is expected since \(k_{\mathrm{emp}}\) is an integer threshold estimated from finitely many trials.

Table~\ref{tab:theory_diag_eta01} shows the corresponding threshold statistics at \(\eta=0.1\). At threshold, the normalized image size is already high (\(0.9057\)–\(0.9999\)), and the average preimage size stays close to one. The spectral code count increases sharply with \(m\), reaching nearly \(n\) for \(m=5\); this is precisely the regime where the anchor threshold collapses to \(1\).

\begin{figure}[t]
    \centering
    \includegraphics[width=0.85\linewidth]{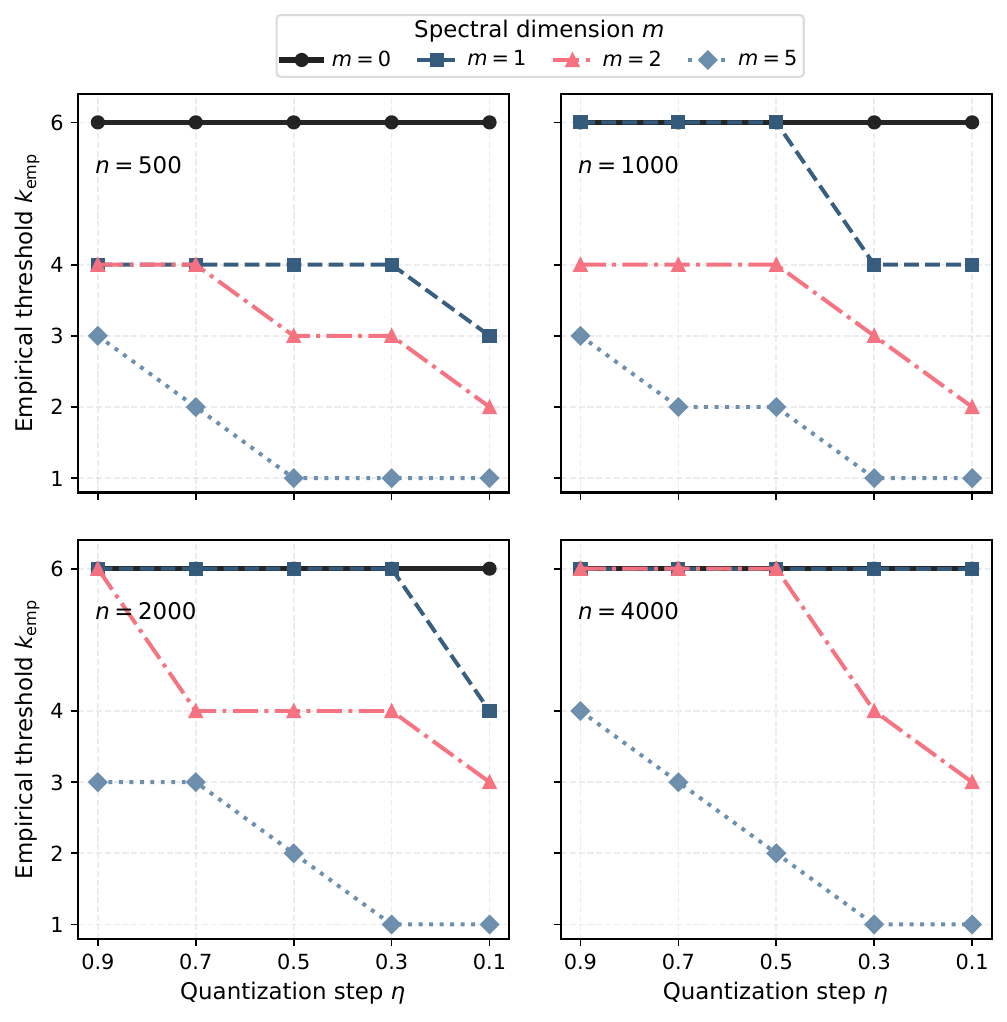}
    \caption{Empirical threshold \(k_{\mathrm{emp}}\) versus quantization step \(\eta\) for random \(3\)-regular graphs. Each panel fixes \(n\), and each curve corresponds to a spectral dimension \(m\).}
    \label{fig:theory_diag_kemp}
\end{figure}

\begin{table}[t]
\centering
\small
\caption{Threshold diagnostics at the finest quantization level $\eta=0.1$.}
\label{tab:theory_diag_eta01}

\resizebox{\linewidth}{!}{%
\begin{tabular}{ccrrrrr}
\toprule
\toprule
$n$ & $m$ & $k_{\mathrm{emp}}$ & $\rho_{\mathrm{emp}}$ & $|\mathrm{Im}(F)|/n$ & Avg.\ preimage & $|Q_\eta(\widetilde S_m(V))|$ \\
\midrule
\multirow{4}{*}{500}
  & 0 & 6 & 1.764 & 0.992 & 1.008 & 1 \\
  & 1 & 3 & 1.364 & 0.906 & 1.105 & 51.75 \\
  & 2 & 2 & 1.552 & 0.956 & 1.046 & 285.6 \\
  & 5 & 1 & 2.704 & 1.000 & 1.000 & 499.45 \\
\specialrule{0.08em}{0.15em}{0.15em}

\multirow{4}{*}{1000}
  & 0 & 6 & 1.679 & 0.987 & 1.014 & 1 \\
  & 1 & 4 & 1.553 & 0.963 & 1.038 & 66 \\
  & 2 & 2 & 1.427 & 0.915 & 1.093 & 444.2 \\
  & 5 & 1 & 2.448 & 0.999 & 1.001 & 997.8 \\
\specialrule{0.08em}{0.15em}{0.15em}

\multirow{4}{*}{2000}
  & 0 & 6 & 1.601 & 0.973 & 1.027 & 1 \\
  & 1 & 4 & 1.462 & 0.937 & 1.067 & 76.3 \\
  & 2 & 3 & 1.589 & 0.967 & 1.034 & 692.75 \\
  & 5 & 1 & 2.237 & 0.999 & 1.001 & 1990.1 \\
\specialrule{0.08em}{0.15em}{0.15em}

\multirow{4}{*}{4000}
  & 0 & 6 & 1.530 & 0.948 & 1.055 & 1 \\
  & 1 & 6 & 1.892 & 0.995 & 1.005 & 87.65 \\
  & 2 & 3 & 1.488 & 0.940 & 1.064 & 1033.6 \\
  & 5 & 1 & 2.061 & 0.998 & 1.002 & 3962 \\
\bottomrule
\bottomrule
\end{tabular}%
}

\vspace{2pt}
\parbox{\linewidth}{\footnotesize
\textit{Note.} Larger spectral codebooks are associated with smaller anchor thresholds. For $m=5$, the codebook size is already close to $n$, and $k_{\mathrm{emp}}=1$ for all tested graph sizes.
}
\end{table}

\subsection{Bucketwise refinement diagnostics}

To examine the mechanism behind the refined bucketwise counting bound, we analyze spectral refinement inside non-singleton distance buckets on random $3$-regular graphs with $n=2000$, using three representative regimes:
collision-heavy $(k,m,\eta)=(2,1,2.0)$,
transition $(2,2,1.0)$,
and near-injective $(2,5,0.3)$.
These three regimes are auxiliary diagnostic settings chosen to span a collision-heavy, transition, and near-injective spectrum, and are not restricted to the main grid used in the phase-transition experiments. For each regime, we use 20 graph instances and 10 anchor resamples per graph. Since the same graph instances and the same anchor resamples are used across the three regimes, the distance buckets are identical across rows, and the comparison isolates the effect of the spectral component within the residual buckets.

Table~\ref{tab:bucketwise_refinement} shows a clear progression. In the collision-heavy regime, the pair-weighted within-bucket collision is large ($0.727$), while the median normalized within-bucket code count is small ($0.185$), indicating substantial compression inside non-singleton buckets. In the near-injective regime, the weighted collision is negligible ($4.45\times 10^{-4}$) and the median normalized code count is essentially one, showing that the spectral component almost completely resolves the remaining buckets. The transition regime lies between these two extremes.

The balance statistic further sharpens this picture. The transition regime has the largest upper-tail balance coefficient, indicating that collision alone does not determine the within-bucket image size: more uneven occupancy is associated with a larger number of occupied spectral codes. This is consistent with the refined proposition, where both collision and balance enter the bucketwise counting bound. The same ordering remains stable when restricting attention to larger non-singleton buckets, and becomes even more pronounced for the normalized within-bucket code count when the cutoff is increased from $|B_{\mathbf t}|\ge 3$ to $|B_{\mathbf t}|\ge 10$.

\begin{table*}[t]
  \centering
  \scriptsize
  \setlength{\tabcolsep}{5pt}
  \caption{Bucketwise refinement diagnostics on random $3$-regular graphs with $n=2000$. Values are averaged over 20 graph trials and 10 anchor resamples per trial. The last four columns report large-bucket diagnostics under the cutoffs $|B_{\mathbf t}|\ge 3$ and $|B_{\mathbf t}|\ge 10$.}
  \label{tab:bucketwise_refinement}
  \resizebox{\textwidth}{!}{%
  \begin{tabular}{lcccccccc}
    \toprule
    \textbf{Regime} 
    & \textbf{Overall Error} 
    & \textbf{Singleton Frac.} 
    & \textbf{Weighted Collision} 
    & \textbf{Median $M(B_{\mathbf t})/|B_{\mathbf t}|$} 
    & \textbf{$q_{0.9}$ Balance} 
    & \textbf{Large Coll. ($\ge 3$)} 
    & \textbf{Large Coll. ($\ge 10$)} 
    & \textbf{Large Median $M/|B|$ ($\ge 3,\ge 10$)} \\
    \midrule
    Collision-heavy 
    & $0.894$ & $0.202$ & $\mathbf{0.727}$ & $\mathbf{0.185}$ & $3.23$ 
    & $0.727$ & $0.727$ & $0.154,\ 0.093$ \\

    Transition      
    & $0.703$ & $0.202$ & $0.253$ & $0.465$ & $\mathbf{6.97}$ 
    & $0.253$ & $0.253$ & $0.413,\ 0.311$ \\

    Near-injective  
    & $\mathbf{0.012}$ & $0.202$ & $4.45\times 10^{-4}$ & $0.9999$ & $1.89$ 
    & $4.45\times 10^{-4}$ & $4.46\times 10^{-4}$ & $0.9999,\ 0.9999$ \\
    \bottomrule
  \end{tabular}%
  }
\end{table*}

\subsection{Identifiability on two real-world DDI graphs}
\label{subsec:real_ident}

This experiment is intended as an intra-domain case study rather than a claim of universality across graph families. We compare identifiability on the two real-world DDI graphs introduced in the setup.

The two graphs show sharply different identifiability. On DrugBank, the observation map remains highly colliding for all tested settings. Even at $(k,m,\eta)=(8,10,0.1)$, we have
\[
\mathrm{err}(F)=0.9023,\quad \frac{|\mathrm{Im}(F)|}{n}=0.0977,
\]
with vertex-mean preimage size $79.33$ and singleton-preimage vertex fraction $0.0384$.
\begin{table}[t]
\centering
\caption{Structural statistics of the two real-world DDI graphs.}
\label{tab:real_graph_stats}
\begin{tabular}{lcc}
\toprule
Statistic & DrugBank DDI & Decagon-derived DDI \\
\midrule
$n$ & 1684 & 40 \\
$|E|$ & 189774 & 466 \\
Average degree & 225.385 & 23.300 \\
Density & 0.1339 & 0.5974 \\
Diameter & 5 & 3 \\
Avg. shortest-path length & 2.086 & 1.414 \\
Avg. clustering coefficient & 0.550 & 0.874 \\
Transitivity & 0.496 & 0.801 \\
Degree variance & 38041.436 & 106.610 \\
Degree Gini & 0.478 & 0.246 \\
\bottomrule
\end{tabular}
\end{table}

On the small Decagon-derived graph considered here, distance information alone is insufficient, but adding spectral information makes the encoding nearly injective under the tested settings. At $(8,0,0.1)$, the error is $0.4250$ and the normalized image size is $0.5750$. At $(8,10,0.1)$, these become
\[
\mathrm{err}(F)=0.0158,\quad \frac{|\mathrm{Im}(F)|}{n}=0.9842,
\]
with vertex-mean preimage size $1.03$ and singleton-preimage vertex fraction $0.9683$.

The aggregate trends are consistent with this contrast. On the Decagon-derived graph, the mean error decreases from $0.704$ to $0.192$ as $m$ increases from $0$ to $10$, from $0.579$ to $0.293$ as $\eta$ decreases from $0.9$ to $0.1$, and from $0.585$ to $0.264$ as $k$ increases from $1$ to $8$. On DrugBank, the corresponding changes are much smaller.

Table~\ref{tab:real_graph_compare} shows the same pattern at the bucket level. On DrugBank, singleton distance buckets are rare and non-singleton buckets remain strongly colliding. On the Decagon-derived graph, the near-injective regime is associated with more singleton distance buckets and much weaker within-bucket collisions. Across these two DDI graphs, identifiability depends not only on \((k,m,\eta)\), but also on the observation-map geometry induced by the underlying graph structure.

\begin{table*}[t]
\centering
\caption{Representative identifiability results on the two real-world DDI graphs. Here $|\mathrm{Im}(F)|/n$ is the normalized image size, and the remaining columns report preimage- and distance-bucket-based diagnostics.}
\label{tab:real_graph_compare}
\resizebox{\textwidth}{!}{%
\begin{tabular}{ccccccccccc}
\toprule
Graph & $k$ & $m$ & $\eta$ & Error & $|\mathrm{Im}(F)|/n$ & Vertex-mean preimage & Singleton-preimage frac. & Singleton distance-bucket frac. & Median non-singleton collision & Median non-singleton balance \\
\midrule
DrugBank DDI & 1 & 0  & 0.5 & 0.9971 & 0.0029 & 959.65 & 0.0007 & 0.0007 & 1.0000 & 1.0000 \\
DrugBank DDI & 1 & 10 & 0.1 & 0.9918 & 0.0082 & 949.07 & 0.0037 & 0.0007 & 0.9637 & 3.4002 \\
DrugBank DDI & 4 & 0  & 0.1 & 0.9773 & 0.0227 & 264.77 & 0.0057 & 0.0057 & 1.0000 & 1.0000 \\
DrugBank DDI & 4 & 10 & 0.1 & 0.9711 & 0.0289 & 288.81 & 0.0104 & 0.0064 & 1.0000 & 1.0000 \\
DrugBank DDI & 8 & 0  & 0.1 & 0.9093 & 0.0907 & 92.46 & 0.0340 & 0.0340 & 1.0000 & 1.0000 \\
DrugBank DDI & 8 & 10 & 0.1 & 0.9023 & 0.0977 & 79.33 & 0.0384 & 0.0346 & 1.0000 & 1.0000 \\
\midrule
Decagon-derived DDI & 1 & 0  & 0.5 & 0.9175 & 0.0825 & 24.14 & 0.0283 & 0.0283 & 1.0000 & 1.0000 \\
Decagon-derived DDI & 1 & 10 & 0.1 & 0.0233 & 0.9767 & 1.05 & 0.9533 & 0.0275 & 0.0015 & 1.3520 \\
Decagon-derived DDI & 4 & 0  & 0.1 & 0.6958 & 0.3042 & 9.47 & 0.1792 & 0.1792 & 1.0000 & 1.0000 \\
Decagon-derived DDI & 4 & 10 & 0.1 & 0.0200 & 0.9800 & 1.04 & 0.9600 & 0.1792 & 0.0000 & 1.0000 \\
Decagon-derived DDI & 8 & 0  & 0.1 & 0.4250 & 0.5750 & 3.83 & 0.4367 & 0.4367 & 1.0000 & 1.0000 \\
Decagon-derived DDI & 8 & 10 & 0.1 & 0.0158 & 0.9842 & 1.03 & 0.9683 & 0.4458 & 0.0000 & 1.0000 \\
\bottomrule
\end{tabular}%
}
\end{table*}

\subsection{Ablation studies}
\label{subsec:ablation}

We conduct ablations on random $3$-regular graphs to assess the roles of the distance and spectral components, the interaction of $(k,m,\eta)$, and the sensitivity to anchor selection. Unless otherwise stated, the ablations use
\begin{equation*}
\begin{aligned}
n &\in \{500,1000,2000\},\\
k &\in \{1,2,3,4,6,8\},\\
m &\in \{0,2,5,10\},\\
\eta &\in \{0.9,0.5,0.25,0.1\},
\end{aligned}
\end{equation*}
averaged over \(20\) independent graph trials.

\paragraph{Representation and anchors.}
Table~\ref{tab:ablation_repr_anchor} shows that positional information is necessary. Without it, the error is nearly one. Spectral-only features remain weak, with mean error $0.9328$. Distance-only features are much more informative, reducing the mean error to $0.2912$. The full observation map further reduces the mean error to $0.0789$, with mean preimage size $1.0908$ and singleton-observation ratio $0.9310$. On random regular graphs, anchor distances provide the main discriminative signal, while the quantized spectral term mainly refines nodes within the same distance bucket.

More sophisticated anchor strategies yield only limited improvements over uniform random sampling. The mean error is $0.0941$ for random anchors, $0.0832$ for the farthest-point heuristic, and $0.0755$ for the degree-based heuristic. Random anchors are also stable under resampling: the mean within-graph standard deviation of the error is $0.0150$ for $n=1000$ and $0.0199$ for $n=2000$, with maximum observed value $0.0290$.

\paragraph{Hyperparameter interaction.}
Figure~\ref{fig:ablation_interaction} shows that the effects of $k$, $m$, and $\eta$ are coupled. For fixed $k=4$ and $n=2000$, the mean error decreases from $0.4156$ at $m=0$ to $0.1308$ at $(m,\eta)=(2,0.1)$, $0.0393$ at $(5,0.1)$, and $9.5\times 10^{-4}$ at $(10,0.1)$. For fixed $m=5$ and $n=2000$, the mean error decreases from $0.5501$ at $(k,\eta)=(1,0.1)$ to $0.0314$ at $(4,0.1)$, $5.75\times 10^{-4}$ at $(6,0.1)$, and $7.5\times 10^{-5}$ at $(8,0.1)$. Smaller $\eta$ provides an additional gain at fixed $k$. These results are consistent with the theoretical picture that anchor distances determine the coarse partition and spectral information improves resolution within distance buckets.

Further results on quantization calibration and additional diagnostic details are given in Appendix~\ref{app:exp-details}.

\begin{table}[t]
\centering
\scriptsize
\setlength{\tabcolsep}{3pt}
\caption{Representation ablation and anchor strategy on random $3$-regular graphs.}
\label{tab:ablation_repr_anchor}
\begin{tabularx}{\columnwidth}{@{}l>{\raggedright\arraybackslash}Xcccc@{}}
\toprule
Setting & Description & Error & Preimg. & Single. & Coll. \\
\midrule
\multicolumn{6}{c}{\textbf{Representation ablation}} \\
\midrule
NoPE     & no positional encoding & 0.999 & 1166.7 & 0.000 & 1.000 \\
Spectral & spectral only          & 0.933 & 25.00  & 0.261 & 0.198 \\
Distance & anchor distances only  & 0.291 & 1.439  & 0.743 & 0.0009 \\
Full     & distance + spectral    & \textbf{0.079} & \textbf{1.091} & \textbf{0.931} & \textbf{0.0002} \\
\midrule
\multicolumn{6}{c}{\textbf{Anchor strategy (full feature)}} \\
\midrule
Random   & uniform anchors         & 0.094 & 1.112 & 0.920 & 0.0002 \\
Farthest & farthest-point heuristic& 0.083 & 1.097 & 0.926 & 0.0002 \\
Degree   & degree-based heuristic  & \textbf{0.076} & \textbf{1.087} & \textbf{0.934} & \textbf{0.0002} \\
\bottomrule
\end{tabularx}
\end{table}

\begin{figure}[htbp]
    \centering
    \includegraphics[width=0.95\linewidth]{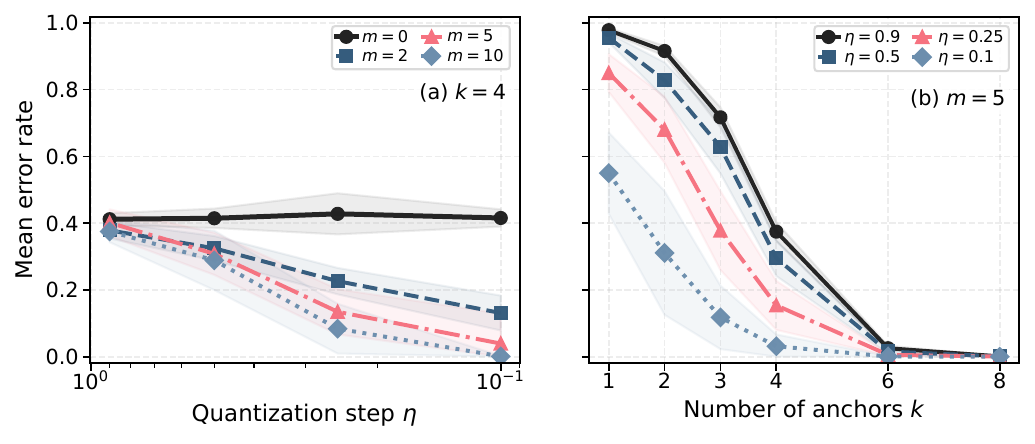}
    \caption{Hyperparameter interaction on random $3$-regular graphs at $n=2000$. 
    Top: mean error versus $\eta$ for fixed $k=4$. 
    Bottom: mean error versus $k$ for fixed $m=5$.}
    \label{fig:ablation_interaction}
\end{figure}

\section{Discussion and Conclusions}
\label{SE9}

\subsection{Discussion}

This work studies node localization under a hybrid graph positional encoding that combines anchor-distance profiles with quantized low-frequency spectral features. 
The main theoretical message is that localization is governed by an information budget jointly determined by the context size $k$, the spectral dimension $m$, and the quantization resolution $\eta$. 
Our converse theory formalizes this through the image size of the observation map $F$, and shows that when the induced code space is too small relative to $n$, the encoding is information-theoretically insufficient to localize a uniformly random node with vanishing error.

The experiments support this picture from two complementary perspectives. 
On random regular graphs, the empirical phase-transition-like crossover is well organized by the theory-guided ratio $\rho_{\mathrm{eng}}(n,k,m,\eta)$, which is consistent with the scaling suggested by the converse theory. 
On the two DDI graphs considered here, identifiability is clearly graph-dependent: DrugBank remains highly colliding under all tested settings, whereas the small Decagon-derived graph becomes nearly injective once sufficiently rich and finely quantized spectral information is included. 
Overall, the proposed encoding can be viewed as a graph-dependent structural resolution mechanism whose effectiveness depends jointly on $(k,m,\eta)$ and on the redundancy level of the underlying graph.

\subsection{Limitations}

Several limitations should be noted. 
First, the present theory is mainly one-sided: it establishes impossibility in a subcritical regime, but does not yet provide matching achievability guarantees or a sharp transition boundary. 
Accordingly, the empirical crossover should be interpreted as a phase-transition-like scaling law aligned with the converse boundary, rather than as a proof of a sharp threshold.

Second, the current analysis is developed for random regular graphs and for a specific orthogonally invariant spectral summary. 
Its present scope therefore does not yet cover broader graph models, more heterogeneous structural regimes, or learned positional representations.

Third, the real-world validation is intentionally narrow. 
We study two representative DDI graph constructions to support an intra-domain graph-dependence claim, but not to claim universality across substantially different graph families.

Fourth, the main-text experiments use a calibrated relative quantization rule, whereas the core theory is formulated in terms of a fixed absolute-resolution quantizer. 
Although the appendix shows that the qualitative trend persists under properly calibrated absolute quantization, a tighter theory-experiment match is still desirable.

Finally, the experiments focus on the identifiability of the encoding itself rather than on a full end-to-end downstream predictor. 
The present results should therefore be interpreted as structural limits and diagnostics for positional encoding, not as a complete account of predictive performance in learned graph models.

\subsection{Future Directions}

A natural next step is to develop matching converse-achievability results, so that the transition can be characterized more sharply and under weaker assumptions. 
It would also be valuable to extend the framework beyond random regular graphs to broader graph families, including weighted, heterogeneous, and dynamic graphs.

Another important direction is to tighten the match between the abstract absolute-resolution quantizer used in the theory and the calibrated quantization rules used in practice. 
On the methodological side, it would also be interesting to replace the current handcrafted encoding by adaptive or learnable variants, such as learnable anchor selection, multi-scale spectral summaries, and task-aware quantization rules.

On the empirical side, a key extension is to connect identifiability more directly to downstream graph learning performance, so that the present analysis can serve not only as a structural criterion, but also as a practical principle for designing positional encodings in graph neural networks and graph Transformers.

\section*{Declarations}

\paragraph{Availability of data and materials}
The source code for the MetaMolGen model developed in this study is publicly available on GitHub at \url{https://github.com/yzz980314/Graph_Node_Identifiability_via_Observation_Maps}. 
The datasets used in this study are publicly available. The DrugBank database can be accessed from \url{https://go.drugbank.com/}. The Decagon polypharmacy dataset is available from the Stanford SNAP project page at \url{https://snap.stanford.edu/decagon/}. A related public data entry for the polypharmacy side-effect network is also available at \url{https://snap.stanford.edu/biodata/datasets/10017/10017-ChChSe-Decagon.html}. The code and preprocessing scripts used to construct the experimental graphs and reproduce the reported results will be made publicly available upon acceptance.

\paragraph{Competing interests}
The authors declare that they have no competing interests.

\paragraph{Funding}
This work was supported by the Major Scientific and Technological Innovation Platform Project of Hunan Province [2024JC1003] and the Graduate Innovation Project of National University of Defense Technology [XJQY2024065].

\paragraph{Authors' contributions}
Z.Y. and Z.X. conceived the methodology, developed the software, conducted the experiments, and wrote the original draft. C.L. supervised the project, acquired funding, and reviewed \& edited the manuscript.

\paragraph{Acknowledgements}
The authors would like to express their sincere gratitude to all the referees for their careful reading and insightful suggestions.

\bibliographystyle{IEEEtran} 
\bibliography{references} 

@article{dwivedi2023benchmark,
  author  = {Vijay Prakash Dwivedi and Chaitanya K. Joshi and Anh Tuan Luu and Thomas Laurent and Yoshua Bengio and Xavier Bresson},
  title   = {Benchmarking Graph Neural Networks},
  journal = {Journal of Machine Learning Research},
  year    = {2023},
  volume  = {24},
  number  = {43},
  pages   = {1--48},
  url     = {http://jmlr.org/papers/v24/22-0567.html}
}

@inproceedings{ying2021graphormer,
  author    = {Chengxuan Ying and Tianle Cai and Shengjie Luo and Shuxin Zheng and Guolin Ke and Di He and Yanming Shen and Tie{-}Yan Liu},
  title     = {Do Transformers Really Perform Badly for Graph Representation?},
  booktitle = {Advances in Neural Information Processing Systems},
  year      = {2021},
  pages     = {28877--28888}
}

@inproceedings{rampasek2022gps,
  author    = {Ladislav Ramp{\'a}{\v{s}}ek and Michael Galkin and Vijay Prakash Dwivedi and Anh Tuan Luu and Guy Wolf and Dominique Beaini},
  title     = {Recipe for a General, Powerful, Scalable Graph Transformer},
  booktitle = {Advances in Neural Information Processing Systems},
  year      = {2022}
}

@inproceedings{li2020distance,
  author    = {Pan Li and Yanbang Wang and Hongwei Wang and Jure Leskovec},
  title     = {Distance Encoding: Design Provably More Powerful Neural Networks for Graph Representation Learning},
  booktitle = {Advances in Neural Information Processing Systems},
  year      = {2020}
}

@inproceedings{lim2023signnet,
  author    = {Derek Lim and Joshua David Robinson and Lingxiao Zhao and Tess E. Smidt and Suvrit Sra and Haggai Maron and Stefanie Jegelka},
  title     = {Sign and Basis Invariant Networks for Spectral Graph Representation Learning},
  booktitle = {International Conference on Learning Representations},
  year      = {2023}
}

@inproceedings{dwivedi2022learnable,
  author    = {Vijay Prakash Dwivedi and Anh Tuan Luu and Thomas Laurent and Yoshua Bengio and Xavier Bresson},
  title     = {Graph Neural Networks with Learnable Structural and Positional Representations},
  booktitle = {International Conference on Learning Representations},
  year      = {2022}
}

@article{gabrielsson2023rewiring,
  author  = {Rickard Br{\"u}el Gabrielsson and Mikhail Yurochkin and Justin Solomon},
  title   = {Rewiring with Positional Encodings for Graph Neural Networks},
  journal = {Transactions on Machine Learning Research},
  year    = {2023}
}

@inproceedings{canturk2024gpse,
  author    = {Semih Cant{\"u}rk and Renming Liu and Olivier Lapointe-Gagn{\'e} and Vincent L{\'e}tourneau and Guy Wolf and Dominique Beaini and Ladislav Ramp{\'a}{\v{s}}ek},
  title     = {Graph Positional and Structural Encoder},
  booktitle = {Proceedings of the 41st International Conference on Machine Learning},
  series    = {Proceedings of Machine Learning Research},
  volume    = {235},
  pages     = {5533--5566},
  year      = {2024},
  publisher = {PMLR},
  url       = {https://proceedings.mlr.press/v235/canturk24a.html}
}

@inproceedings{huang2024stability,
  author    = {Yinan Huang and William Lu and Joshua Robinson and Yu Yang and Muhan Zhang and Stefanie Jegelka and Pan Li},
  title     = {On the Stability of Expressive Positional Encodings for Graphs},
  booktitle = {International Conference on Learning Representations},
  year      = {2024}
}

@inproceedings{li2024graphtransformers,
  author    = {Hongkang Li and Meng Wang and Tengfei Ma and Sijia Liu and Zaixi Zhang and Pin{-}Yu Chen},
  title     = {What Improves the Generalization of Graph Transformers? {A} Theoretical Dive into the Self-attention and Positional Encoding},
  booktitle = {Proceedings of the 41st International Conference on Machine Learning},
  series    = {Proceedings of Machine Learning Research},
  volume    = {235},
  pages     = {28784--28829},
  year      = {2024},
  publisher = {PMLR},
  url       = {https://proceedings.mlr.press/v235/li24bo.html}
}

@inproceedings{thabet2024quantum,
  author    = {Slimane Thabet and Mehdi Djellabi and Igor Olegovich Sokolov and Sachin Kasture and Louis{-}Paul Henry and Lo{\"i}c Henriet},
  title     = {Quantum Positional Encodings for Graph Neural Networks},
  booktitle = {Proceedings of the 41st International Conference on Machine Learning},
  series    = {Proceedings of Machine Learning Research},
  volume    = {235},
  pages     = {47965--47996},
  year      = {2024},
  publisher = {PMLR},
  url       = {https://proceedings.mlr.press/v235/thabet24a.html}
}

@inproceedings{kanatsoulis2025efficientpe,
  author    = {Charilaos I. Kanatsoulis and Evelyn Choi and Stefanie Jegelka and Jure Leskovec and Alejandro Ribeiro},
  title     = {Learning Efficient Positional Encodings with Graph Neural Networks},
  booktitle = {International Conference on Learning Representations},
  year      = {2025},
  url       = {https://openreview.net/forum?id=AWg2tkbydO}
}

@article{groetschla2024benchmarkpe,
  author  = {Florian Gr{\"o}tschla and Jiaqing Xie and Roger Wattenhofer},
  title   = {Benchmarking Positional Encodings for GNNs and Graph Transformers},
  journal = {CoRR},
  volume  = {abs/2411.12732},
  year    = {2024},
  doi     = {10.48550/arXiv.2411.12732}
}

@inproceedings{keriven2023role,
  author    = {Nicolas Keriven and Samuel Vaiter},
  title     = {What Functions Can Graph Neural Networks Compute on Random Graphs? The Role of Positional Encoding},
  booktitle = {Advances in Neural Information Processing Systems},
  year      = {2023}
}

@article{zitnik2018decagon,
  author  = {Marinka Zitnik and Monica Agrawal and Jure Leskovec},
  title   = {Modeling Polypharmacy Side Effects with Graph Convolutional Networks},
  journal = {Bioinformatics},
  year    = {2018},
  volume  = {34},
  number  = {13},
  pages   = {i457--i466},
  doi     = {10.1093/bioinformatics/bty294}
}

@article{yuan2025gtsurvey,
  author  = {Chaohao Yuan and Kangfei Zhao and Ercan Engin Kuruoglu and Liang Wang and Tingyang Xu and Wenbing Huang and Deli Zhao and Hong Cheng and Yu Rong},
  title   = {A Survey of Graph Transformers: Architectures, Theories and Applications},
  journal = {CoRR},
  volume  = {abs/2502.16533},
  year    = {2025},
  doi     = {10.48550/arXiv.2502.16533}
}

@inproceedings{wang2022equivariant,
  author    = {Haorui Wang and Haoteng Yin and Muhan Zhang and Pan Li},
  title     = {Equivariant and Stable Positional Encoding for More Powerful Graph Neural Networks},
  booktitle = {International Conference on Learning Representations},
  year      = {2022}
}

@inproceedings{huang2025directed,
  author    = {Yinan Huang and Haoyu Peter Wang and Pan Li},
  title     = {What Are Good Positional Encodings for Directed Graphs?},
  booktitle = {International Conference on Learning Representations},
  year      = {2025},
  url       = {https://openreview.net/forum?id=s4Wm71LFK4}
}

@inproceedings{zhang2024spectral,
  author    = {Bohang Zhang and Lingxiao Zhao and Haggai Maron},
  title     = {On the Expressive Power of Spectral Invariant Graph Neural Networks},
  booktitle = {Proceedings of the 41st International Conference on Machine Learning},
  series    = {Proceedings of Machine Learning Research},
  volume    = {235},
  pages     = {60496--60526},
  year      = {2024},
  publisher = {PMLR},
  url       = {https://proceedings.mlr.press/v235/zhang24ck.html}
}

@inproceedings{black2024comparing,
  author    = {Mitchell Black and Zhengchao Wan and Gal Mishne and Amir Nayyeri and Yusu Wang},
  title     = {Comparing Graph Transformers via Positional Encodings},
  booktitle = {Proceedings of the 41st International Conference on Machine Learning},
  series    = {Proceedings of Machine Learning Research},
  volume    = {235},
  pages     = {4103--4139},
  year      = {2024},
  publisher = {PMLR},
  url       = {https://proceedings.mlr.press/v235/black24b.html}
}

@article{liu2023m2gcn,
  author  = {Qidong Liu and Enguang Yao and Chaoyue Liu and Xin Zhou and Yafei Li and Mingliang Xu},
  title   = {M2GCN: Multi-modal Graph Convolutional Network for Modeling Polypharmacy Side Effects},
  journal = {Applied Intelligence},
  year    = {2023},
  volume  = {53},
  number  = {6},
  pages   = {6814--6825}
}

@inproceedings{you2019pgnn,
  author    = {Jiaxuan You and Rex Ying and Jure Leskovec},
  title     = {Position-aware Graph Neural Networks},
  booktitle = {Proceedings of the 36th International Conference on Machine Learning},
  series    = {Proceedings of Machine Learning Research},
  volume    = {97},
  pages     = {7134--7143},
  year      = {2019},
  publisher = {PMLR},
  url       = {https://proceedings.mlr.press/v97/you19b.html}
}

@article{zhang2020graphbert,
  author  = {Jiawei Zhang and Haopeng Zhang and Congying Xia and Li Sun},
  title   = {Graph-Bert: Only Attention is Needed for Learning Graph Representations},
  journal = {arXiv preprint arXiv:2001.05140},
  year    = {2020},
  doi     = {10.48550/arXiv.2001.05140}
}

@article{mialon2021graphit,
  author  = {Gr{\'e}goire Mialon and Dexiong Chen and Margot Selosse and Julien Mairal},
  title   = {GraphiT: Encoding Graph Structure in Transformers},
  journal = {arXiv preprint arXiv:2106.05667},
  year    = {2021},
  doi     = {10.48550/arXiv.2106.05667}
}

@inproceedings{kreuzer2021rethinking,
  author    = {Devin Kreuzer and Dominique Beaini and William L. Hamilton and Vincent L{\'e}tourneau and Prudencio Tossou},
  title     = {Rethinking Graph Transformers with Spectral Attention},
  booktitle = {Advances in Neural Information Processing Systems},
  volume    = {34},
  pages     = {21618--21629},
  year      = {2021},
  url       = {https://proceedings.neurips.cc/paper_files/paper/2021/hash/b4fd1d2cb085390fbbadae65e07876a7-Abstract.html}
}

@article{park2022grpe,
  author  = {Wonpyo Park and Woonggi Chang and Donggeon Lee and Juntae Kim and Seung-won Hwang},
  title   = {GRPE: Relative Positional Encoding for Graph Transformer},
  journal = {arXiv preprint arXiv:2201.12787},
  year    = {2022},
  doi     = {10.48550/arXiv.2201.12787}
}

@inproceedings{ma2023grit,
  author    = {Liheng Ma and Chen Lin and Derek Lim and Adriana Romero-Soriano and Puneet K. Dokania and Mark Coates and Philip H. S. Torr and Ser-Nam Lim},
  title     = {Graph Inductive Biases in Transformers without Message Passing},
  booktitle = {Proceedings of the 40th International Conference on Machine Learning},
  series    = {Proceedings of Machine Learning Research},
  volume    = {202},
  pages     = {23321--23337},
  year      = {2023},
  publisher = {PMLR},
  url       = {https://proceedings.mlr.press/v202/ma23c.html}
}

@inproceedings{eliasof2023rfp,
  author    = {Moshe Eliasof and Fabrizio Frasca and Beatrice Bevilacqua and Eran Treister and Gal Chechik and Haggai Maron},
  title     = {Graph Positional Encoding via Random Feature Propagation},
  booktitle = {Proceedings of the 40th International Conference on Machine Learning},
  series    = {Proceedings of Machine Learning Research},
  volume    = {202},
  pages     = {9202--9223},
  year      = {2023},
  publisher = {PMLR},
  url       = {https://proceedings.mlr.press/v202/eliasof23a.html}
}

@inproceedings{he2024sheafpe,
  author    = {Yu He and Cristian Bodnar and Pietro Li{\`o}},
  title     = {Sheaf-based Positional Encodings for Graph Neural Networks},
  booktitle = {Proceedings of the 2nd NeurIPS Workshop on Symmetry and Geometry in Neural Representations},
  series    = {Proceedings of Machine Learning Research},
  volume    = {228},
  pages     = {1--18},
  year      = {2024},
  publisher = {PMLR},
  url       = {https://proceedings.mlr.press/v228/he24a.html}
}

@article{yeom2024spegt,
  author  = {Jeyoon Yeom and Taero Kim and Rakwoo Chang and Kyungwoo Song},
  title   = {Structural and Positional Ensembled Encoding for Graph Transformer},
  journal = {Pattern Recognition Letters},
  volume  = {183},
  pages   = {104--110},
  year    = {2024},
  doi     = {10.1016/j.patrec.2024.05.006}
}

@article{franks2025foundation,
  author  = {Billy Joe Franks and Moshe Eliasof and Semih Cant{\"u}rk and Guy Wolf and Carola-Bibiane Sch{\"o}nlieb and Sophie Fellenz and Marius Kloft},
  title   = {Towards Graph Foundation Models: A Study on the Generalization of Positional and Structural Encodings},
  journal = {Transactions on Machine Learning Research},
  year    = {2025},
  url     = {https://openreview.net/forum?id=mSoDRZXsqj}
}

@article{zhang2025strucgcn,
  author  = {Jie Zhang and Mingxuan Li and Yitai Xu and Hua He and Qun Li and Tao Wang},
  title   = {StrucGCN: Structural Enhanced Graph Convolutional Networks for Graph Embedding},
  journal = {Information Fusion},
  volume  = {117},
  pages   = {102893},
  year    = {2025},
  doi     = {10.1016/j.inffus.2024.102893}
}

@article{yan2026bridging,
  author  = {Zimo Yan and Zheng Xie and Runfan Duan and Chang Liu and Wumei Du},
  title   = {Bridging Distance and Spectral Positional Encodings via Anchor-Based Diffusion Geometry Approximation},
  journal = {arXiv preprint arXiv:2601.04517},
  year    = {2026},
  doi     = {10.48550/arXiv.2601.04517}
}

@article{yan2025resolving,
  author  = {Zimo Yan and Zheng Xie and Chang Liu and Yuan Wang},
  title   = {Resolving Node Identifiability in Graph Neural Processes via Laplacian Spectral Encodings},
  journal = {arXiv preprint arXiv:2511.19037},
  year    = {2025},
  doi     = {10.48550/arXiv.2511.19037}
}

\clearpage

\tableofcontents
\clearpage

\appendix
\newtheorem*{lemmaR}{Lemma}
\newtheorem*{theoremR}{Theorem}
\newtheorem*{propositionR}{Proposition}
\newtheorem*{corollaryR}{Corollary}
\newtheorem*{assumptionR}{Assumption}

\section{Proofs for Section~\ref{SE5}}
\label{app:theory-proofs}

This appendix provides the full proofs for the theoretical results in Section~\ref{SE5}. For convenience, we restate each result in the same order as in the main text before giving its proof.

\subsection{Proof of Lemma~\ref{lem:preimage-identity}}
\label{app:proof-preimage-identity}

\begin{lemmaR}[Lemma~\ref{lem:preimage-identity}]
For any fixed $(G,\mathcal A)$, let
\begin{equation}
Y \coloneqq F_{G,\mathcal A}^{(\eta)}(v_\star).
\end{equation}
Then for any measurable map
\begin{equation}
s_{G,\mathcal A}:\mathcal Y_{k,m}^{(\eta)}(G,\mathcal A)\to V,
\end{equation}
one has
\begin{equation}
\resizebox{\columnwidth}{!}{$
\begin{aligned}
\mathbb P\Big(s_{G,\mathcal A}(Y)=v_\star\ \Big|\ G,\mathcal A\Big)
&=
\mathbb E\Bigg[
\frac{
\mathbf 1\!\left\{s_{G,\mathcal A}(Y)\in \mathcal P_{G,\mathcal A}^{(\eta)}(Y)\right\}
}{
\big|\mathcal P_{G,\mathcal A}^{(\eta)}(Y)\big|
}
\ \Bigg|\ G,\mathcal A
\Bigg]\\
&\le
\mathbb E\Bigg[
\frac{1}{\big|\mathcal P_{G,\mathcal A}^{(\eta)}(v_\star)\big|}
\ \Bigg|\ G,\mathcal A
\Bigg].
\end{aligned}
$}
\end{equation}
Moreover,
\begin{equation}
\mathbb E\Bigg[
\frac{1}{\big|\mathcal P_{G,\mathcal A}^{(\eta)}(v_\star)\big|}
\ \Bigg|\ G,\mathcal A
\Bigg]
=
\frac{\big|\mathrm{Im}(F_{G,\mathcal A}^{(\eta)})\big|}{n}.
\end{equation}
The upper bound is attained by any measurable section
\begin{equation}
\tilde s_{G,\mathcal A}:\mathcal Y_{k,m}^{(\eta)}(G,\mathcal A)\to V
\end{equation}
satisfying
\begin{equation}
F_{G,\mathcal A}^{(\eta)}\big(\tilde s_{G,\mathcal A}(y)\big)=y,
\quad
\forall\, y\in \mathcal Y_{k,m}^{(\eta)}(G,\mathcal A).
\end{equation}
Consequently,
\begin{equation}
\inf_{s_{G,\mathcal A}}\ \mathrm{Err}(s_{G,\mathcal A})
=
1-\frac{\big|\mathrm{Im}(F_{G,\mathcal A}^{(\eta)})\big|}{n}.
\end{equation}
\end{lemmaR}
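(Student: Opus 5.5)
The plan is to condition throughout on a fixed pair $(G,\mathcal A)$, so that $F\coloneqq F_{G,\mathcal A}^{(\eta)}$ is a deterministic map on the finite set $V$. Under the conditioning, $v_\star$ is uniform on $V$ by independence, and measurability is trivial since every map on a finite set is measurable. The image $\mathcal Y\coloneqq\mathcal Y_{k,m}^{(\eta)}(G,\mathcal A)$ is finite, and the preimages $\{\mathcal P(y)\}_{y\in\mathcal Y}$ partition $V$; here $\mathcal P$ is shorthand for $\mathcal P_{G,\mathcal A}^{(\eta)}$.

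The first identity comes from conditioning on $Y$. For each $y\in\mathcal Y$, the event $\{Y=y\}$ equals $\{v_\star\in\mathcal P(y)\}$. Given this event, $v_\star$ is uniform on $\mathcal P(y)$, so
\[
\mathbb P\bigl(v_\star=s(y)\mid Y=y\bigr)=\frac{\mathbf 1\{s(y)\in\mathcal P(y)\}}{|\mathcal P(y)|}.
\]
Averaging over the law of $Y$ gives the claimed equality. The inequality then follows by bounding the indicator by $1$ and noting that $\mathcal P(Y)=\mathcal P(v_\star)$ by the paper's notational convention.

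Next I compute the expectation directly:
\[
\mathbb E\!\left[\frac{1}{|\mathcal P(v_\star)|}\right]=\frac1n\sum_{v\in V}\frac{1}{|\mathcal P(F(v))|}=\frac1n\sum_{y\in\mathcal Y}\sum_{v\in\mathcal P(y)}\frac{1}{|\mathcal P(y)|}=\frac{|\mathcal Y|}{n}.
\]
The middle step regroups the sum over the fibers, which is legitimate because the fibers partition $V$.

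For attainment, a section $\tilde s$ exists because every $y\in\mathcal Y$ has nonempty fiber by definition of the image, so one may choose any point of $\mathcal P(y)$. This choice gives $\tilde s(y)\in\mathcal P(y)$ for every $y$, so the indicator in the first identity equals $1$ and equality holds. Taking complements yields $\inf_s\mathrm{Err}(s)=1-|\mathcal Y|/n$, and the infimum is a minimum.

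There is no real obstacle in this argument. The only points needing care are that the conditioning on $(G,\mathcal A)$ leaves $v_\star$ uniform, which is where the independence hypothesis is used, and the notational identification of $\mathcal P(Y)$ with $\mathcal P(v_\star)$.
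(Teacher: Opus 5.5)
Your proposal is correct and follows the paper's proof essentially step for step. You condition on $Y=y$ so that $v_\star$ is uniform on $\mathcal P(y)$, bound the indicator by $1$, evaluate $\mathbb E[1/|\mathcal P(v_\star)|]$ by grouping over fibers (the paper writes this as $\sum_{y}(N_y/n)(1/N_y)$, the same computation), and attain the bound with a section; your remark that a section exists because every fiber over the image is nonempty is a small point the paper leaves implicit.
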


\begin{proof}
Let
\begin{equation}
Y\coloneqq F_{G,\mathcal A}^{(\eta)}(v_\star).
\end{equation}
Since $v_\star$ is uniformly distributed on $V$, for every
\begin{equation}
y\in \mathrm{Im}(F_{G,\mathcal A}^{(\eta)}),
\end{equation}
the conditional distribution of $v_\star$ given $Y=y$ is uniform on the preimage
\begin{equation}
\mathcal P_{G,\mathcal A}^{(\eta)}(y)
=
\{v\in V:\ F_{G,\mathcal A}^{(\eta)}(v)=y\}.
\end{equation}
Indeed, for any $v\in \mathcal P_{G,\mathcal A}^{(\eta)}(y)$,
\begin{equation}
\begin{aligned}
\mathbb P(v_\star=v\mid Y=y,G,\mathcal A)
&=
\frac{\mathbb P(v_\star=v\mid G,\mathcal A)}{\mathbb P(Y=y\mid G,\mathcal A)}\\
&=
\frac{1/n}{|\mathcal P_{G,\mathcal A}^{(\eta)}(y)|/n}
=
\frac{1}{|\mathcal P_{G,\mathcal A}^{(\eta)}(y)|}.
\end{aligned}
\end{equation}

Now fix any measurable map $s_{G,\mathcal A}$. Conditional on $Y=y$, the event
\begin{equation}
\{s_{G,\mathcal A}(Y)=v_\star\}
\end{equation}
can occur only if
\begin{equation}
s_{G,\mathcal A}(y)\in \mathcal P_{G,\mathcal A}^{(\eta)}(y).
\end{equation}
Hence
\begin{equation}
\resizebox{\columnwidth}{!}{$
\mathbb P\big(s_{G,\mathcal A}(Y)=v_\star\mid Y=y,G,\mathcal A\big)
=
\begin{cases}
\displaystyle \frac{1}{|\mathcal P_{G,\mathcal A}^{(\eta)}(y)|},
& s_{G,\mathcal A}(y)\in \mathcal P_{G,\mathcal A}^{(\eta)}(y),\\[8pt]
0,
& s_{G,\mathcal A}(y)\notin \mathcal P_{G,\mathcal A}^{(\eta)}(y).
\end{cases}
$}
\end{equation}
Equivalently,
\begin{equation}
\mathbb P\big(s_{G,\mathcal A}(Y)=v_\star\mid Y=y,G,\mathcal A\big)
=
\frac{
\mathbf 1\!\left\{s_{G,\mathcal A}(y)\in \mathcal P_{G,\mathcal A}^{(\eta)}(y)\right\}
}{
|\mathcal P_{G,\mathcal A}^{(\eta)}(y)|
}.
\end{equation}

Taking expectation with respect to $Y$ conditional on $(G,\mathcal A)$ gives
\begin{equation}
\resizebox{\columnwidth}{!}{$
\mathbb P\Big(s_{G,\mathcal A}(F_{G,\mathcal A}^{(\eta)}(v_\star))=v_\star\ \Big|\ G,\mathcal A\Big)
=
\mathbb E\Bigg[
\frac{
\mathbf 1\!\left\{s_{G,\mathcal A}(Y)\in \mathcal P_{G,\mathcal A}^{(\eta)}(Y)\right\}
}{
|\mathcal P_{G,\mathcal A}^{(\eta)}(Y)|
}
\ \Bigg|\ G,\mathcal A
\Bigg].
$}
\end{equation}
Since the indicator is at most $1$, we obtain
\begin{equation}
\mathbb P\Big(s_{G,\mathcal A}(F_{G,\mathcal A}^{(\eta)}(v_\star))=v_\star\ \Big|\ G,\mathcal A\Big)
\le
\mathbb E\Bigg[
\frac{1}{|\mathcal P_{G,\mathcal A}^{(\eta)}(Y)|}
\ \Bigg|\ G,\mathcal A
\Bigg].
\end{equation}
Because
\begin{equation}
Y=F_{G,\mathcal A}^{(\eta)}(v_\star),
\end{equation}
we also have
\begin{equation}
|\mathcal P_{G,\mathcal A}^{(\eta)}(Y)|
=
|\mathcal P_{G,\mathcal A}^{(\eta)}(v_\star)|.
\end{equation}
Thus
\begin{equation}
\mathbb P\Big(s_{G,\mathcal A}(F_{G,\mathcal A}^{(\eta)}(v_\star))=v_\star\ \Big|\ G,\mathcal A\Big)
\le
\mathbb E\Bigg[
\frac{1}{|\mathcal P_{G,\mathcal A}^{(\eta)}(v_\star)|}
\ \Bigg|\ G,\mathcal A
\Bigg].
\end{equation}

Next, let
\begin{equation}
\mathcal Y\coloneqq \mathrm{Im}(F_{G,\mathcal A}^{(\eta)}),
\quad
N_y\coloneqq |\mathcal P_{G,\mathcal A}^{(\eta)}(y)|,
\quad y\in\mathcal Y.
\end{equation}
Because $v_\star$ is uniform on $V$, we have
\begin{equation}
\mathbb P(Y=y\mid G,\mathcal A)=\frac{N_y}{n}.
\end{equation}
Hence
\begin{equation}
\begin{aligned}
\mathbb E\Bigg[
\frac{1}{|\mathcal P_{G,\mathcal A}^{(\eta)}(v_\star)|}
\ \Bigg|\ G,\mathcal A
\Bigg]
&=
\sum_{y\in\mathcal Y}
\mathbb P(Y=y\mid G,\mathcal A)\cdot \frac{1}{N_y}\\
&=
\sum_{y\in\mathcal Y}
\frac{N_y}{n}\cdot \frac{1}{N_y}\\
&=
\sum_{y\in\mathcal Y}\frac{1}{n}
=
\frac{|\mathcal Y|}{n}
=
\frac{|\mathrm{Im}(F_{G,\mathcal A}^{(\eta)})|}{n}.
\end{aligned}
\end{equation}

Now let $\tilde s_{G,\mathcal A}$ be any measurable section, namely
\begin{equation}
F_{G,\mathcal A}^{(\eta)}\big(\tilde s_{G,\mathcal A}(y)\big)=y,
\quad
\forall\, y\in\mathcal Y.
\end{equation}
Then
\begin{equation}
\tilde s_{G,\mathcal A}(y)\in \mathcal P_{G,\mathcal A}^{(\eta)}(y),
\quad
\forall\, y\in\mathcal Y,
\end{equation}
so the indicator in the previous display is identically equal to $1$. Therefore
\begin{equation}
\resizebox{\columnwidth}{!}{$
\begin{aligned}
\mathbb P\Big(\tilde s_{G,\mathcal A}(F_{G,\mathcal A}^{(\eta)}(v_\star))=v_\star\ \Big|\ G,\mathcal A\Big)
&=
\mathbb E\Bigg[
\frac{1}{|\mathcal P_{G,\mathcal A}^{(\eta)}(v_\star)|}
\ \Bigg|\ G,\mathcal A
\Bigg]\\
&=
\frac{|\mathrm{Im}(F_{G,\mathcal A}^{(\eta)})|}{n}.
\end{aligned}
$}
\end{equation}
Thus the maximal conditional success probability equals
\begin{equation}
\frac{|\mathrm{Im}(F_{G,\mathcal A}^{(\eta)})|}{n},
\end{equation}
and hence
\begin{equation}
\inf_{s_{G,\mathcal A}}\ \mathrm{Err}(s_{G,\mathcal A})
=
1-\frac{|\mathrm{Im}(F_{G,\mathcal A}^{(\eta)})|}{n}.
\end{equation}
This completes the proof.
\end{proof}

\subsection{Proof of Lemma~\ref{lem:profile-count}}
\label{app:proof-profile-count}

\begin{lemmaR}[Lemma~\ref{lem:profile-count}]
Assume $\mathrm{diam}(G)\le C_{\mathrm{diam}}\log n$. Then
\begin{equation}
\big|\{\mathbf d_{\mathcal A}(v):v\in V\}\big|\le (C\log n)^k,
\end{equation}
and in particular, the number of nonempty buckets $B_{\mathbf t}(G,\mathcal A)$ is at most $(C\log n)^k$.
\end{lemmaR}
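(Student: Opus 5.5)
The plan is a direct counting argument on the range of the anchor-distance map; no probabilistic input is needed beyond the deterministic diameter hypothesis. First I would record that $G$ is connected, since a finite diameter forces this. Then every entry of $\mathbf d_{\mathcal A}(v)$ is a finite nonnegative integer with
\[
\mathrm{SPD}(v,a_i)\le \mathrm{diam}(G)\le C_{\mathrm{diam}}\log n .
\]
So each coordinate of $\mathbf d_{\mathcal A}(v)$ lies in the finite set $\{0,1,\dots,\lfloor C_{\mathrm{diam}}\log n\rfloor\}$, which has at most $C_{\mathrm{diam}}\log n+1$ elements.

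Next I would bound the whole profile set as a product. We have
\[
\{\mathbf d_{\mathcal A}(v):v\in V\}\subset\{0,1,\dots,\lfloor C_{\mathrm{diam}}\log n\rfloor\}^k ,
\]
so the number of distinct profiles is at most $(C_{\mathrm{diam}}\log n+1)^k$. To absorb the $+1$, I would note that $\log n\ge \log 2$ for $n\ge 2$. Hence $C_{\mathrm{diam}}\log n+1\le (C_{\mathrm{diam}}+1/\log 2)\log n$. Setting $C\coloneqq C_{\mathrm{diam}}+1/\log 2$ gives the bound $(C\log n)^k$, with $C$ depending only on $C_{\mathrm{diam}}$, as claimed in Lemma~\ref{lem:image-control}.

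For the bucket statement, the nonempty buckets $B_{\mathbf t}(G,\mathcal A)$ correspond bijectively to the attained profiles $\mathbf t\in\{\mathbf d_{\mathcal A}(v):v\in V\}$. The buckets partition $V$, so their number equals the number of distinct profiles, and the same bound follows.

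The only delicate point is handling the constant. The degenerate cases $n=1$ (where $\log n=0$) and a disconnected graph (infinite distances) must be excluded, and both are ruled out by the hypothesis together with $n\ge 2$. If desired, one could sharpen the bound using the fact that the anchor coordinate $a_i$ itself has a zero entry, but this is unnecessary.
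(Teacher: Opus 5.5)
Your proposal is correct and follows the paper's proof essentially verbatim: each coordinate of $\mathbf d_{\mathcal A}(v)$ lies in $\{0,1,\dots,\lfloor C_{\mathrm{diam}}\log n\rfloor\}$, so the profile set sits in the $k$-fold product, and the $+1$ is absorbed into $C$. Your explicit choice $C=C_{\mathrm{diam}}+1/\log 2$ (valid for $n\ge 2$) makes precise the step the paper leaves as ``absorbing the additive constant into $C$.''
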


\begin{proof}
Assume
\begin{equation}
\mathrm{diam}(G)\le C_{\mathrm{diam}}\log n.
\end{equation}
Fix any vertex $v\in V$. For each anchor $a_i\in\mathcal A$, the shortest-path distance satisfies
\begin{equation}
0\le \mathrm{SPD}(v,a_i)\le \mathrm{diam}(G)\le C_{\mathrm{diam}}\log n.
\end{equation}
Therefore each coordinate of the anchor-distance vector
\begin{equation}
\mathbf d_{\mathcal A}(v)=\big(\mathrm{SPD}(v,a_1),\dots,\mathrm{SPD}(v,a_k)\big)
\end{equation}
takes values in the finite set
\begin{equation}
\{0,1,\dots,\lfloor C_{\mathrm{diam}}\log n\rfloor\}.
\end{equation}
Hence the total number of possible $k$-dimensional distance vectors is at most
\begin{equation}
\big(\lfloor C_{\mathrm{diam}}\log n\rfloor+1\big)^k.
\end{equation}
Absorbing the additive constant into $C$, we obtain
\begin{equation}
\big|\{\mathbf d_{\mathcal A}(v):v\in V\}\big|
\le
(C\log n)^k.
\end{equation}
Since each nonempty bucket $B_{\mathbf t}(G,\mathcal A)$ corresponds to one attained vector $\mathbf t$, the number of nonempty buckets is bounded by the same quantity. This completes the proof.
\end{proof}

\subsection{Proof of Lemma~\ref{lem:image-control}}
\label{app:lem:image-control}
\begin{lemmaR}[Lemma~\ref{lem:image-control}]
Assume $\mathrm{diam}(G)\le C_{\mathrm{diam}}\log n$ and that Assumption~\ref{ass:entropy} holds on $\mathcal E_{\mathrm{ent}}$. Then on the event $\{\mathrm{diam}(G)\le C_{\mathrm{diam}}\log n\}\cap \mathcal E_{\mathrm{ent}}$,
\begin{equation}
\begin{aligned}
\big|\mathrm{Im}(F_{G,\mathcal A}^{(\eta)})\big|
&\le
(C\log n)^k\cdot \big|Q_\eta(\mathcal S_m(V))\big| \\
&\le
(C\log n)^k\cdot \Big(\frac{C_{\mathrm{ent}}}{\eta}\Big)^{c_{\mathrm{ent}} m},
\end{aligned}
\end{equation}
where $C=C(r)>0$ depends only on $C_{\mathrm{diam}}$.
\end{lemmaR}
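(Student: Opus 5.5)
The plan is to bound the image by an injection into a product set. Every value of $F_{G,\mathcal A}^{(\eta)}$ has the form $(\mathbf d_{\mathcal A}(v), Q_\eta(\mathcal S_m(v)))$. Its first coordinate lies in the set of attained distance profiles $\{\mathbf d_{\mathcal A}(v):v\in V\}$, and its second lies in $Q_\eta(\mathcal S_m(V))$. Hence
\[
\mathrm{Im}(F_{G,\mathcal A}^{(\eta)})\subseteq \{\mathbf d_{\mathcal A}(v):v\in V\}\times Q_\eta(\mathcal S_m(V)),
\]
and so $|\mathrm{Im}(F_{G,\mathcal A}^{(\eta)})|\le |\{\mathbf d_{\mathcal A}(v):v\in V\}|\cdot|Q_\eta(\mathcal S_m(V))|$. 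This step is purely set-theoretic and holds for every $(G,\mathcal A)$.

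Equivalently, one can argue bucketwise. Write $\mathrm{Im}(F_{G,\mathcal A}^{(\eta)})$ as the disjoint union over attained $\mathbf t$ of $\{\mathbf t\}\times Q_\eta(\mathcal S_m(B_{\mathbf t}(G,\mathcal A)))$. Then note that each $Q_\eta(\mathcal S_m(B_{\mathbf t}))\subseteq Q_\eta(\mathcal S_m(V))$. This formulation is worth recording because it is the one the later refinement in Proposition~\ref{prop:avg-collision-refinement} sharpens.

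On the event $\{\mathrm{diam}(G)\le C_{\mathrm{diam}}\log n\}$, Lemma~\ref{lem:profile-count} gives $|\{\mathbf d_{\mathcal A}(v):v\in V\}|\le (C\log n)^k$. Here $C$ comes from $(\lfloor C_{\mathrm{diam}}\log n\rfloor+1)\le C\log n$ for $n\ge 3$, say $C=C_{\mathrm{diam}}+1$, so it depends only on $C_{\mathrm{diam}}$, which in turn depends on $r$. This yields the first inequality. On $\mathcal E_{\mathrm{ent}}$, Assumption~\ref{ass:entropy} bounds $|Q_\eta(\mathcal S_m(V))|$ by $(C_{\mathrm{ent}}/\eta)^{c_{\mathrm{ent}}m}$. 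Multiplying the two bounds on the intersection of the events gives the second inequality.

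There is no real obstacle. The only points needing care are bookkeeping: the constant $C$ must be uniform in $n$ and $k$, with the additive $+1$ absorbed for $n$ large enough that $\log n\ge 1$; and both bounds must be invoked on the same intersected event.
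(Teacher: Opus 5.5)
Your proposal is correct and follows the paper's argument: the paper also works on the intersected event, bounds the number of attained distance buckets by $(C\log n)^k$ via Lemma~\ref{lem:profile-count}, notes that each bucket contributes at most $|Q_\eta(\mathcal S_m(V))|$ observation values because only the spectral code varies inside it, and then applies Assumption~\ref{ass:entropy}. Your product-set inclusion is a compressed form of that bucketwise sum, which you also record, and your explicit choice $C=C_{\mathrm{diam}}+1$ for $n\ge 3$ is a valid way to absorb the additive constant that the paper leaves implicit.
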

The proof combines the distance-bucket count with the quantized spectral-code bound.

\begin{proof}
Fix $(G,\mathcal A)$ on the event
\begin{equation}
\{\mathrm{diam}(G)\le C_{\mathrm{diam}}\log n\}\cap \mathcal E_{\mathrm{ent}}.
\end{equation}
By Lemma~\ref{lem:profile-count}, the number of attained distance vectors is at most $(C\log n)^k$.

Fix one such distance vector $\mathbf t$ and consider the corresponding bucket
\begin{equation}
B_{\mathbf t}(G,\mathcal A)=\{v\in V:\ \mathbf d_{\mathcal A}(v)=\mathbf t\}.
\end{equation}
Inside this bucket, the only part of the observation map that can vary is the quantized spectral term
\begin{equation}
Q_\eta(\mathcal S_m(v)).
\end{equation}
Therefore the number of distinct observation values attained inside $B_{\mathbf t}(G,\mathcal A)$ is at most
\begin{equation}
\big|Q_\eta(\mathcal S_m(V))\big|.
\end{equation}
Summing over all attained distance buckets, we obtain
\begin{equation}
\big|\mathrm{Im}(F_{G,\mathcal A}^{(\eta)})\big|
\le
(C\log n)^k\cdot \big|Q_\eta(\mathcal S_m(V))\big|.
\end{equation}
Finally, Assumption~\ref{ass:entropy} gives
\begin{equation}
\big|Q_\eta(\mathcal S_m(V))\big|
\le
\Big(\frac{C_{\mathrm{ent}}}{\eta}\Big)^{c_{\mathrm{ent}} m},
\end{equation}
hence
\begin{equation}
\big|\mathrm{Im}(F_{G,\mathcal A}^{(\eta)})\big|
\le
(C\log n)^k\cdot \Big(\frac{C_{\mathrm{ent}}}{\eta}\Big)^{c_{\mathrm{ent}} m}.
\end{equation}
This completes the proof.
\end{proof}

\subsection{Proof of Proposition~\ref{prop:entropy-energy}}
\label{app:proof-entropy-energy}

\begin{propositionR}[Proposition~\ref{prop:entropy-energy}]
For the embedding $\mathcal S_m(v)=(\Phi_m(v,j)^2)_{j\in[m]}$ and $\eta\in(0,1)$, one has
\begin{equation}
\big|Q_\eta(\mathcal S_m(V))\big|
\le
\big|Q_\eta([0,1]^m)\big|
\le
\Big(\frac{2}{\eta}\Big)^m.
\end{equation}
In particular, Assumption~\ref{ass:entropy} holds with $C_{\mathrm{ent}}=2$, $c_{\mathrm{ent}}=1$, and $\mathcal E_{\mathrm{ent}}$ equal to the whole probability space.
\end{propositionR}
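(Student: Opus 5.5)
The plan is to prove the two inequalities separately. The first follows from a containment $\mathcal S_m(V)\subset[0,1]^m$. The second is a direct count of the quantization cells of the floor quantizer $(Q_\eta(x))_j=\eta\lfloor x_j/\eta\rfloor$ that meet $[0,1]$. The final sentence, verifying Assumption~\ref{ass:entropy}, is then immediate because no randomness enters the bound.

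\textbf{Step 1 (containment).} Each column $\Phi_m(\cdot,j)$ is a unit vector in $\mathbb R^n$, since $\Phi_m$ has orthonormal columns. Hence
\[
\Phi_m(v,j)^2\le \sum_{u\in V}\Phi_m(u,j)^2=1,
\]
and trivially $\Phi_m(v,j)^2\ge 0$. So $\mathcal S_m(v)\in[0,1]^m$ for every $v$, and monotonicity of images gives $Q_\eta(\mathcal S_m(V))\subset Q_\eta([0,1]^m)$. Note that the rows of $\Phi_m$ need not be unit vectors, so it is the column normalization that must be used here. This holds for any choice of orthonormal basis, and the value of $\mathcal S_m$ itself plays no role in the counting.

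\textbf{Step 2 (cell count).} Since $Q_\eta$ acts coordinatewise, $Q_\eta([0,1]^m)=Q_\eta([0,1])^m$. For $x\in[0,1]$, the integer $\lfloor x/\eta\rfloor$ ranges over $\{0,1,\dots,\lfloor 1/\eta\rfloor\}$, a set of $\lfloor 1/\eta\rfloor+1$ values. Because $\eta\in(0,1)$, we have $1/\eta>1$, so
\[
\lfloor 1/\eta\rfloor+1\le \tfrac1\eta+1<\tfrac2\eta .
\]
Raising to the $m$-th power gives $|Q_\eta([0,1]^m)|\le(2/\eta)^m$. If the paper intends some other deterministic coordinatewise quantizer with bin width $\eta$ (for example, rounding to the nearest point of $\eta\mathbb Z$), the same argument works: at most $\lceil 1/\eta\rceil+1\le 2/\eta+\dots$ bins meet $[0,1]$. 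The only point to check is that the constant $2$ still suffices. For rounding to the nearest point the count is $\lfloor 1/\eta+1/2\rfloor+1\le 1/\eta+3/2$, which could exceed $2/\eta$ when $\eta$ is close to $1$. I would therefore state the proof for the floor quantizer given in the formulation, and remark that other quantizers only change $C_{\mathrm{ent}}$.

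\textbf{Step 3 (Assumption~\ref{ass:entropy}).} The bound from Steps 1 and 2 holds deterministically for every graph $G$. We may therefore take $C_{\mathrm{ent}}=2\ge1$, $c_{\mathrm{ent}}=1$, and $\mathcal E_{\mathrm{ent}}$ equal to the whole space, which has probability $1=1-o(1)$.

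There is no real obstacle in this argument. The only delicate points are using column rather than row normalization in Step 1, and handling the boundary value $x=1$ in the floor count in Step 2. Both are dealt with above by the $+1$ term and the strict inequality $1/\eta>1$.
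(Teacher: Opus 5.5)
Your proof is correct and follows the paper's argument: you show $\mathcal S_m(V)\subset[0,1]^m$ from orthonormality, then count at most $\lfloor 1/\eta\rfloor+1\le 2/\eta$ bins per coordinate. Your use of column normalization makes explicit what the paper simply asserts. One correction to your side remark: for round-to-nearest, the exact count $\lfloor 1/\eta+1/2\rfloor+1$ never exceeds $2/\eta$ for $\eta\in(0,1)$ (it equals $2$ when $1/\eta<3/2$, and is at most $1/\eta+3/2\le 2/\eta$ otherwise), so the constant $2$ survives there too; only a grid with an arbitrary offset can force a larger $C_{\mathrm{ent}}$.
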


\begin{proof}
For the energy embedding,
\begin{equation}
\mathcal S_m(v)=\big(\Phi_m(v,1)^2,\dots,\Phi_m(v,m)^2\big)\in\mathbb R^m.
\end{equation}
Since $\Phi_m$ is an orthonormal eigenbasis matrix, each entry satisfies
\begin{equation}
|\Phi_m(v,j)|\le 1,
\end{equation}
and therefore each squared coordinate lies in $[0,1]$. Hence
\begin{equation}
\mathcal S_m(V)\subset [0,1]^m.
\end{equation}
Thus
\begin{equation}
\big|Q_\eta(\mathcal S_m(V))\big|
\le
\big|Q_\eta([0,1]^m)\big|.
\end{equation}

Under coordinatewise quantization with bin width $\eta$, each coordinate interval $[0,1]$ intersects at most
\begin{equation}
\left\lfloor \frac{1}{\eta}\right\rfloor+1
\end{equation}
quantization bins. Since $\eta\in(0,1)$, we have
\begin{equation}
\left\lfloor \frac{1}{\eta}\right\rfloor+1 \le \frac{2}{\eta}.
\end{equation}
Therefore
\begin{equation}
\big|Q_\eta([0,1]^m)\big|
\le
\left(\left\lfloor \frac{1}{\eta}\right\rfloor+1\right)^m
\le
\Big(\frac{2}{\eta}\Big)^m.
\end{equation}
This proves the claim.
\end{proof}

\subsection{Proof of Theorem~\ref{thm:threshold}}
\label{app:proof-threshold}

\begin{theoremR}[Theorem~\ref{thm:threshold}]
Fix $r\ge 3$ and let $G\sim\mathcal G_{n,r}$. Let $v_\star\sim\mathrm{Unif}(V)$ and let $\mathcal A=\{a_1,\dots,a_k\}$ be sampled uniformly without replacement from $V$, independently of $v_\star$ and $G$. Fix $m\ge 1$ and $\eta\in(0,1)$. Assume that the diameter bound $\mathrm{diam}(G)\le C_{\mathrm{diam}}\log n$ holds with probability $1-o(1)$ and that Assumption~\ref{ass:entropy} holds for the spectral embedding $\mathcal S_m$ at resolution $\eta$.
If there exists $\varepsilon_0\in(0,1)$ such that
\begin{equation}
k\log\log n + c_{\mathrm{ent}} m \log\!\Big(\frac{C_{\mathrm{ent}}}{\eta}\Big)\le (1-\varepsilon_0)\log n,
\end{equation}
then there exists $\varepsilon_1\in(0,\varepsilon_0)$ such that with probability $1-o(1)$ over $(G,\mathcal A)$,
\begin{equation}
\inf_{s_{G,\mathcal A}}\ \mathbb P\Big(s_{G,\mathcal A}(F_{G,\mathcal A}^{(\eta)}(v_\star))\neq v_\star\ \Big|\ G,\mathcal A\Big)
\ge
1-n^{-\varepsilon_1}.
\end{equation}
In particular, the left-hand side is bounded below by any fixed constant $\delta\in(0,1)$ for all sufficiently large $n$.
\end{theoremR}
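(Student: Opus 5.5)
The plan is to combine the exact identity of Lemma~\ref{lem:preimage-identity} with the image-size bound of Lemma~\ref{lem:image-control}, and then convert the budget condition into a polynomial decay of $|\mathrm{Im}(F_{G,\mathcal A}^{(\eta)})|/n$.

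First, define the good event
\[
\mathcal E\coloneqq\{\mathrm{diam}(G)\le C_{\mathrm{diam}}\log n\}\cap\mathcal E_{\mathrm{ent}}.
\]
Each event holds with probability $1-o(1)$, so a union bound gives $\mathbb P(\mathcal E)=1-o(1)$. This probability is over $(G,\mathcal A)$; since $\mathcal E$ depends only on $G$, the anchor sampling plays no role. On $\mathcal E$, Lemma~\ref{lem:preimage-identity} gives the minimal error exactly as $1-|\mathrm{Im}(F)|/n$. Lemma~\ref{lem:image-control} bounds the image by $(C\log n)^k (C_{\mathrm{ent}}/\eta)^{c_{\mathrm{ent}} m}$. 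It therefore suffices to show
\[
(C\log n)^k\Big(\frac{C_{\mathrm{ent}}}{\eta}\Big)^{c_{\mathrm{ent}}m}\le n^{1-\varepsilon_1}
\]
for large $n$ and some $\varepsilon_1\in(0,\varepsilon_0)$.

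Second, take logarithms. The left side has logarithm
\[
k\log C + k\log\log n + c_{\mathrm{ent}}m\log(C_{\mathrm{ent}}/\eta),
\]
which by hypothesis is at most $k\log C+(1-\varepsilon_0)\log n$. The only real obstacle is the extra term $k\log C$, which the budget condition does not directly control.

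To handle it, I will use the hypothesis once more: $k\log\log n\le(1-\varepsilon_0)\log n$, so $k\le \log n/\log\log n$. Hence
\[
k\log C\le \frac{\log C}{\log\log n}\,\log n=o(\log n).
\]
(If $C\le 1$, the term is nonpositive and there is nothing to do.) Choose $\varepsilon_1=\varepsilon_0/2$. Then for $n$ large enough that $\log C/\log\log n\le\varepsilon_0/2$, the total logarithm is at most $(1-\varepsilon_1)\log n$. This gives $|\mathrm{Im}(F)|/n\le n^{-\varepsilon_1}$, and so the minimal error is at least $1-n^{-\varepsilon_1}$ on $\mathcal E$.

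Finally, the last claim follows because $n^{-\varepsilon_1}\to0$: for any fixed $\delta\in(0,1)$, the bound $1-n^{-\varepsilon_1}\ge\delta$ holds once $n$ is large.

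One point needs care. If $k$, $m$ and $\eta$ are allowed to depend on $n$, the argument above already accommodates this, since it used only the uniform budget inequality and not the fixedness of $k$. The constant $C$ from Lemma~\ref{lem:profile-count} depends only on $C_{\mathrm{diam}}$, so the threshold on $n$ is uniform.
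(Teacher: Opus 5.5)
Your proposal is correct and follows the paper's proof essentially step for step: you intersect the diameter event with $\mathcal E_{\mathrm{ent}}$, combine Lemma~\ref{lem:preimage-identity} with Lemma~\ref{lem:image-control}, and absorb the stray $k\log C$ term by using the budget inequality to get $k\le \log n/\log\log n$, hence $k\log C=o(\log n)$. Your explicit choice $\varepsilon_1=\varepsilon_0/2$ is simply a concrete instance of the paper's ``there exists $\varepsilon_1\in(0,\varepsilon_0)$''.
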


\begin{proof}
Fix $\varepsilon_0$ as in the statement, and consider the event
\begin{equation}
\mathcal E
\coloneqq
\{\mathrm{diam}(G)\le C_{\mathrm{diam}}\log n\}\cap \mathcal E_{\mathrm{ent}}.
\end{equation}
By assumption,
\begin{equation}
\mathbb P(\mathcal E)=1-o(1).
\end{equation}
On the event $\mathcal E$, Lemma~\ref{lem:image-control} gives
\begin{equation}
\big|\mathrm{Im}(F_{G,\mathcal A}^{(\eta)})\big|
\le
(C\log n)^k\cdot \Big(\frac{C_{\mathrm{ent}}}{\eta}\Big)^{c_{\mathrm{ent}} m}.
\end{equation}
Substituting this bound into Lemma~\ref{lem:preimage-identity}, we get
\begin{equation}
\begin{aligned}
\inf_{s_{G,\mathcal A}}\ \mathrm{Err}(s_{G,\mathcal A})
&=
1-\frac{\big|\mathrm{Im}(F_{G,\mathcal A}^{(\eta)})\big|}{n}\\
&\ge
1-
\frac{(C\log n)^k\cdot \big(\frac{C_{\mathrm{ent}}}{\eta}\big)^{c_{\mathrm{ent}} m}}{n}.
\end{aligned}
\end{equation}

We now estimate the numerator. Taking logarithms,
\begin{equation}
\resizebox{\columnwidth}{!}{$
\log\left((C\log n)^k\cdot \Big(\frac{C_{\mathrm{ent}}}{\eta}\Big)^{c_{\mathrm{ent}} m}\right)
=
k\log(C\log n)+c_{\mathrm{ent}}m\log\Big(\frac{C_{\mathrm{ent}}}{\eta}\Big).
$}
\end{equation}
Since
\begin{equation}
k\log\log n \le (1-\varepsilon_0)\log n,
\end{equation}
we have
\begin{equation}
k \le \frac{(1-\varepsilon_0)\log n}{\log\log n}.
\end{equation}
Therefore
\begin{equation}
k\log C \le \frac{(1-\varepsilon_0)\log C}{\log\log n}\log n = o(\log n).
\end{equation}
Hence there exists $\varepsilon_1\in(0,\varepsilon_0)$ such that for all sufficiently large $n$,
\begin{equation}
k\log(C\log n)+c_{\mathrm{ent}}m\log\Big(\frac{C_{\mathrm{ent}}}{\eta}\Big)
\le (1-\varepsilon_1)\log n.
\end{equation}
 Thus the hypothesis implies that
\begin{equation}
(C\log n)^k\cdot \Big(\frac{C_{\mathrm{ent}}}{\eta}\Big)^{c_{\mathrm{ent}} m}
\le
n^{1-\varepsilon_1}
\end{equation}
for all sufficiently large $n$. Therefore
\begin{equation}
\inf_{s_{G,\mathcal A}}\ \mathrm{Err}(s_{G,\mathcal A})
\ge
1-\frac{n^{1-\varepsilon}}{n}
=
1-n^{-\varepsilon_1}.
\end{equation}
By definition of $\mathrm{Err}(s_{G,\mathcal A})$, this is exactly
\begin{equation}
\inf_{s_{G,\mathcal A}}\ \mathbb P\Big(s_{G,\mathcal A}(F_{G,\mathcal A}^{(\eta)}(v_\star))\neq v_\star\ \Big|\ G,\mathcal A\Big)
\ge
1-n^{-\varepsilon_1}.
\end{equation}
Since $\mathbb P(\mathcal E)=1-o(1)$, the bound holds with probability $1-o(1)$ over $(G,\mathcal A)$. This completes the proof.
\end{proof}

\subsection{Proof of Proposition~\ref{prop:avg-collision-refinement}}
\label{app:proof-avg-collision-refinement}

\begin{propositionR}[Proposition~\ref{prop:avg-collision-refinement} (restated)]
Assume $\mathrm{diam}(G)\le C_{\mathrm{diam}}\log n$. Suppose that for every attained distance vector $\mathbf t$, one of the following holds:
\begin{itemize}
    \item $|B_{\mathbf t}(G,\mathcal A)|=1$; or
    \item $|B_{\mathbf t}(G,\mathcal A)|\ge 2$, and
    \begin{equation}
    \Coll_{\eta,m}^{Q}(B_{\mathbf t}(G,\mathcal A))
    \ge
    c_0\,\eta^{c_1 m},
    \end{equation}
    together with
    \begin{equation}
    \Bal_{\eta,m}(B_{\mathbf t}(G,\mathcal A))\le \beta,
    \end{equation}
\end{itemize}
for some constants $c_0,c_1>0$ and $\beta\ge 1$. Then
\begin{equation}
\big|\mathrm{Im}(F_{G,\mathcal A}^{(\eta)})\big|
\le
D(G,\mathcal A)\cdot \Big(1+\frac{\beta}{c_0\,\eta^{c_1 m}}\Big).
\end{equation}
In particular,
\begin{equation}
\big|\mathrm{Im}(F_{G,\mathcal A}^{(\eta)})\big|
\le
(C\log n)^k\cdot \Big(1+\frac{\beta}{c_0\,\eta^{c_1 m}}\Big).
\end{equation}
\end{propositionR}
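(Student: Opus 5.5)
Within a single distance bucket $B=B_{\mathbf t}(G,\mathcal A)$ the distance coordinate is constant. The number of distinct observations in $B$ is therefore exactly $M_{\eta,m}(B)$. This gives
\[
\big|\mathrm{Im}(F_{G,\mathcal A}^{(\eta)})\big|=\sum_{\mathbf t}M_{\eta,m}(B_{\mathbf t}(G,\mathcal A)),
\]
where the sum runs over the $D(G,\mathcal A)$ attained vectors. It then suffices to show that every bucket satisfies $M_{\eta,m}(B)\le 1+\beta/(c_0\eta^{c_1 m})$. Summing over buckets gives the first inequality, and Lemma~\ref{lem:profile-count} (using the diameter hypothesis) replaces $D(G,\mathcal A)$ by $(C\log n)^k$ to give the second. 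Singleton buckets are immediate, since $M=1$.

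For a non-singleton bucket, write $N_z=N_{\eta,m}(B;z)$, $M=M_{\eta,m}(B)$ and $N_{\max}=\max_z N_z$. The plan is to express the collision density through occupancies. The number of ordered colliding pairs is $\sum_z N_z(N_z-1)$, so
\[
\Coll^{Q}_{\eta,m}(B)=\frac{\sum_z N_z(N_z-1)}{|B|(|B|-1)}\le \frac{\sum_z N_z^2}{|B|^2}\cdot\frac{|B|}{|B|-1}.
\]
A cleaner route avoids the factor $|B|/(|B|-1)$. Since $N_z-1\le N_{\max}-1$ for every $z$,
\[
\sum_z N_z(N_z-1)\le (N_{\max}-1)\sum_z N_z=(N_{\max}-1)|B|,
\]
and hence $\Coll\le (N_{\max}-1)/(|B|-1)$. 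The balance hypothesis gives $N_{\max}\le \beta|B|/M$.

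Combining these with the collision lower bound yields
\[
c_0\eta^{c_1 m}\le \frac{\beta|B|/M-1}{|B|-1}.
\]
Since $|B|\ge 2$, we have $|B|-1\ge |B|/2$, so the right-hand side is at most $2\beta/M$. This would give $M\le 2\beta/(c_0\eta^{c_1 m})$, which is off by a factor of $2$ from the target.

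The main obstacle is recovering the exact constant $1+\beta/(c_0\eta^{c_1m})$ rather than a factor-$2$ version, so I would work with the inequality before weakening it. Set $x=c_0\eta^{c_1m}$. From $x(|B|-1)\le \beta|B|/M-1$ we get
\[
M\le \frac{\beta|B|}{1+x(|B|-1)}.
\]
Two regimes then need checking. If $x\ge 1$, collision density is at most one, so every vertex of $B$ shares a code and $M=1$; this case is also covered directly. Otherwise the target amounts to
\[
\beta|B|\le \Big(1+\frac{\beta}{x}\Big)\big(1+x(|B|-1)\big).
\]
Expanding, the right-hand side equals
\[
1+x(|B|-1)+\frac{\beta}{x}+\beta(|B|-1)=\beta|B|+\Big(1-\beta+x(|B|-1)+\frac{\beta}{x}\Big).
\]
Because $x<1$ and $\beta\ge 1$, we have $\beta/x\ge\beta$, so the bracketed remainder is nonnegative. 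This verifies the target inequality, and I would present this algebraic check as the crux of the argument. Summing the per-bucket bound over all buckets completes the proof.
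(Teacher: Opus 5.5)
Your proposal is correct and follows essentially the same route as the paper. Both arguments write the collisions as $\sum_z N_z(N_z-1)$, bound this by $\beta|B|^2/M-|B|$ via the balance condition, reach $M\le \beta|B|/\bigl(1+x(|B|-1)\bigr)$ with $x=c_0\eta^{c_1 m}$, and then sum over the $D(G,\mathcal A)$ buckets. The only difference is the last step. The paper notes that $\mathrm{Coll}\le 1$ forces $x\le 1$, so $1+x(|B|-1)=x|B|+(1-x)\ge x|B|$ and hence $M\le\beta/x$ directly. That makes your factor-$2$ detour, case split and expansion unnecessary, though your check of the weaker per-bucket bound $1+\beta/x$ is also valid.
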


\begin{proof}
Fix $(G,\mathcal A)$ under the assumptions of the proposition. Let
\begin{equation}
\mathcal T(G,\mathcal A)\coloneqq \{\mathbf d_{\mathcal A}(v):v\in V\}
\end{equation}
denote the set of attained distance vectors, so that
\begin{equation}
D(G,\mathcal A)=|\mathcal T(G,\mathcal A)|.
\end{equation}

For each $\mathbf t\in \mathcal T(G,\mathcal A)$, write
\begin{equation}
B_{\mathbf t}\coloneqq B_{\mathbf t}(G,\mathcal A)
=
\{v\in V:\ \mathbf d_{\mathcal A}(v)=\mathbf t\},
\end{equation}
and let
\begin{equation}
b_{\mathbf t}\coloneqq |B_{\mathbf t}|.
\end{equation}
Also let
\begin{equation}
Q_\eta(\mathcal S_m(B_{\mathbf t}))
=
\{Q_\eta(\mathcal S_m(v)):\ v\in B_{\mathbf t}\}
\end{equation}
be the set of quantized spectral codes attained inside the bucket $B_{\mathbf t}$, and define
\begin{equation}
M_{\mathbf t}\coloneqq M_{\eta,m}(B_{\mathbf t})
=
\big|Q_\eta(\mathcal S_m(B_{\mathbf t}))\big|.
\end{equation}

For each
\begin{equation}
z\in Q_\eta(\mathcal S_m(B_{\mathbf t})),
\end{equation}
define the occupancy
\begin{equation}
N_{\mathbf t}(z)\coloneqq N_{\eta,m}(B_{\mathbf t};z)
=
\big|\{v\in B_{\mathbf t}:\ Q_\eta(\mathcal S_m(v))=z\}\big|.
\end{equation}
Then by construction,
\begin{equation}
\sum_{z\in Q_\eta(\mathcal S_m(B_{\mathbf t}))} N_{\mathbf t}(z)=b_{\mathbf t}.
\end{equation}

We now bound $M_{\mathbf t}$ for each bucket separately.

\medskip
\noindent
\textbf{Case 1:} $b_{\mathbf t}=1$.

In this case the bucket contains exactly one vertex, so it contributes exactly one observation value. More precisely,
\begin{equation}
M_{\mathbf t}=1.
\end{equation}

\medskip
\noindent
\textbf{Case 2:} $b_{\mathbf t}\ge 2$.

In this case, the assumptions of the proposition give
\begin{equation}
\Coll_{\eta,m}^{Q}(B_{\mathbf t})
\ge
c_0\,\eta^{c_1 m}
\end{equation}
and
\begin{equation}
\Bal_{\eta,m}(B_{\mathbf t})\le \beta.
\end{equation}

We first rewrite the balance condition. By definition,
\begin{equation}
\Bal_{\eta,m}(B_{\mathbf t})
=
\frac{M_{\mathbf t}}{b_{\mathbf t}}
\cdot
\max_{z\in Q_\eta(\mathcal S_m(B_{\mathbf t}))} N_{\mathbf t}(z).
\end{equation}
Hence
\begin{equation}
\frac{M_{\mathbf t}}{b_{\mathbf t}}
\cdot
\max_{z} N_{\mathbf t}(z)
\le
\beta,
\end{equation}
which implies
\begin{equation}
\max_{z} N_{\mathbf t}(z)
\le
\beta\,\frac{b_{\mathbf t}}{M_{\mathbf t}}.
\end{equation}

Next, we bound the quadratic occupancy sum:
\begin{equation}
\sum_z N_{\mathbf t}(z)^2
\le
\Big(\max_z N_{\mathbf t}(z)\Big)\sum_z N_{\mathbf t}(z).
\end{equation}
Using the previous inequality and the identity $\sum_z N_{\mathbf t}(z)=b_{\mathbf t}$, we obtain
\begin{equation}
\sum_z N_{\mathbf t}(z)^2
\le
\beta\,\frac{b_{\mathbf t}}{M_{\mathbf t}}\cdot b_{\mathbf t}
=
\beta\,\frac{b_{\mathbf t}^2}{M_{\mathbf t}}.
\end{equation}

On the other hand,
\begin{equation}
\begin{aligned}
\sum_z N_{\mathbf t}(z)\big(N_{\mathbf t}(z)-1\big)
&=
\sum_z N_{\mathbf t}(z)^2-\sum_z N_{\mathbf t}(z)\\
&=
\sum_z N_{\mathbf t}(z)^2-b_{\mathbf t}.
\end{aligned}
\end{equation}
Therefore
\begin{equation}
\sum_z N_{\mathbf t}(z)\big(N_{\mathbf t}(z)-1\big)
\le
\beta\,\frac{b_{\mathbf t}^2}{M_{\mathbf t}}-b_{\mathbf t}.
\end{equation}

Now we use the collision lower bound. By definition of $\Coll_{\eta,m}^{Q}(B_{\mathbf t})$,
\begin{equation}
\resizebox{\columnwidth}{!}{$
\Coll_{\eta,m}^{Q}(B_{\mathbf t})
=
\frac{1}{b_{\mathbf t}(b_{\mathbf t}-1)}
\sum_{\substack{u,v\in B_{\mathbf t}\\u\neq v}}
\mathbf 1\Big\{Q_\eta(\mathcal S_m(u))=Q_\eta(\mathcal S_m(v))\Big\}.
$}
\end{equation}
Grouping the ordered pairs $(u,v)$ according to their common quantized code yields
\begin{equation}
\resizebox{\columnwidth}{!}{$
\sum_{\substack{u,v\in B_{\mathbf t}\\u\neq v}}
\mathbf 1\Big\{Q_\eta(\mathcal S_m(u))=Q_\eta(\mathcal S_m(v))\Big\}
=
\sum_z N_{\mathbf t}(z)\big(N_{\mathbf t}(z)-1\big).
$}
\end{equation}
Hence
\begin{equation}
\sum_z N_{\mathbf t}(z)\big(N_{\mathbf t}(z)-1\big)
=
b_{\mathbf t}(b_{\mathbf t}-1)\Coll_{\eta,m}^{Q}(B_{\mathbf t}).
\end{equation}
Using the lower bound on $\Coll_{\eta,m}^{Q}(B_{\mathbf t})$, we obtain
\begin{equation}
\sum_z N_{\mathbf t}(z)\big(N_{\mathbf t}(z)-1\big)
\ge
b_{\mathbf t}(b_{\mathbf t}-1)c_0\,\eta^{c_1 m}.
\end{equation}

Combining the upper and lower bounds on the same quantity gives
\begin{equation}
b_{\mathbf t}(b_{\mathbf t}-1)c_0\,\eta^{c_1 m}
\le
\beta\,\frac{b_{\mathbf t}^2}{M_{\mathbf t}}-b_{\mathbf t}.
\end{equation}
Add $b_{\mathbf t}$ to both sides:
\begin{equation}
b_{\mathbf t}+b_{\mathbf t}(b_{\mathbf t}-1)c_0\,\eta^{c_1 m}
\le
\beta\,\frac{b_{\mathbf t}^2}{M_{\mathbf t}}.
\end{equation}
Factor out $b_{\mathbf t}$ on the left-hand side:
\begin{equation}
b_{\mathbf t}\Big(1+(b_{\mathbf t}-1)c_0\,\eta^{c_1 m}\Big)
\le
\beta\,\frac{b_{\mathbf t}^2}{M_{\mathbf t}}.
\end{equation}
Since $b_{\mathbf t}\ge 1$, we can divide both sides by $b_{\mathbf t}$ and obtain
\begin{equation}
1+(b_{\mathbf t}-1)c_0\,\eta^{c_1 m}
\le
\beta\,\frac{b_{\mathbf t}}{M_{\mathbf t}}.
\end{equation}
Equivalently,
\begin{equation}
M_{\mathbf t}
\le
\beta\,
\frac{b_{\mathbf t}}{1+(b_{\mathbf t}-1)c_0\,\eta^{c_1 m}}.
\end{equation}

We now simplify this bound. Since $\Coll_{\eta,m}^{Q}(B_{\mathbf t})\le 1$ by definition, and since
\begin{equation}
\Coll_{\eta,m}^{Q}(B_{\mathbf t})\ge c_0\,\eta^{c_1 m},
\end{equation}
it follows that
\begin{equation}
c_0\,\eta^{c_1 m}\le 1.
\end{equation}
Therefore
\begin{equation}
1-c_0\,\eta^{c_1 m}\ge 0.
\end{equation}
Using this, we have
\begin{equation}
1+(b_{\mathbf t}-1)c_0\,\eta^{c_1 m}
=
b_{\mathbf t}c_0\,\eta^{c_1 m}+\big(1-c_0\,\eta^{c_1 m}\big)
\ge
b_{\mathbf t}c_0\,\eta^{c_1 m}.
\end{equation}
Substituting this lower bound into the denominator yields
\begin{equation}
M_{\mathbf t}
\le
\beta\,
\frac{b_{\mathbf t}}{b_{\mathbf t}c_0\,\eta^{c_1 m}}
=
\frac{\beta}{c_0\,\eta^{c_1 m}}.
\end{equation}

Thus, for every attained distance vector $\mathbf t$, we have shown that
\begin{equation}
M_{\mathbf t}\le 1
\quad\text{if } b_{\mathbf t}=1,
\end{equation}
and
\begin{equation}
M_{\mathbf t}\le \frac{\beta}{c_0\,\eta^{c_1 m}}
\quad\text{if } b_{\mathbf t}\ge 2.
\end{equation}

Now sum over all attained distance vectors. Let
\begin{equation}
\mathcal T_1\coloneqq \{\mathbf t\in\mathcal T(G,\mathcal A):\ b_{\mathbf t}=1\}
\end{equation}
and
\begin{equation}
\mathcal T_{\ge 2}\coloneqq \{\mathbf t\in\mathcal T(G,\mathcal A):\ b_{\mathbf t}\ge 2\}.
\end{equation}
Then
\begin{equation}
\mathcal T(G,\mathcal A)=\mathcal T_1\cup \mathcal T_{\ge 2},
\quad
\mathcal T_1\cap \mathcal T_{\ge 2}=\varnothing,
\end{equation}
and
\begin{equation}
|\mathcal T_1|+|\mathcal T_{\ge 2}|=D(G,\mathcal A).
\end{equation}

The total image size is the sum of the numbers of distinct quantized spectral codes over all attained distance buckets:
\begin{equation}
\big|\mathrm{Im}(F_{G,\mathcal A}^{(\eta)})\big|
=
\sum_{\mathbf t\in\mathcal T(G,\mathcal A)} M_{\mathbf t}.
\end{equation}
Hence
\begin{equation}
\begin{aligned}
\big|\mathrm{Im}(F_{G,\mathcal A}^{(\eta)})\big|
&=
\sum_{\mathbf t\in\mathcal T_1} M_{\mathbf t}
+
\sum_{\mathbf t\in\mathcal T_{\ge 2}} M_{\mathbf t} \\
&\le
\sum_{\mathbf t\in\mathcal T_1} 1
+
\sum_{\mathbf t\in\mathcal T_{\ge 2}}
\frac{\beta}{c_0\,\eta^{c_1 m}} \\
&=
|\mathcal T_1|
+
|\mathcal T_{\ge 2}|
\frac{\beta}{c_0\,\eta^{c_1 m}} \\
&\le
D(G,\mathcal A)\cdot \Big(1+\frac{\beta}{c_0\,\eta^{c_1 m}}\Big).
\end{aligned}
\end{equation}
This proves the first bound.

Finally, Lemma~\ref{lem:profile-count} implies
\begin{equation}
D(G,\mathcal A)\le (C\log n)^k,
\end{equation}
and therefore
\begin{equation}
\big|\mathrm{Im}(F_{G,\mathcal A}^{(\eta)})\big|
\le
(C\log n)^k\cdot \Big(1+\frac{\beta}{c_0\,\eta^{c_1 m}}\Big).
\end{equation}
This completes the proof.
\end{proof}

\subsection{Proof of Corollary~\ref{cor:image-control-refined}}
\label{app:proof-image-control-refined}

\begin{corollaryR}[Corollary~\ref{cor:image-control-refined} (restated)]
Assume $\mathrm{diam}(G)\le C_{\mathrm{diam}}\log n$ with probability $1-o(1)$, and suppose there exists an event $\mathcal E_{\mathrm{ref}}$ with $\mathbb P(\mathcal E_{\mathrm{ref}})=1-o(1)$ such that on $\mathcal E_{\mathrm{ref}}$, for every attained distance vector $\mathbf t$, one of the following holds:
\begin{itemize}
    \item $|B_{\mathbf t}(G,\mathcal A)|=1$; or
    \item $|B_{\mathbf t}(G,\mathcal A)|\ge 2$, and
    \begin{equation}
    \Coll_{\eta,m}^{Q}(B_{\mathbf t}(G,\mathcal A))
    \ge
    c_0\,\eta^{c_1 m},
    \end{equation}
    together with
    \begin{equation}
    \Bal_{\eta,m}(B_{\mathbf t}(G,\mathcal A))\le \beta,
    \end{equation}
\end{itemize}
for some constants $c_0,c_1>0$ and $\beta\ge 1$. Then with probability $1-o(1)$ over $(G,\mathcal A)$,
\begin{equation}
\big|\mathrm{Im}(F_{G,\mathcal A}^{(\eta)})\big|
\le
(C\log n)^k\cdot \Big(1+\frac{\beta}{c_0\,\eta^{c_1 m}}\Big).
\end{equation}
In particular, since $\eta\in(0,1)$,
\begin{equation}
\big|\mathrm{Im}(F_{G,\mathcal A}^{(\eta)})\big|
\le
(C\log n)^k\cdot \Big(1+\frac{\beta}{c_0}\Big)\eta^{-c_1 m}.
\end{equation}
\end{corollaryR}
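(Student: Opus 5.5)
The plan is to derive this corollary directly from Proposition~\ref{prop:avg-collision-refinement} by a probabilistic intersection argument, followed by an elementary simplification of the bound.

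\emph{Step 1: the good event.} Define
\[
\mathcal E' \coloneqq \{\mathrm{diam}(G)\le C_{\mathrm{diam}}\log n\}\cap \mathcal E_{\mathrm{ref}}.
\]
Both events have probability $1-o(1)$, so a union bound on their complements gives $\mathbb P(\mathcal E')\ge 1-o(1)-o(1)=1-o(1)$. The event $\mathcal E_{\mathrm{ref}}$ depends on $(G,\mathcal A)$, while the diameter event depends only on $G$. Both are therefore events in the joint probability space of $(G,\mathcal A)$, and the union bound applies there without further comment.

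\emph{Step 2: pointwise application.} Fix any realization $(G,\mathcal A)\in\mathcal E'$. On this realization the diameter bound holds, and every attained distance vector $\mathbf t$ satisfies either the singleton condition or the collision-and-balance conditions with the same constants $c_0,c_1,\beta$. These are exactly the hypotheses of Proposition~\ref{prop:avg-collision-refinement}. Its second conclusion, which itself uses Lemma~\ref{lem:profile-count} to bound $D(G,\mathcal A)\le (C\log n)^k$, yields the first display deterministically on $\mathcal E'$. Since $\mathbb P(\mathcal E')=1-o(1)$, the bound holds with probability $1-o(1)$.

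\emph{Step 3: simplification.} For the \emph{in particular} statement, note that $\eta\in(0,1)$ and $c_1m>0$ give $\eta^{-c_1 m}\ge 1$, hence $1\le \eta^{-c_1m}$. Therefore
\[
1+\frac{\beta}{c_0\eta^{c_1m}}\le \eta^{-c_1m}+\frac{\beta}{c_0}\eta^{-c_1m}=\Big(1+\frac{\beta}{c_0}\Big)\eta^{-c_1m}.
\]
Multiplying by $(C\log n)^k$ completes the argument.

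There is no real obstacle. The only point needing care is that the constant $C$ should be the same as in Lemma~\ref{lem:profile-count}, depending only on $C_{\mathrm{diam}}$, and that the constants $c_0,c_1,\beta$ must not depend on the realization. Both are ensured by the way the hypothesis is stated.
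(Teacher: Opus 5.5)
Your proposal is correct and follows the paper's proof essentially step for step: you intersect the diameter event with $\mathcal E_{\mathrm{ref}}$, apply Proposition~\ref{prop:avg-collision-refinement} pointwise on that event, and use $\eta^{-c_1 m}\ge 1$ to absorb the additive $1$. The only cosmetic difference is that the paper invokes the proposition's first bound and then applies Lemma~\ref{lem:profile-count} separately, whereas you use the proposition's second bound, which already incorporates that lemma.
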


\begin{proof}
Consider the event
\begin{equation}
\mathcal E
\coloneqq
\{\mathrm{diam}(G)\le C_{\mathrm{diam}}\log n\}\cap \mathcal E_{\mathrm{ref}}.
\end{equation}
By assumption,
\begin{equation}
\mathbb P(\mathcal E)=1-o(1).
\end{equation}

Fix $(G,\mathcal A)$ on the event $\mathcal E$. Then the diameter bound
\begin{equation}
\mathrm{diam}(G)\le C_{\mathrm{diam}}\log n
\end{equation}
holds, and for every attained distance vector $\mathbf t$, the bucketwise assumptions appearing in Proposition~\ref{prop:avg-collision-refinement} are satisfied. Therefore Proposition~\ref{prop:avg-collision-refinement} applies and yields
\begin{equation}
\big|\mathrm{Im}(F_{G,\mathcal A}^{(\eta)})\big|
\le
D(G,\mathcal A)\cdot \Big(1+\frac{\beta}{c_0\,\eta^{c_1 m}}\Big).
\end{equation}
Now apply Lemma~\ref{lem:profile-count}, which gives
\begin{equation}
D(G,\mathcal A)\le (C\log n)^k.
\end{equation}
Hence
\begin{equation}
\big|\mathrm{Im}(F_{G,\mathcal A}^{(\eta)})\big|
\le
(C\log n)^k\cdot \Big(1+\frac{\beta}{c_0\,\eta^{c_1 m}}\Big).
\end{equation}
This proves the first conclusion.

For the second conclusion, note that $\eta\in(0,1)$ implies
\begin{equation}
\eta^{-c_1 m}\ge 1.
\end{equation}
Therefore
\begin{equation}
1\le \eta^{-c_1 m}
\end{equation}
and also
\begin{equation}
\frac{\beta}{c_0\,\eta^{c_1 m}}
=
\frac{\beta}{c_0}\eta^{-c_1 m}.
\end{equation}
Adding these two bounds gives
\begin{equation}
1+\frac{\beta}{c_0\,\eta^{c_1 m}}
\le
\Big(1+\frac{\beta}{c_0}\Big)\eta^{-c_1 m}.
\end{equation}
Substituting into the previous estimate yields
\begin{equation}
\big|\mathrm{Im}(F_{G,\mathcal A}^{(\eta)})\big|
\le
(C\log n)^k\cdot \Big(1+\frac{\beta}{c_0}\Big)\eta^{-c_1 m}.
\end{equation}
Since $\mathbb P(\mathcal E)=1-o(1)$, both bounds hold with probability $1-o(1)$ over $(G,\mathcal A)$. This completes the proof.
\end{proof}

\subsection{Proof of Corollary~\ref{cor:threshold-refined}}
\label{app:proof-threshold-refined}

\begin{corollaryR}[Corollary~\ref{cor:threshold-refined} (restated)]
In the setting of Theorem~\ref{thm:threshold}, further assume that there exist constants $c_0,c_1>0$ and $\beta\ge 1$, and an event $\mathcal E_{\mathrm{ref}}$ with $\mathbb P(\mathcal E_{\mathrm{ref}})=1-o(1)$ such that on $\mathcal E_{\mathrm{ref}}$, for every attained distance vector $\mathbf t$, one of the following holds:
\begin{itemize}
    \item $|B_{\mathbf t}(G,\mathcal A)|=1$; or
    \item $|B_{\mathbf t}(G,\mathcal A)|\ge 2$, and
    \begin{equation}
    \Coll_{\eta,m}^{Q}(B_{\mathbf t}(G,\mathcal A))
    \ge
    c_0\,\eta^{c_1 m},
    \end{equation}
    together with
    \begin{equation}
    \Bal_{\eta,m}(B_{\mathbf t}(G,\mathcal A))\le \beta.
    \end{equation}
\end{itemize}
If there exists $\varepsilon_0\in(0,1)$ such that
\begin{equation}
k\log\log n + c_1 m \log(1/\eta)\le (1-\varepsilon_0)\log n,
\end{equation}
then there exists $\varepsilon_1\in(0,\varepsilon_0)$ such that with probability $1-o(1)$ over $(G,\mathcal A)$,
\begin{equation}
\inf_{s_{G,\mathcal A}}\ \mathbb P\Big(s_{G,\mathcal A}(F_{G,\mathcal A}^{(\eta)}(v_\star))\neq v_\star\ \Big|\ G,\mathcal A\Big)
\ge
1-n^{-\varepsilon_1}.
\end{equation}
\end{corollaryR}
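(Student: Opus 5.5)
The proof will mirror that of Theorem~\ref{thm:threshold}, with Corollary~\ref{cor:image-control-refined} in place of Lemma~\ref{lem:image-control}. Work on the event
\[
\mathcal E\coloneqq\{\mathrm{diam}(G)\le C_{\mathrm{diam}}\log n\}\cap\mathcal E_{\mathrm{ref}}.
\]
Each of the two events has probability $1-o(1)$, so a union bound gives $\mathbb P(\mathcal E)=1-o(1)$. On $\mathcal E$, Corollary~\ref{cor:image-control-refined} (equivalently, Proposition~\ref{prop:avg-collision-refinement} combined with Lemma~\ref{lem:profile-count}) yields
\[
\big|\mathrm{Im}(F_{G,\mathcal A}^{(\eta)})\big|\le (C\log n)^k\Big(1+\tfrac{\beta}{c_0}\Big)\eta^{-c_1 m}.
\]
Lemma~\ref{lem:preimage-identity} then gives the exact identity
\[
\inf_s\mathrm{Err}(s)=1-\big|\mathrm{Im}(F_{G,\mathcal A}^{(\eta)})\big|/n.
\]
It therefore suffices to show that the right-hand side of the image bound is at most $n^{1-\varepsilon_1}$ for large $n$.

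Taking logarithms, we must control
\[
k\log C+k\log\log n+\log\Big(1+\tfrac{\beta}{c_0}\Big)+c_1m\log(1/\eta).
\]
The hypothesis bounds $k\log\log n+c_1m\log(1/\eta)$ by $(1-\varepsilon_0)\log n$. Both summands are nonnegative, since $\eta<1$. Hence, exactly as in the proof of Theorem~\ref{thm:threshold}, $k\le(1-\varepsilon_0)\log n/\log\log n$, and so $k\log C=o(\log n)$. The term $\log(1+\beta/c_0)$ is a constant because $c_0$ and $\beta$ are fixed, so it is also $o(\log n)$. Choosing, say, $\varepsilon_1=\varepsilon_0/2$, the total is at most $(1-\varepsilon_1)\log n$ for all sufficiently large $n$. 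Exponentiating gives $|\mathrm{Im}(F)|\le n^{1-\varepsilon_1}$, and therefore $\inf_s\mathrm{Err}(s)\ge1-n^{-\varepsilon_1}$ on $\mathcal E$.

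The only point requiring care, and the main obstacle, is the uniformity of the $o(\log n)$ absorption. This works because the constants $C$, $c_0$, $\beta$ and $c_1$ do not depend on $n$, while $k$ and $m$ may vary with $n$ only subject to the hypothesis. In the degenerate case where $C<1$, the term $k\log C$ is nonpositive and can simply be dropped.
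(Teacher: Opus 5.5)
Your proposal is correct and follows essentially the same route as the paper. You intersect the diameter event with $\mathcal E_{\mathrm{ref}}$, invoke Corollary~\ref{cor:image-control-refined} together with the exact identity of Lemma~\ref{lem:preimage-identity}, and absorb $k\log C+\log(1+\beta/c_0)$ into $o(\log n)$ via $k\le(1-\varepsilon_0)\log n/\log\log n$. Your explicit choice $\varepsilon_1=\varepsilon_0/2$ and your remark that $c_1 m\log(1/\eta)\ge 0$ is what isolates the bound on $k$ are small clarifications that the paper leaves implicit.
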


\begin{proof}
Consider the event
\begin{equation}
\mathcal E
\coloneqq
\{\mathrm{diam}(G)\le C_{\mathrm{diam}}\log n\}\cap \mathcal E_{\mathrm{ref}}.
\end{equation}
By the assumptions of Theorem~\ref{thm:threshold} and of the present corollary,
\begin{equation}
\mathbb P(\mathcal E)=1-o(1).
\end{equation}

Fix $(G,\mathcal A)$ on the event $\mathcal E$. Then Corollary~\ref{cor:image-control-refined} applies and yields
\begin{equation}
\big|\mathrm{Im}(F_{G,\mathcal A}^{(\eta)})\big|
\le
(C\log n)^k\cdot \Big(1+\frac{\beta}{c_0}\Big)\eta^{-c_1 m}.
\end{equation}
Substituting this estimate into Lemma~\ref{lem:preimage-identity}, we obtain
\begin{equation}
\begin{aligned}
\inf_{s_{G,\mathcal A}}\ \mathrm{Err}(s_{G,\mathcal A})
&=
1-\frac{\big|\mathrm{Im}(F_{G,\mathcal A}^{(\eta)})\big|}{n}\\
&\ge
1-
\frac{(C\log n)^k\cdot \big(1+\frac{\beta}{c_0}\big)\eta^{-c_1 m}}{n}.
\end{aligned}
\end{equation}

We now estimate the numerator. Taking logarithms gives
\begin{equation}
\begin{aligned}
&\log\left((C\log n)^k\cdot \Big(1+\frac{\beta}{c_0}\Big)\eta^{-c_1 m}\right)\\
=&
k\log(C\log n)+\log\Big(1+\frac{\beta}{c_0}\Big)+c_1 m\log(1/\eta).
\end{aligned}
\end{equation}
Using
\begin{equation}
\log(C\log n)=\log\log n+\log C,
\end{equation}
we can rewrite this as
\begin{equation}
k\log\log n + c_1 m\log(1/\eta) + k\log C + \log\Big(1+\frac{\beta}{c_0}\Big).
\end{equation}

By assumption,
\begin{equation}
k\log\log n + c_1 m\log(1/\eta)\le (1-\varepsilon_0)\log n.
\end{equation}
In particular,
\begin{equation}
k\log\log n\le (1-\varepsilon_0)\log n,
\end{equation}
and therefore
\begin{equation}
k\le \frac{(1-\varepsilon_0)\log n}{\log\log n}.
\end{equation}
Multiplying by the constant $\log C$ yields
\begin{equation}
k\log C
\le
\frac{(1-\varepsilon_0)\log C}{\log\log n}\log n
=
o(\log n).
\end{equation}
Also,
\begin{equation}
\log\Big(1+\frac{\beta}{c_0}\Big)=O(1)=o(\log n).
\end{equation}
Hence the two extra terms
\begin{equation}
k\log C + \log\Big(1+\frac{\beta}{c_0}\Big)
\end{equation}
are negligible compared with $\log n$. Therefore there exists $\varepsilon_1\in(0,\varepsilon_0)$ such that for all sufficiently large $n$,
\begin{equation}
k\log(C\log n)+\log\Big(1+\frac{\beta}{c_0}\Big)+c_1 m\log(1/\eta)
\le
(1-\varepsilon_1)\log n.
\end{equation}
Equivalently,
\begin{equation}
(C\log n)^k\cdot \Big(1+\frac{\beta}{c_0}\Big)\eta^{-c_1 m}
\le
n^{1-\varepsilon_1}
\end{equation}
for all sufficiently large $n$.

Substituting this into the previous lower bound on the error gives
\begin{equation}
\inf_{s_{G,\mathcal A}}\ \mathrm{Err}(s_{G,\mathcal A})
\ge
1-\frac{n^{1-\varepsilon_1}}{n}
=
1-n^{-\varepsilon_1}.
\end{equation}
By the definition of $\mathrm{Err}(s_{G,\mathcal A})$, this is exactly
\begin{equation}
\inf_{s_{G,\mathcal A}}\ \mathbb P\Big(s_{G,\mathcal A}(F_{G,\mathcal A}^{(\eta)}(v_\star))\neq v_\star\ \Big|\ G,\mathcal A\Big)
\ge
1-n^{-\varepsilon_1}.
\end{equation}
Since $\mathbb P(\mathcal E)=1-o(1)$, the conclusion holds with probability $1-o(1)$ over $(G,\mathcal A)$. This completes the proof.
\end{proof}

\section{Detailed Experimental Protocol}
\label{app:exp-details}

\subsection{Datasets}

We evaluate the proposed encoding on both synthetic and real-world graphs.

For the synthetic study, we use random 3-regular graphs to examine the phase-transition-like behavior predicted qualitatively by the converse theory. The case $m=0$ is treated as a distance-only baseline outside the formal spectral statement.
The parameter grid is
\begin{equation*}
\begin{aligned}
n &\in \{500,1000,2000,4000\},\\
k &\in \{1,2,3,4,6,8\},\\
m &\in \{0,1,2,5\},\\
\eta &\in \{0.9,0.7,0.5,0.3,0.1\},
\end{aligned}
\end{equation*}
with 20 independent graph trials for each configuration.

For the real-world study, we consider two DDI graph constructions. 
The DrugBank graph is built from positive interaction pairs in the processed DrugBank train/test CSV files and then restricted to its largest connected component, resulting in
\begin{equation*}
\begin{aligned}
N_{\mathrm{DB}} = 1684,
\quad
E_{\mathrm{DB}} = 189774.
\end{aligned}
\end{equation*}
The Decagon-derived graph is constructed in the same way from the processed Decagon splits, resulting in
\begin{equation*}
\begin{aligned}
N_{\mathrm{DEC}} = 40,
\quad
E_{\mathrm{DEC}} = 466.
\end{aligned}
\end{equation*}
For the real-graph experiments, we evaluate
\begin{equation*}
\begin{aligned}
k &\in \{1,2,3,4,6,8\},\\
m &\in \{0,2,5,10\},\\
\eta &\in \{0.9,0.5,0.25,0.1\},
\end{aligned}
\end{equation*}
and all reported quantities are averaged over 30 independent random anchor samplings.

\subsection{Baselines}

Since our goal is to evaluate the identifiability of the encoding itself rather than a learned predictor, the baselines are defined at the feature level:
\begin{equation}
\begin{aligned}
\text{NoPE}: \quad &F(v)=0,\\
\text{Distance}: \quad &F(v)=d_A(v),\\
\text{Spectral}: \quad &F(v)=Q_{\eta}(S_m(v)),\\
\text{Full}: \quad &F(v)=\bigl(d_A(v),Q_{\eta}(S_m(v))\bigr).
\end{aligned}
\end{equation}

Under the full feature, we also compare three anchor selection strategies: random sampling without replacement, degree-based selection, and a farthest-point greedy strategy.

\subsection{Encoding Configuration}

The experiments do not involve end-to-end learning. 
Instead, the object under evaluation is the deterministic observation map
\begin{equation}\label{eq43}
\begin{aligned}
F(v)=\bigl(d_A(v),Q_{\eta}(S_m(v))\bigr).
\end{aligned}
\end{equation}

For each anchor set $A=\{a_1,\ldots,a_k\}$, the distance component is
\begin{equation}
\begin{aligned}
d_A(v)
=
\bigl(
\mathrm{SPD}(v,a_1),\ldots,\mathrm{SPD}(v,a_k)
\bigr).
\end{aligned}
\end{equation}
The spectral component is constructed from the normalized Laplacian
\begin{equation}
\begin{aligned}
\mathcal{L}
=
I-D^{-1/2}AD^{-1/2},
\end{aligned}
\end{equation}
using the first $m$ non-trivial eigenvectors:
\begin{equation}
\begin{aligned}
S_m(v)
=
\bigl(
\Phi_m(v,1)^2,\ldots,\Phi_m(v,m)^2
\bigr).
\end{aligned}
\end{equation}

In all experiments, we use relative quantization. 
Given the spectral matrix $S_m$, we define
\begin{equation}\label{eq:rel-quant-step}
\begin{aligned}
\Delta_{\eta}
=
\eta \cdot \max_{i,j}|(S_m)_{ij}|,
\end{aligned}
\end{equation}
and quantize each coordinate by
\begin{equation}\label{eq:rel-quant-rule}
\begin{aligned}
\bigl(Q_{\eta}(x)\bigr)_j
=
\Delta_{\eta}
\cdot
\mathrm{round}
\left(
\frac{x_j}{\Delta_{\eta}}
\right),
\quad j\in[m].
\end{aligned}
\end{equation}
Thus, smaller $\eta$ corresponds to finer relative quantization. 
Unless otherwise stated, all main-text experiments use the relative quantization rule in \eqref{eq:rel-quant-step}-\eqref{eq:rel-quant-rule}. Absolute quantization is discussed separately in this appendix as a calibration and robustness check.

For the synthetic phase-transition study, results are organized by
\begin{equation}
\begin{aligned}
\rho_{\mathrm{eng}}
=
\frac{
k\log\log n + m\log\!\bigl(C_{\mathrm{ent}}/\eta\bigr)
}{
\log n
},
\quad
C_{\mathrm{ent}}=2.
\end{aligned}
\end{equation}

\subsection{Evaluation Measures}

For a fixed graph and parameter configuration, we measure the image size of the observation map and define
\begin{equation}\label{eq50}
\begin{aligned}
\mathrm{succ}(F)
&=
\frac{|\mathrm{Im}(F)|}{n},
\\
\mathrm{err}(F)
&=
1-\mathrm{succ}(F).
\end{aligned}
\end{equation}
A smaller error means that more vertices receive unique structural codes.

For the synthetic experiments, we also define the empirical critical number of anchors
\begin{equation}
\begin{aligned}
k_{\mathrm{emp}}(n,m,\eta)
=
\min
\left\{
k:
\mathbb{E}[\mathrm{err}(F)]\le 0.1
\right\}.
\end{aligned}
\end{equation}

For the synthetic experiments, expectations are approximated by averaging over 20 independently sampled graph instances for each configuration.
For the real-graph experiments, the graph is fixed and repeated trials correspond to 30 independent random anchor samplings.

\subsection{Ablation Protocol}

To understand the role of each component, we conduct ablations on random $3$-regular graphs with
\begin{equation*}
\begin{aligned}
n &\in \{500,1000,2000\},\\
k &\in \{1,2,3,4,6,8\},\\
m &\in \{0,2,5,10\},\\
\eta &\in \{0.9,0.5,0.25,0.1\},
\end{aligned}
\end{equation*}
using 20 independent graph trials.

The ablation study includes three parts: comparison of the four feature variants, sensitivity analysis over $(k,m,\eta)$ under the full feature, and comparison of the three anchor selection strategies.

For details on datasets, baselines, and hyperparameter configurations, please refer to \url{https://github.com/yzz980314/Graph_Node_Identifiability_via_Observation_Maps}.

\section{Supplementary Analysis and Detailed Results}
\label{sec:supplementary_analysis}

\subsection{Complete Diagnostic Tables for the Observation Map}
\label{app:full_threshold_tables}

For each configuration \((n,m,\eta)\), we define \(k_{\mathrm{emp}}\) as the smallest anchor number such that the mean error rate is at most \(0.1\). 
We also record the corresponding theory-guided ratio \(\rho_{\mathrm{emp}}\) and the spectral code complexity \(|Q_\eta(\widetilde S_m(V))|\), where
\begin{equation*}
    \begin{aligned}
       F(v)&=\bigl(d_A(v),\,Q_\eta(\widetilde S_m(v))\bigr),\\
    \widetilde S_m(v)&=n\bigl(\phi_1(v)^2,\ldots,\phi_m(v)^2\bigr). 
    \end{aligned}
\end{equation*}

The main threshold patterns are simple:  
(i) the distance-only baseline is stable, with \(k_{\mathrm{emp}}=6\) for all tested \(n\) and \(\eta\);  
(ii) increasing \(m\) and refining \(\eta\) generally reduce \(k_{\mathrm{emp}}\);  
(iii) the reduction in \(k_{\mathrm{emp}}\) is accompanied by a sharp growth of \(|Q_\eta(\widetilde S_m(V))|\).

\begin{table}[t]
\centering
\small
\caption{Full empirical threshold table \(k_{\mathrm{emp}}\).}
\label{tab:appendix_full_kemp}
\resizebox{\linewidth}{!}{%
\begin{tabular}{ccccccc}
\toprule
\toprule
$n$ & $m$ & $\eta=0.9$ & $\eta=0.7$ & $\eta=0.5$ & $\eta=0.3$ & $\eta=0.1$ \\
\midrule
\multirow{4}{*}{500}
 & 0 & 6 & 6 & 6 & 6 & 6 \\
 & 1 & 4 & 4 & 4 & 4 & 3 \\
 & 2 & 4 & 4 & 3 & 3 & 2 \\
 & 5 & 3 & 2 & 1 & 1 & 1 \\
\specialrule{0.08em}{0.15em}{0.15em}

\multirow{4}{*}{1000}
 & 0 & 6 & 6 & 6 & 6 & 6 \\
 & 1 & 6 & 6 & 6 & 4 & 4 \\
 & 2 & 4 & 4 & 4 & 3 & 2 \\
 & 5 & 3 & 2 & 2 & 1 & 1 \\
\specialrule{0.08em}{0.15em}{0.15em}

\multirow{4}{*}{2000}
 & 0 & 6 & 6 & 6 & 6 & 6 \\
 & 1 & 6 & 6 & 6 & 6 & 4 \\
 & 2 & 6 & 4 & 4 & 4 & 3 \\
 & 5 & 3 & 3 & 2 & 1 & 1 \\
\specialrule{0.08em}{0.15em}{0.15em}

\multirow{4}{*}{4000}
 & 0 & 6 & 6 & 6 & 6 & 6 \\
 & 1 & 6 & 6 & 6 & 6 & 6 \\
 & 2 & 6 & 6 & 6 & 4 & 3 \\
 & 5 & 4 & 3 & 2 & 1 & 1 \\
\bottomrule
\bottomrule
\end{tabular}%
}
\end{table}

\begin{table}[t]
\centering
\small
\caption{Theory-guided threshold ratio \(\rho_{\mathrm{emp}}\).}
\label{tab:appendix_full_rho}
\resizebox{\linewidth}{!}{%
\begin{tabular}{ccrrrrr}
\toprule
\toprule
$n$ & $m$ & $\eta=0.9$ & $\eta=0.7$ & $\eta=0.5$ & $\eta=0.3$ & $\eta=0.1$ \\
\midrule
\multirow{4}{*}{500}
 & 0 & 1.764 & 1.764 & 1.764 & 1.764 & 1.764 \\
 & 1 & 1.304 & 1.345 & 1.399 & 1.481 & 1.364 \\
 & 2 & 1.433 & 1.514 & 1.328 & 1.492 & 1.552 \\
 & 5 & 1.524 & 1.433 & 1.409 & 1.820 & 2.704 \\
\specialrule{0.08em}{0.15em}{0.15em}

\multirow{4}{*}{1000}
 & 0 & 1.679 & 1.679 & 1.679 & 1.679 & 1.679 \\
 & 1 & 1.794 & 1.831 & 1.879 & 1.394 & 1.553 \\
 & 2 & 1.350 & 1.423 & 1.520 & 1.389 & 1.427 \\
 & 5 & 1.417 & 1.319 & 1.563 & 1.653 & 2.448 \\
\specialrule{0.08em}{0.15em}{0.15em}

\multirow{4}{*}{2000}
 & 0 & 1.601 & 1.601 & 1.601 & 1.601 & 1.601 \\
 & 1 & 1.706 & 1.739 & 1.783 & 1.851 & 1.462 \\
 & 2 & 1.811 & 1.344 & 1.432 & 1.567 & 1.589 \\
 & 5 & 1.326 & 1.491 & 1.446 & 1.515 & 2.237 \\
\specialrule{0.08em}{0.15em}{0.15em}

\multirow{4}{*}{4000}
 & 0 & 1.530 & 1.530 & 1.530 & 1.530 & 1.530 \\
 & 1 & 1.627 & 1.657 & 1.698 & 1.759 & 1.892 \\
 & 2 & 1.723 & 1.784 & 1.865 & 1.478 & 1.488 \\
 & 5 & 1.502 & 1.398 & 1.346 & 1.399 & 2.061 \\
\bottomrule
\bottomrule
\end{tabular}%
}
\end{table}

\begin{table}[t]
\centering
\small
\caption{Spectral code complexity \(|Q_\eta(\widetilde S_m(V))|\) at threshold.}
\label{tab:appendix_full_spectral}
\resizebox{\linewidth}{!}{%
\begin{tabular}{ccrrrrr}
\toprule
\toprule
$n$ & $m$ & $\eta=0.9$ & $\eta=0.7$ & $\eta=0.5$ & $\eta=0.3$ & $\eta=0.1$ \\
\midrule
\multirow{4}{*}{500}
 & 0 & 1.0 & 1.0 & 1.0 & 1.0 & 1.0 \\
 & 1 & 9.1 & 11.7 & 15.7 & 23.8 & 51.8 \\
 & 2 & 31.2 & 46.6 & 68.7 & 117.7 & 285.6 \\
 & 5 & 213.2 & 290.6 & 395.6 & 469.8 & 499.5 \\
\specialrule{0.08em}{0.15em}{0.15em}

\multirow{4}{*}{1000}
 & 0 & 1.0 & 1.0 & 1.0 & 1.0 & 1.0 \\
 & 1 & 10.5 & 14.4 & 17.5 & 27.1 & 66.0 \\
 & 2 & 46.2 & 64.5 & 97.2 & 169.3 & 444.2 \\
 & 5 & 345.2 & 491.6 & 693.9 & 900.9 & 997.8 \\
\specialrule{0.08em}{0.15em}{0.15em}

\multirow{4}{*}{2000}
 & 0 & 1.0 & 1.0 & 1.0 & 1.0 & 1.0 \\
 & 1 & 12.4 & 16.0 & 22.0 & 32.5 & 76.3 \\
 & 2 & 59.6 & 78.0 & 128.6 & 240.5 & 692.8 \\
 & 5 & 561.4 & 820.3 & 1196.5 & 1677.7 & 1990.1 \\
\specialrule{0.08em}{0.15em}{0.15em}

\multirow{4}{*}{4000}
 & 0 & 1.0 & 1.0 & 1.0 & 1.0 & 1.0 \\
 & 1 & 14.4 & 21.1 & 25.1 & 37.1 & 87.7 \\
 & 2 & 74.2 & 109.8 & 163.6 & 315.2 & 1033.6 \\
 & 5 & 898.6 & 1345.2 & 2091.1 & 3123.6 & 3962.0 \\
\bottomrule
\bottomrule
\end{tabular}%
}
\end{table}

\section{Additional ablation details}
\label{app:ablation_details}

\subsection{Quantization calibration.}
The main text uses relative quantization throughout. In this appendix, we additionally compare relative and absolute quantization as a calibration and robustness check. Relative quantization remains nearly injective throughout the tested range, with mean error at most $6.7\times 10^{-5}$ and spectral-code ratio at least $0.9933$. For absolute quantization, however, performance depends strongly on the calibration of the step size. As $\eta$ decreases from $5\times 10^{-3}$ to $2\times 10^{-4}$, the mean error drops monotonically from $0.1686$ to $10^{-4}$, while the occupied spectral-code ratio increases from $0.0593$ to $0.9473$. For $\eta \le 10^{-4}$, absolute quantization also becomes essentially injective. This confirms that the qualitative identifiability trend does not rely on the relative-quantization implementation alone, but persists under a properly calibrated absolute quantization scheme.

\subsection{Bucketwise refinement diagnostics.}
Figure~\ref{fig:app_bucketwise} reports bucketwise refinement statistics across configurations. For an intermediate regime such as $(n,k,m,\eta)=(2000,4,5,0.25)$, the error is still $0.1521$, the singleton-node fraction is only $0.3803$, and the mean within-bucket collision level remains $0.2603$. By contrast, in a near-injective regime such as $(2000,6,10,0.1)$, the error drops to $10^{-4}$, the singleton-node fraction rises to $0.9422$, and the mean within-bucket collision level becomes negligible ($0.0039$). Thus, the bucketwise statistics are most informative in the intermediate regime, while they become largely vacuous once the distance partition is already almost fully separating.

\subsection{Additional robustness detail.}
For completeness, under repeated random-anchor resampling, the mean within-graph error range is $0.0848$ for $n=1000$ and $0.1189$ for $n=2000$, with maximum observed ranges $0.1870$ and $0.2120$, respectively. These fluctuations are modest relative to the graph-to-graph variation and do not alter the overall trends.

\begin{table}[t]
\centering
\scriptsize
\setlength{\tabcolsep}{4pt}
\caption{Calibration of the absolute quantization step. Relative quantization remains nearly injective throughout the tested range; the table reports the detailed absolute-quantization trend.}
\label{tab:app_ablation_quantization}
\begin{tabular}{cccc}
\toprule
$\eta$ & Mean error & Mean preimage & Spectral-code ratio \\
\midrule
$5\times 10^{-3}$ & 0.1686   & 1.2464   & 0.0593 \\
$2\times 10^{-3}$ & 0.0528   & 1.0604   & 0.2489 \\
$10^{-3}$         & 0.0142   & 1.0149   & 0.4908 \\
$5\times 10^{-4}$ & 0.0018   & 1.0018   & 0.7354 \\
$2\times 10^{-4}$ & 0.0001   & 1.0001   & 0.9473 \\
$10^{-4}$         & 0.000025 & 1.000025 & 0.9933 \\
$5\times 10^{-5}$ & 0.0000   & 1.0000   & 0.9994 \\
$2\times 10^{-5}$ & 0.0000   & 1.0000   & 1.0000 \\
\bottomrule
\end{tabular}
\end{table}

\begin{figure}[!t]
    \centering
    \includegraphics[width=\columnwidth,height=0.28\textheight,keepaspectratio]{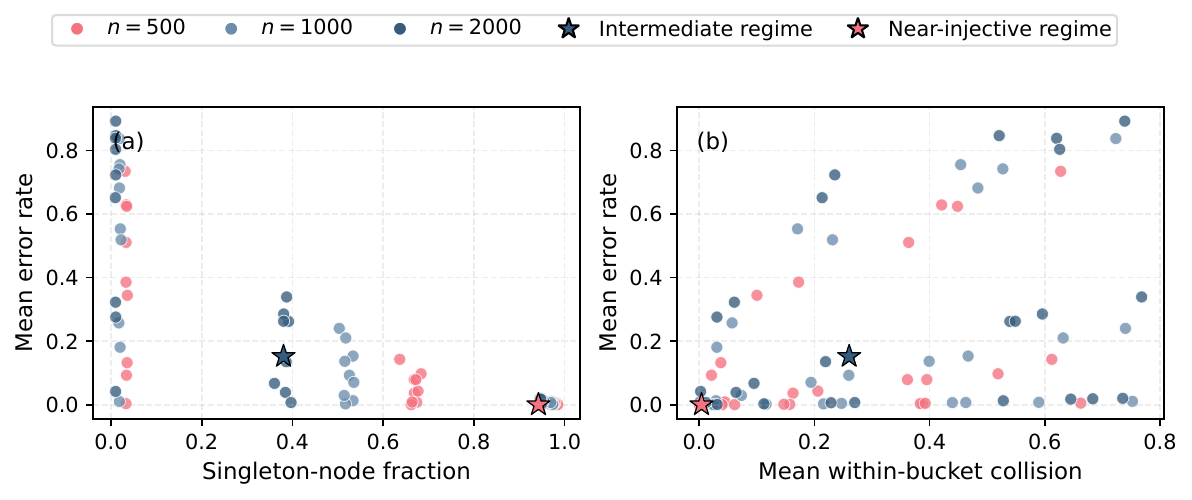}
    \caption{Bucketwise diagnostics on random $3$-regular graphs. Left: mean error vs.\ singleton-node fraction. Right: mean error vs.\ mean within-bucket collision. Highlighted points show an intermediate regime and a near-injective regime.}
    \label{fig:app_bucketwise}
\end{figure}

\end{document}